\documentclass{amsart}
\usepackage{mathrsfs}
\usepackage{graphicx}
\usepackage{amsmath}
\usepackage{amscd}
\usepackage{cite}
\usepackage{hyperref}
\usepackage{epsfig}
\usepackage{subfigure}
\usepackage{caption}
\usepackage{tikz}
  \usepackage{float}
  \usepackage{amssymb}
\usepackage{amsfonts}
\usepackage[all,cmtip]{xy}
\usepackage{appendix}
\usepackage{bm}

\newtheorem{Example}{Example}[section]

\newtheorem{Theorem}{Theorem}[section]
\newtheorem{Theorem/Definition}{Theorem/Definition}[section]
\newtheorem{Proposition}{Proposition}[section]
\newtheorem{Lemma}{Lemma}[section]

\newcommand{\pd}{\partial}

\newcommand{\bC}{{\mathbb C}}

\newcommand{\bP}{{\mathbb P}}

\newcommand{\bZ}{{\mathbb Z}}

\newcommand{\cF}{{\mathcal F}}

\newcommand{\cH}{{\mathcal H}}

\newcommand{\cM}{{\mathcal M}}

\newcommand{\half}{\frac{1}{2}}

\newcommand{\Mbar}{\overline{\cM}}

\newcommand{\wA}{{\widehat A}}
\newcommand{\wB}{{\widehat B}}

\newcommand{\be}{\begin{equation}}
\newcommand{\ee}{\end{equation}}
\newcommand{\bea}{\begin{eqnarray}}
\newcommand{\ben}{\begin{eqnarray*}}
\newcommand{\een}{\end{eqnarray*}}
\newcommand{\eea}{\end{eqnarray}}

\DeclareMathOperator{\Pf}{Pf}

\usepackage{color}

\definecolor{yellow}{rgb}{1,1,0}
\definecolor{orange}{rgb}{1,.7,0}
\definecolor{red}{rgb}{1,0,0}
\definecolor{green}{rgb}{0,1,1}
\definecolor{white}{rgb}{1,1,1}

\definecolor{A}{rgb}{.75,1,.75}

\theoremstyle{remark}
\newtheorem{Remark}{Remark}[section]

\begin{document}

\newtheorem{myDef}{Definition}
\newtheorem{thm}{Theorem}
\newtheorem{eqn}{equation}

\title[BKP hierarchy and connected bosonic $N$-point functions]
{BKP Hierarchy, Affine Coordinates, and a Formula for Connected Bosonic $N$-Point Functions}

\author{Zhiyuan Wang}
\address{School of Mathematical Sciences\\
Peking University\\Beijing, 100871, China}
\email{zhiyuan19@math.pku.edu.cn}

\author{Chenglang Yang}
\address{Beijing International Center for Mathematical Research\\
Peking University\\Beijing, 100871, China}
\email{yangcl@pku.edu.cn}

\begin{abstract}

We derive a formula for the connected $n$-point functions of a tau-function
of the BKP hierarchy in terms of its affine coordinates.
This is a BKP-analogue of a formula for KP tau-functions
proved by Zhou in \cite{zhou1}.
Moreover,
we prove a simple relation between the KP-affine coordinates of a tau-function $\tau(\bm t)$ of the KdV hierarchy
and the BKP-affine coordinates of $\tau(\bm t/2)$.
As applications,
we present a new algorithm to compute the free energies of
the Witten-Kontsevich tau-function and the Br\'ezin-Gross-Witten tau-function.

\end{abstract}

\maketitle

\tableofcontents

\section{Introduction}

Integrable systems have drawn a lot of attention in mirror symmetry
since the Witten Conjecture/Kontsevich Theorem \cite{wit,kon}.
The boson-fermion correspondence developed by Kyoto School
is one of the most interesting approaches to study integrable hierarchies
such as the KP (Kadomtsev-Petviashvili) hierarchy,
KdV (Korteweg-de Vries) hierarchy, BKP hierarchy, etc,
since it establishes a connection to representation theory and symmetric functions.
See \cite{djm} for an introduction to Kyoto School's approach to the KP hierarchy and Sato's theory.

In Kyoto School's approach,
a tau-function can be regarded as either a vector in the bosonic Fock space,
or a vector in the fermionic Fock space,
satisfying the bosonic or fermionic Hirota bilinear relations respectively.
Moreover,
Sato found that the space of all tau-functions of the KP hierarchy
is a semi-infinite dimensional Grassmannian \cite{sa}.
See also \cite{sw} for an analytic construction.
This Grassmannian is the orbit of the trivial tau-function $\tau= 1$ under
the action of an infinite-dimensional group $\widehat{GL(\infty)}$.
A traditional way to express a tau-function in the fermionic picture is
$\tau = e^g |0\rangle$,
where $|0\rangle$ is fermionic vacuum and $g\in \widehat{\mathfrak{gl}(\infty)}$
is of the form:
\begin{equation*}
g=\sum_{m,n\in \bZ} c_{n,m}\psi_{-m-\half} \psi_{-n-\half}^*,
\end{equation*}
such that $c_{n,m}=0$ for $|n+m|>>0$.
Here $\{\psi_r,\psi_s\}_{r,s\in \bZ+\half}$ are the fermions
(and we choose the notations such that $\psi_r,\psi_r^*$ are fermionic creators when $r<0$).

In literatures
there is an alternative way to express a tau-function in the fermionic space.
A tau-function with $\tau(0)=1$ can be uniquely represented as a Bogoliubov transform of the vacuum
which only involves fermionic creators (see e.g. \cite[\S 3]{zhou1}):
\be
\label{eq-intro-Bog}
\tau = \exp \big( \sum_{m,n\geq 0} a_{n,m} \psi_{-m-\half}\psi_{-n-\half}^* \big) |0\rangle.
\ee
If one applys the boson-fermion correspondence and takes KP-time variables to be
$T_n = \frac{p_n}{n}$ where $p_n$ is the Newton symmetric function of degree $n$,
then:
\begin{equation*}
\tau = \sum_\mu (-1)^{n_1+\cdots+n_k}\cdot \det(a_{n_i,m_j})_{1\leq i,j\leq k}
\cdot s_\mu,
\end{equation*}
where $\mu=(m_1,\cdots,m_k|n_1,\cdots,n_k)$ is a partition of integer
(written in the Frobenius notation),
and $s_\mu$ is the Schur function indexed by $\mu$.
See \cite{adkmv, dz, zhou3} for examples of representing tau-functions as
Bogoliubov transforms of the form \eqref{eq-intro-Bog}.
The coefficients $\{a_{n,m}\}$ are called the affine coordinates of $\tau$,
and they provide a canonical choice of coordinates on the big-cell of the Sato Grassmannian (see e.g. \cite{by}).

A natural question is,
how to compute the logarithm $\log\tau$ (called the free energy)
of a tau-function $\tau$
using its affine coordinates $\{a_{n,m}\}$.
This is crucial since the coefficients of some free energies
are important invariants in geometry.
For example,
the coefficients of the free energy associated to the Witten-Kontsevich tau-function \cite{wit, kon}
are the intersection numbers of $\psi$-classes on the moduli spaces $\Mbar_{g,n}$ of stable curves.
In \cite{zhou1},
Zhou has proved the following formula for the connected $n$-point functions associated
to a KP tau-function
(see \cite[Theorem 5.3]{zhou1}):
\be
\label{eq-intro-Zhou}
\begin{split}
&\sum_{i_1,\cdots,i_n \geq 1}
\frac{\pd^n \log\tau(\bm T)}{\pd T_{i_1}\cdots \pd T_{i_n}}\bigg|_{\bm T=0}
\cdot z_1^{-i_1-1} \cdots z_n^{-i_n-1}\\
=& (-1)^{n-1}\cdot \sum_{\sigma:\text{ $n$-cycles}}
\prod_{i=1}^n \wA(z_{\sigma(i)},z_{\sigma(i+1)})
-\frac{\delta_{n,2}}{(z_1-z_2)^2},
\end{split}
\ee
where $\bm T=(T_1,T_2,\cdots)$ are the KP-time variables
and $\wA(w,z)$ is the generating series of the affine coordinates $\{a_{n,m}\}_{n,m\geq 0}$.
Moreover,
he found a formula for the generating series of affine coordinates
of the Witten-Kontsevich tau-function (see \cite[\S 6.9]{zhou1}),
thus one can indeed carry out the calculations of the intersection numbers on $\Mbar_{g,n}$
using \eqref{eq-intro-Zhou}.
See also \cite{zhou2, zhou4, wa, wz} for some applications of this formula
to other well-known KP tau-functions.

The goal of the present work is
to find a BKP-version of \eqref{eq-intro-Zhou}.
BKP hierarchy is an integrable system introduced by Kyoto School \cite{djkm, jm},
which shares a lot of common properties with the KP hierarchy.
In particular,
one also has the fermionic description (in terms of the neutral fermions)
and semi-infinite dimensional Grassmannian (the isotropic Grassmannian)
description of a BKP tau-function.
See e.g. \cite{yo, lw, or, tu, va, kv} for more information about the BKP hierarchy
and BKP tau-functions,
and see \cite[\S 7]{hb} for an introduction to BKP affine coordinates.
We will give a brief review of these notions in \S \ref{sec-pre}.

Let $\tau=\tau(\bm t)$ be a BKP tau-function with $\tau(0)=1$,
where $\bm t = (t_1,t_3,t_5,\cdots)$.
then $\tau$ can be represented as
a Bogoliubov transform in the fermionic Fock space:
\begin{equation*}
\tau = \exp\big(\sum_{n,m\geq 0} a_{n,m} \phi_m \phi_n \big) |0\rangle,
\end{equation*}
where $\{\phi_i\}_{i\geq 0}$ are the neutral fermionic creators.
The coefficients $\{a_{n,m}\}_{n,m\geq 0}$ are the BKP-affine coordinates of $\tau$
(satisfying the condition $a_{n,m}=-a_{m,n}$).
In the bosonic Fock space,
$\tau$ is a summation of Schur Q-functions:
\begin{equation*}
\tau =
\sum_{\mu \in DP} (-1)^{\tilde l(\mu)/2} \cdot
\Pf (a_{\mu_i,\mu_j})_{1\leq i,j\leq \tilde l(\mu)} \cdot Q_\mu(\bm x).
\end{equation*}
Now denote by $A(w,z)$ and $\wA(w,z)$ the following generating series respectively:
\begin{equation*}
\begin{split}
&A(w,z) = \sum_{n,m>0} (-1)^{m+n+1} \cdot a_{n,m} w^{-n} z^{-m}
-\half \sum_{n>0}(-1)^n a_{n,0} (w^{-n}-z^{-n}),\\
&\wA(w,z) =  A(w,z)-\frac{1}{4} -\half\sum_{i=1}^\infty (-1)^{i} w^{-i} z^i.
\end{split}
\end{equation*}
They are actually the fermionic two-point functions
(see \S \ref{sec-npt-disconn}).
Our main result is the following formula for the connected bosonic $n$-point functions
(see \S \ref{sec-conn-npt}):
\begin{Theorem}
\label{thm-intro-formula}
Let $\tau$ be a BKP tau-function satisfying $\tau(0)=1$,
and let $A,\wA$ be the generating series of its affine coordinates defined as above.
Then:
\begin{equation*}
\sum_{i> 0: \text{ odd}}
\frac{\pd \log\tau(\bm t)}{\pd t_{i}} \bigg|_{\bm t=0}
\cdot z^{-i}
=A(-z,z),
\end{equation*}
and for $n\geq 2$,
\begin{equation*}
\begin{split}
&\sum_{i_1,\cdots,i_n> 0: \text{ odd}}
\frac{\pd^n \log\tau(\bm t)}{\pd t_{i_1}\cdots \pd t_{i_n}} \bigg|_{\bm t=0}
\cdot z_1^{-i_1}\cdots z_n^{-i_n}
=
-\delta_{n,2} \cdot
\frac{z_1z_2(z_2^2+z_1^2)}{2(z_1^2-z_2^2)^2}\\
&\qquad\qquad\qquad
+ \sum_{\substack{ \sigma: \text{ $n$-cycle} \\ \epsilon_2,\cdots,\epsilon_n \in\{\pm 1\}}}
(-\epsilon_2\cdots\epsilon_n) \cdot
\prod_{i=1}^n \xi(\epsilon_{\sigma(i)} z_{\sigma(i)}, -\epsilon_{\sigma(i+1)} z_{\sigma(i+1)}),
\end{split}
\end{equation*}
where $\xi$ is given by:
\begin{equation*}
\begin{split}
\xi(\epsilon_{\sigma(i)} z_{\sigma(i)}, -\epsilon_{\sigma(i+1)} z_{\sigma(i+1)}) =
\begin{cases}
\wA (\epsilon_{\sigma(i)} z_{\sigma(i)}, -\epsilon_{\sigma(i+1)} z_{\sigma(i+1)}),
& \sigma(i)<\sigma(i+1);\\
-\wA( -\epsilon_{\sigma(i+1)} z_{\sigma(i+1)} ,\epsilon_{\sigma(i)} z_{\sigma(i)}),
& \sigma(i)>\sigma(i+1),
\end{cases}
\end{split}
\end{equation*}
and we use the conventions
$\epsilon_{1} :=1$ and
$\sigma(n+1):=\sigma(1)$.
\end{Theorem}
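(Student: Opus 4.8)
The plan is to follow Zhou's fermionic proof of \eqref{eq-intro-Zhou}, replacing the charged fermions of the KP setting by the neutral fermions of BKP. Write $\tau=\exp\big(\sum_{n,m\ge 0}a_{n,m}\phi_m\phi_n\big)|0\rangle=e^{g}|0\rangle$. Under the boson--fermion correspondence for the BKP hierarchy (recalled in \S\ref{sec-pre}), the operator $\partial/\partial t_i$ with $i$ odd and positive acts on $\tau$ as a mode of the bosonic current $J(z)=\tfrac12\,{:}\phi(z)\phi(-z){:}$, so that the generating series $\sum_{i_1,\dots,i_n>0\ \mathrm{odd}}\frac{\partial^n\tau}{\partial t_{i_1}\cdots\partial t_{i_n}}\big|_{\bm t=0}\,z_1^{-i_1}\cdots z_n^{-i_n}$ is, up to the free-field term singled out below, the vacuum expectation $\langle 0|{:}\phi(z_1)\phi(-z_1){:}\cdots{:}\phi(z_n)\phi(-z_n){:}\,e^{g}|0\rangle$, using $\langle 0|\tau\rangle=\tau(0)=1$. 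The second input I would establish (this is the content of \S\ref{sec-npt-disconn}) is the evaluation of the neutral-fermion two-point function of $\tau$: using the anticommutators $\{\phi_m,\phi_n\}=(-1)^m\delta_{m+n,0}$ together with $e^{-g}\phi_m e^{g}$, one finds that $\langle 0|\phi(u)\phi(v)\,e^g|0\rangle$ equals, up to a fixed constant, $\wA(u,v)$, whose affine-coordinate-independent summand $-\tfrac14-\tfrac12\sum_i(-1)^iu^{-i}v^i$ is the free (vacuum) contraction $\langle 0|\phi(u)\phi(v)|0\rangle$ and whose remaining summand is $A(u,v)$.

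Next I would compute the disconnected $n$-point function by the dressed Wick theorem for the squeezed state $e^g|0\rangle$: the vacuum expectation of any product of neutral fermion fields in this state equals the Pfaffian of the matrix of its two-point functions. Applying this to $\langle 0|{:}\phi(z_1)\phi(-z_1){:}\cdots{:}\phi(z_n)\phi(-z_n){:}\,e^g|0\rangle$ expands the disconnected bosonic $n$-point function as a signed sum over perfect matchings of the $2n$ ``legs'' $\{+z_1,-z_1,\dots,+z_n,-z_n\}$, each matched pair of legs weighted by $\wA$ --- the normal ordering ${:}\ {:}$ removing exactly the contraction of the two legs $+z_i,-z_i$ of a single vertex, whose dressed remainder is the one-point value built from $A$.

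The third step is to pass to the connected function by the exponential (linked-cluster) formula. A perfect matching of the legs induces a graph on the vertices $\{1,\dots,n\}$ in which every vertex meets both of its legs, hence has degree at most two; taking the logarithm retains exactly those matchings whose induced graph is connected, and a connected graph with all degrees at most two is a single cycle through all $n$ vertices. Resumming over the cyclic arrangements $\sigma$ of the vertices and over the two ways $\epsilon_j\in\{\pm1\}$ of assigning the legs $+z_j,-z_j$ of the $j$-th vertex to its two incident edges produces the double sum in the statement; the case distinction in the definition of $\xi$ and the prefactor $(-\epsilon_2\cdots\epsilon_n)$ together record the signs of the Pfaffian expansion restricted to single cycles, organized so that $\xi$ becomes the entry of a genuinely antisymmetric matrix. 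For $n=1$ the only graph is the self-loop, giving $A(-z,z)$ rather than $\wA$ because the free contraction was removed by the normal ordering; for $n=2$ the two-cycle yields the $\xi\xi$-sum, and the extra term $-\tfrac12\,z_1z_2(z_1^2+z_2^2)/(z_1^2-z_2^2)^2$ is the free bosonic two-point function $\langle 0|J(z_1)J(z_2)|0\rangle^{c}$ --- equivalently the $n=2$ value for the trivial tau-function $\tau=1$ --- which must be subtracted; for $n\neq 2$ the corresponding free-field correlator vanishes.

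I expect the principal difficulty to be the sign bookkeeping: the factors $(-1)^m$ in the neutral-fermion anticommutators, the normal-ordering constants, the signs in the Pfaffian expansion, and the signs generated when resumming cycles must all be tracked and reconciled with the asymmetric definition of $\xi$ and the prefactor $(-\epsilon_2\cdots\epsilon_n)$; in particular one must justify freezing $\epsilon_1:=1$, which reflects the $z\mapsto -z$ symmetry of each vertex $J(z)$. A secondary technical point is to check that, apart from the explicit rational term in the $n=2$ case, every contribution of the free contractions at $z_i\to\pm z_j$ either cancels in the sum over $\sigma$ and the $\epsilon$'s or has already been absorbed into $\wA$ and into the $n=1$ identity, so that the two sides agree as formal power series in $z_1^{-1},\dots,z_n^{-1}$.
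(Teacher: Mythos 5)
Your proposal takes essentially the same route as the paper's own proof: the dressed fermionic two-point function $\langle 0|\phi(u)\phi(v)\,e^{g}|0\rangle=-2\wA(u,v)$, Wick's theorem turning the $n$-point functions into Pfaffians, the bosonization $H(z)=-\tfrac12\,{:}\phi(-z)\phi(z){:}$, and the linked-cluster reduction of the connected Pfaffian to a single sum over $n$-cycles with leg-assignments $\epsilon_j$, with the $n=1$ ($A$ versus $\wA$) and $n=2$ (free central term) anomalies handled exactly as in \S\ref{sec-conn}. The sign bookkeeping you defer is precisely where the paper does its main combinatorial work (the Pfaffian-to-cycles proposition, proved by induction after reducing to standard-form cycle decompositions), so your plan is correct but leaves the hardest verification unexecuted.
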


Furthermore,
given a tau-function $\tau(\bm t)$ of the KdV hierarchy
(see \cite{djm} for an introduction to KdV),
one knows that $\tau(\bm t/2)$ is a tau-function of the BKP hierarchy \cite{al1}.
We find the following (see \S \ref{sec-kdv} for details):
\begin{Theorem}
Let $\tau(\bm t)$ be a tau-function of the KdV hierarchy.
Then:
\be
\label{eq-intro-KPBKP}
A^{\text{BKP}} (w,z) = -\frac{w-z}{4} \cdot A^{\text{KP}}(w,-z),
\ee
where $A^{\text{BKP}}(w,z)$ is the generating series of the BKP-affine coordinates of $\tau(\bm t/2)$,
and $A^{\text{KP}}(w,-z)$ is the generating series (introduced by Zhou in \cite{zhou1})
of the KP-affine coordinates of $\tau(\bm t)$.
\end{Theorem}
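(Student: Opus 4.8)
The plan is to read \eqref{eq-intro-KPBKP} off from the fermionic description. By the results recalled above, $A^{\mathrm{BKP}}$ and $\wA^{\mathrm{BKP}}$ are the fermionic two--point functions of the neutral fermion field attached to $\tau(\bm t/2)$, and by Zhou's construction $A^{\mathrm{KP}}$ (together with $\wA^{\mathrm{KP}}$) is, up to its explicit correction terms, the fermionic two--point function of the charged fermion fields attached to $\tau(\bm t)$. Since the statement that $\tau(\bm t/2)$ is a BKP tau--function whenever $\tau(\bm t)$ is a KdV tau--function is itself the bosonic shadow of the classical realization of the neutral Clifford algebra inside (a completion of) the charged one, the natural route is to transport two--point functions across this realization.

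First I would recall the realization of the neutral fermion field $\phi(w)$ as a combination of $\psi$ and $\psi^{*}$ (in the paper's normalization, schematically $\phi(w)=\tfrac{1}{\sqrt2}\bigl(\psi(w)+\psi^{*}(w)\bigr)$, with the substitution reflecting the $2$--reduction), under which the two vacua are identified and the neutral Bogoliubov element $\exp\bigl(\sum_{n,m\ge0}a^{\mathrm{BKP}}_{n,m}\phi_{m}\phi_{n}\bigr)|0\rangle$ representing $\tau(\bm t/2)$ is carried to the charged Bogoliubov element $\exp\bigl(\sum_{n,m\ge0}a^{\mathrm{KP}}_{n,m}\psi_{-m-\half}\psi^{*}_{-n-\half}\bigr)|0\rangle$ representing $\tau(\bm t)$. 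This is precisely where the KdV hypothesis is used: substituting the realization produces, besides genuine charge--zero quadratics $\psi\psi^{*}$, also charge--violating pieces $\psi\psi$ and $\psi^{*}\psi^{*}$, and these cancel only because the $2$--reduction forces the KP--affine coordinates $a^{\mathrm{KP}}_{n,m}$ to be supported on $n+m$ odd (equivalently $A^{\mathrm{KP}}(w,z)=-A^{\mathrm{KP}}(-w,-z)$) together with the accompanying antisymmetry in $(n,m)$ that a KdV tau--function enjoys.

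The heart of the computation is the two--point function correspondence. Applying the realization, the BKP two--point function $\langle 0|\phi(w)\phi(z)\,g|0\rangle/\langle 0|g|0\rangle$ splits into four charged two--point functions; the two charge--violating ones vanish identically, leaving $\langle\psi(w)\psi^{*}(-z)\rangle_{\tau}$ and $\langle\psi^{*}(-w)\psi(z)\rangle_{\tau}$. Re--ordering the second by the canonical anticommutation relations yields Zhou's two--point function with interchanged and negated arguments plus a contact term supported on the diagonal; using the parity and antisymmetry of $a^{\mathrm{KP}}$ the two surviving contributions combine into a single generating series, the contact term accounts for the elementary prefactor, and after reconciling the explicit correction terms $-\tfrac14-\half\sum_{i}(-1)^{i}w^{-i}z^{i}$ in $\wA^{\mathrm{BKP}}$ against the analogous ones in $\wA^{\mathrm{KP}}$ one is left with $A^{\mathrm{BKP}}(w,z)=-\tfrac{w-z}{4}A^{\mathrm{KP}}(w,-z)$. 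Equivalently, at the level of coefficients one proves $a^{\mathrm{BKP}}_{n,m}=\tfrac14(-1)^{n}\bigl(a^{\mathrm{KP}}_{n,m-1}+a^{\mathrm{KP}}_{n-1,m}\bigr)$ (with $a^{\mathrm{KP}}_{n,-1}=a^{\mathrm{KP}}_{-1,m}=0$), and checks separately that this expression is compatible with $a^{\mathrm{BKP}}_{n,m}=-a^{\mathrm{BKP}}_{m,n}$, which is automatic once the $2$--reduction structure of $a^{\mathrm{KP}}$ is in hand. A purely bosonic variant of the argument, using the Schur--function expansion of $\tau(\bm t)$ over partitions with empty $2$--core and the classical expression of $s_{\lambda}(\bm t/2)$ in terms of Schur $Q$--functions, is also available.

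I expect the principal obstacle to be not a single conceptual difficulty but the sign-- and normalization--bookkeeping: the factors $\tfrac{1}{\sqrt2}$ and the signs $(-1)^{j}$ in the neutral realization, the shifts of indices between the neutral labeling and the charged labeling, the vacuum--expectation normalizations, and the correction terms built into $\wA^{\mathrm{BKP}}$ and $\wA^{\mathrm{KP}}$ --- all of which must conspire to yield exactly the clean factor $-\tfrac{w-z}{4}$ and the substitution $z\mapsto -z$, with the \emph{linear} (rather than constant) prefactor traceable to the contact term. A secondary subtlety is to pin down and justify the antisymmetry of the KP--affine coordinates of a KdV tau--function beyond the parity coming from the $2$--reduction, since it is this extra symmetry that collapses the two surviving two--point functions into one term. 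As a consistency check, which I would carry out first, one equates $A^{\mathrm{BKP}}(-z,z)$ given by Theorem \ref{thm-intro-formula} with the odd--in--$z$ part of Zhou's one--point function of $\tau(\bm t)$ after the rescaling $\bm t\mapsto\bm t/2$, which already forces the leading behavior of \eqref{eq-intro-KPBKP}.
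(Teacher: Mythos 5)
Your proposed route --- realizing the neutral fermions inside the charged Clifford algebra and transporting two-point functions across the embedding --- is genuinely different from what the paper does, but as written it rests on a false premise at its key step. You claim that the $2$-reduction forces the KP-affine coordinates of a KdV tau-function to be supported on $n+m$ odd (equivalently $A^{\mathrm{KP}}(-w,-z)=-A^{\mathrm{KP}}(w,z)$), and you use this both to kill the charge-violating quadratics $\psi\psi$, $\psi^{*}\psi^{*}$ and to make the antisymmetry $a^{\mathrm{BKP}}_{n,m}=-a^{\mathrm{BKP}}_{m,n}$ ``automatic.'' This is not true: for the Witten--Kontsevich tau-function the support is $n+m\equiv 2\ (\mathrm{mod}\ 3)$, and already $a^{\mathrm{KP}}_{1,1}=-\tfrac{7}{24}\neq 0$ (read off the coefficient of $x^{-2}y^{-2}$ in \eqref{eq-KP-WK-gen}), so $n+m$ even certainly occurs. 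Consequently the mechanism you offer for the cancellation of the charge-violating pieces, and for the compatibility of your coefficient identity $a^{\mathrm{BKP}}_{n,m}=\tfrac14(-1)^{n}\bigl(a^{\mathrm{KP}}_{n,m-1}+a^{\mathrm{KP}}_{n-1,m}\bigr)$ with antisymmetry, does not work as stated; whatever identity of KdV affine coordinates actually makes these steps go through would itself need to be isolated and proved, and the proposal never pins down the embedding or carries out the two-point-function computation, so the argument is incomplete in an essential way (the coefficient identity itself is consistent with the claimed formula, but that is the conclusion, not an independent check).

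For comparison, the paper's proof avoids the fermionic embedding entirely. It uses the fact that both generating series are principal specializations of the tau-function itself: Proposition \ref{prop-gen-tau} gives
$A^{\mathrm{BKP}}(w,z)=-i_{w,z}\tfrac{w-z}{4(w+z)}\bigl(\tilde\tau(\bm t)\big|_{t_n=-\frac{2}{n}(w^{-n}+z^{-n})}-1\bigr)$,
while Zhou's formula gives
$A^{\mathrm{KP}}(w,-z)=i_{w,z}\tfrac{1}{w+z}\bigl(\tau(\bm t)\big|_{t_n=-\frac{1}{n}(w^{-n}+z^{-n})}-1\bigr)$
(using that $\tau$ is independent of the even KP times). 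Since $\tilde\tau(\bm t)=\tau(\bm t/2)$, the two evaluations of $\tau$ coincide and the identity $A^{\mathrm{BKP}}(w,z)=-\tfrac{w-z}{4}A^{\mathrm{KP}}(w,-z)$ drops out by comparing prefactors; no analysis of the $2$-reduction at the level of affine coordinates is needed. If you want to pursue your fermionic route, the honest version would be to prove the required identity among the $a^{\mathrm{KP}}_{n,m}$ of a KdV tau-function directly from $z^{2}W\subset W$, which is considerably more work than the specialization argument.
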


We will discuss some applications of the above formulas.
The Witten-Kontsevich tau-function $\tau_{\text{WK}}$ \cite{wit, kon}
and the Br\'ezin-Gross-Witten (BGW) tau-function $\tau_{\text{BGW}}$ \cite{bg, gw}
are two well-known tau-functions of the KdV hierarchy.
In literatures,
there have been already various methods to compute their free energies,
see e.g. \cite{lx, dvv, fkn} and \cite{al4, gn} respectively.
However,
there are still mathematical aspects
which have not been fully understood yet,
and we hope
the discussions in this work may provide some new understandings
from the point of view of the BKP hierarchy.
This is inspired by the works \cite{mm, ly2, ly1, al2, al3},
in which these two tau-functions were related to Schur Q-functions;
and the work of Zhou \cite{zhou1},
in which the KP-affine coordinates and boson-fermion correspondence
are used to compute the free energies.

Using the results in \cite{mm, al2, ly1, ly2, al3},
we are able to write down the explicit expressions of the BKP-affine coordinates
of $\tau_{\text{WK}}(\bm t/2)$ and $\tau_{\text{BGW}}(\bm t/2)$,
and then we can apply Theorem \ref{thm-intro-formula} to compute the free energies.
The generating series of the BKP-affine coordinates
have simple expressions
(in terms of the first two basis vectors of the corresponding elements
in the Sato-Grassmannian):
\begin{equation*}
\wA^{\bullet}(w,z) = A^{\bullet}(w,z)- \frac{w-z}{4(w+z)}
= \frac{\Phi_1^\bullet(-z)\Phi_2^\bullet(-w)-
\Phi_1^\bullet(-w)\Phi_2^\bullet(-z)}{4(w+z)},
\end{equation*}
where $\bullet=\text{WK}$ or $\text{BGW}$.
The vectors $\Phi_1^{\text{WK}}(z),\Phi_2^{\text{WK}}(z)$ are the Faber-Zagier series:
\begin{equation*}
\Phi_1^{\text{WK}}(z)= \sum_{m=0}^\infty \frac{(6m-1)!!}{36^m\cdot (2m)!} z^{-3m},\quad
\Phi_2^{\text{WK}}(z)= -\sum_{m=0}^\infty \frac{(6m-1)!!}{36^m \cdot (2m)!} \frac{6m+1}{6m-1} z^{-3m+1};
\end{equation*}
and
$\Phi_1^{\text{BGW}}(z),\Phi_2^{\text{BGW}}(z)$ are:
\begin{equation*}
\begin{split}
&\Phi_1^{\text{BGW}} (z) =\sum_{k=0}^\infty \frac{\big((2k-1)!!\big)^2}{8^k\cdot k!} z^{-k},
\quad
\Phi_2^{\text{BGW}} (z) = z- \sum_{k= 0}^\infty \frac{(2k-1)!!(2k+3)!!}{8^{k+1}\cdot (k+1)!}z^{-k}.
\end{split}
\end{equation*}

The rest of this paper is arranged as follows.
In \S \ref{sec-pre} we recall some preliminaries of BKP tau-functions and the boson-fermion correspondence.
In \S \ref{sec-npt-disconn} we represent the fermionic and bosonic $n$-point functions
in terms of the affine coordinates.
In \S \ref{sec-conn} we compute the connected $n$-point functions
using results of \S \ref{sec-npt-disconn}.
In \S \ref{sec-kdv},
we prove the relation \eqref{eq-intro-KPBKP} for a KdV tau-function.
Finally in \S \ref{sec-WK-BGW},
we apply our methods to the Witten-Kontsevich tau-function and the BGW tau-function.

\section{Preliminaries of BKP Hierarchy and Boson-Fermion Correspondence}
\label{sec-pre}

In this section we
first give a brief review of the neutral fermions and boson-fermion correspondence
for the BKP hierarchy. See \cite{yo, djkm, jm} for details.
Then we recall the affine coordinates of a BKP tau-function, see \cite{hb}.

\subsection{Strict partitions and Schur Q-functions}

First we recall the definition of strict partitions and Schur Q-functions \cite{sch}.
See e.g. \cite{mac} for details.

A partition of an integer $n$ is a sequence of integers $\mu=(\mu_1,\cdots,\mu_l)$
such that $\mu_1\geq \cdots \geq\mu_l>0$ and $|\mu|:=\mu_1+\cdots+\mu_n = n$.
The number $l(\mu):=l$ is called the length of $\mu$.
A partition $\mu$ is called strict
if $\mu_1>\mu_2\cdots>\mu_l>0$.
The set of all strict partitions is denoted by $DP$,
and we allow the empty partition $(\emptyset) \in DP$ of length zero.
A partition $\mu$ is called odd
if each $\mu_i$ is odd.
The set of all odd partitions of $n$ is denoted by $OP_n$.
Let $\bm x = (x_1,x_2,\cdots)$ be a family of variables,
and let $p_n := \sum_i x_i^n$ be the Newton symmetric function of degree $n$.
Define:
\be
q_n = \sum_{\mu\in OP_n} \frac{2^{l(\mu)}}{\prod_{i\geq 1: \text{ odd}} i^{m_1} \cdot m_i!}
p_\mu,
\ee
where $p_\mu:= p_{\mu_1}\cdots p_{\mu_l}$ for a partition $\mu=(\mu_1,\cdots,\mu_l)$,
and $m_i$ is the number of $i$'s appearing in $\mu$.
Then the Schur Q-function $Q_\lambda$ indexed by a strict partition $\lambda\in DP$
is defined as follows:
\begin{equation*}
Q_{(m)} (\bm x) := q_m,\qquad
Q_{(m,n)} (\bm x) := q_m q_n -2q_{m+1}q_{n-1}+\cdots +(-1)^n 2q_{m+n},
\end{equation*}
and for $\lambda=(\lambda_1,\cdots,\lambda_n)\in DP$ with $n\geq 4$ even,
$Q_\lambda$ is defined by the Pfaffian:
\begin{equation*}
Q_\lambda =
\Pf \left[
\begin{array}{cccc}
0 & Q_{(\lambda_1,\lambda_2)} & \cdots & Q_{(\lambda_1,\lambda_n)} \\
-Q_{(\lambda_1,\lambda_2)} & 0 & \cdots & Q_{(\lambda_2,\lambda_n)} \\
\vdots & \vdots & \cdots & \vdots \\
-Q_{(\lambda_1,\lambda_n)} & -Q_{(\lambda_2,\lambda_n)} & \cdots & 0\\
\end{array}
\right];
\end{equation*}
and for $\lambda=(\lambda_1,\cdots,\lambda_n)\in DP$ with $n\geq 3$ odd,
$Q_\lambda$ is defined by:
\begin{equation*}
Q_\lambda:= q_{\lambda_1}Q_{(\lambda_2,\cdots,\lambda_n)}
- q_{\lambda_2}Q_{(\lambda_1,\lambda_3,\cdots,\lambda_n)} +\cdots
+ q_{\lambda_n}Q_{(\lambda_1,\cdots,\lambda_{n-1})}.
\end{equation*}
We will use the convention $Q_{(\emptyset)}:=1$.
By definition,
$Q_\lambda$ is a symmetric function in $\bm x = (x_1,x_2,\cdots)$ of degree $|\lambda|$
for every $\lambda\in DP$,
i.e., it is a vector in the bosonic Fock space $\Lambda = \bC[p_1,p_2,\cdots]$.
Moreover,
it lies in the subspace $\bC[p_1,p_3,p_5,\cdots]$.

\begin{Remark}
Schur Q-functions are related to the characters of
projective representations of the symmetric groups $S_n$,
see \cite{sch, hh}.
\end{Remark}

\subsection{Neutral fermions and fermionic Fock space}

Let $\{\phi_m\}_{m\in \bZ}$ be a family of operators
satisfying the following anti-commutation relations:
\be
\label{eq-anticomm}
[\phi_m, \phi_n]_+ := \phi_m\phi_n+\phi_n\phi_m = (-1)^m \delta_{m+n,0}.
\ee
In particular, one has
$\phi_0^2=\half$,
and $\phi_n^2=0$ for $n\not=0$.
These operators $\{\phi_m\}$ are called the neutral fermions.
The fermionic Fock space $\cF_B$ for the BKP hierarchy is the vector space (over $\bC$)
of all formal (infinite) summations
\begin{equation*}
\sum
c_{k_1,\cdots,k_n}
\phi_{k_1} \phi_{k_2} \cdots \phi_{k_n} |0\rangle,
\qquad
c_{k_1,\cdots,k_n} \in\bC,
\end{equation*}
over $n\geq 0$, $k_1>\cdots >k_n \geq 0$,
where $|0\rangle$ is a vector (called the vacuum) satisfying:
\be
\label{eq-B-anni}
\phi_i |0\rangle = 0,
\qquad \forall i <0.
\ee
The operators $\{\phi_n\}_{n\geq 0}$ are called the fermionic creators,
while $\{\phi_n\}_{n<0}$ are called the fermionic annihilators.
The Fock space $\cF_B$ can be decomposed as follows:
\begin{equation*}
\cF_B = \cF_B^0 \oplus \cF_B^1,
\end{equation*}
where $\cF_B^0$ and $\cF_B^1$ are the subspaces with
even and odd numbers of the generators $\{\phi_i\}_{i\geq 0}$ respectively.
The subspace $\cF_B^0$ has a basis $\{|\mu\rangle\}_{\mu\in DP}$ indexed by all strict partitions.
Let $\mu\in DP $ be  a strict partition $\mu= (\mu_1>\cdots >\mu_n > 0) $,
then:
\be
|\mu\rangle := \begin{cases}
\phi_{\mu_1}\phi_{\mu_2}\cdots \phi_{\mu_n}|0\rangle, &\text{ for $n$ even;}\\
\sqrt{2}\cdot \phi_{\mu_1}\phi_{\mu_2}\cdots \phi_{\mu_n} \phi_0 |0\rangle, &\text{ for $n$ odd}.
\end{cases}
\ee

Now we recall the dual Fock space $\cF_B^*$ and the pairing between $\cF_B$ and $\cF_B^*$.
Let $\cF_B^*$ be the vector space spanned by:
\begin{equation*}
\langle 0|
\phi_{k_n}  \cdots \phi_{k_2} \phi_{k_1},
\qquad
k_1 < k_2 < \cdots < k_n \leq 0,
\quad n\geq 0,
\end{equation*}
where $\langle 0|$ is a vector satisfying:
\be
\label{eq-B-anni-2}
\langle 0| \phi_i = 0,
\qquad \forall i >0.
\ee
Then there is a nondegenerate pairing $\cF_B^* \times \cF_B \to \bC$ determined by
\eqref{eq-B-anni}, \eqref{eq-B-anni-2},
the anti-commutation relation \eqref{eq-anticomm},
and the requirements
$\langle 0 | 0 \rangle =1$ and $\langle 0|\phi_0 | 0\rangle =0$.
One easily checks that for arbitrary $k_1>k_2>\cdots >k_n\geq 0$,
\be
\label{eq-basic-VEV}
\langle 0 | \phi_{-k_n}\cdots \phi_{-k_1}
\phi_{k_1}\cdots \phi_{k_n} |0\rangle = \begin{cases}
(-1)^{k_1+\cdots +k_n}, &\text{ if $k_n\not=0$;}\\
\half\cdot (-1)^{k_1+\cdots +k_{n-1}},&\text{ if $k_n=0$.}
\end{cases}
\ee
In general,
the vacuum expectation value of a product of neutral fermions
can be computed using Wick's Theorem:
\begin{equation*}
\langle 0| \phi_{i_1}\phi_{i_2}\cdots \phi_{i_{2n}}|0\rangle
=\sum_{\substack{(p_1,q_1,\cdots,p_n,q_n)\\p_k<q_k, \quad p_1<\cdots<p_n}}
\text{sgn}(p,q)\cdot \prod_{j=1}^n
\langle 0| \phi_{i_{p_j}}\phi_{i_{q_j}} |0\rangle,
\end{equation*}
where $(p_1,q_1,\cdots,p_n,q_n)$ is a permutation of $(1,2,\cdots,2n)$,
and $\text{sgn}(p,q)$ denotes its sign ($\text{sgn}=1$ for an even permutation, and $\text{sgn}=-1$ for an odd one).

The normal-ordered product $:\phi_i\phi_j:$ of two neutral fermions
is defined by:
\be
\label{eq-def-normal-order}
:\phi_i\phi_j: = \phi_i\phi_j -\langle 0 |\phi_i\phi_j |0\rangle.
\ee
Then by \eqref{eq-basic-VEV},
the anti-commutation relation \eqref{eq-anticomm} is equivalent to the following
operator product expansion (OPE):
\be
\phi(w)\phi(z)
= :\phi(w)\phi(z): +i_{w,z}\frac{w-z}{2(w+z)},
\ee
where $\phi(z)$ is the fermionic field:
\be
\label{eq-ferm-field}
\phi(z) = \sum_{i\in \bZ} \phi_i z^i,
\ee
and $i_{w,z}$ means formally expanding on $\{|w|>|z|\}$:
\begin{equation*}
i_{w,z}\frac{w-z}{2(w+z)}:= \half +
\sum_{j=1}^\infty (-1)^{j} w^{-j} z^j.
\end{equation*}

\subsection{Boson-fermion correspondence}

Given an odd integer $n\in 2\bZ+1$,
define the Hamiltonian $H_n$ by:
\be
\label{eq-def-boson}
H_n = \half \sum_{i\in \bZ} (-1)^{i+1} \phi_i\phi_{-i-n}.
\ee
Then $H_n |0\rangle = 0$
for $\forall n> 0$.
Moreover,
the following commutation relation holds:
\be
[H_n,H_m] = \frac{n}{2}\cdot \delta_{m+n,0},
\qquad
\forall n,m \text{ odd}.
\ee

Now let $\bm t=(t_1,t_3,t_5,t_7,\cdots)$ be a family of formal variables,
and define:
\be
H_+(\bm t) = \sum_{n>0: \text{ odd}} t_n H_n,
\ee
then the so-called boson-fermion correspondence is the following:
\begin{Theorem}
[\cite{djkm}]
\label{thm-bf}
There is an isomorphism of vector spaces:
\begin{equation*}
\sigma_B :\cF_B \to \bC[\![w;t_1,t_2,\cdots]\!]/\sim,
\qquad
|U\rangle \mapsto \sum_{i=0}^1 \omega^i\cdot \langle i |e^{H_+(\bm t)}|U\rangle,
\end{equation*}
where $\omega^2\sim 1$,
and $\langle 1| =\sqrt{2}\langle 0|\phi_0 \in (\cF_B^1)^*$.
Under this isomorphism,
one has:
\be
\label{eq-bfcor-boson}
\sigma_B (H_n |U\rangle) = \frac{\pd}{\pd t_n} \sigma_B(|U\rangle),
\qquad
\sigma_B (H_{-n}|U\rangle)=\frac{n}{2} t_n \cdot \sigma_B(|U\rangle),
\ee
for every odd $n>0$.
Moreover,
\be
\label{eq-bfcor-ferm}
\sigma_B (\phi(z)|U\rangle) =
\frac{1}{\sqrt{2}} \omega\cdot e^{\xi(\bm t,z)} e^{-\xi (\tilde\pd ,z^{-1})} \sigma_B(|U\rangle),
\ee
where
$\xi(\bm t,z) = \sum_{n>0\text{ odd}} t_{n}z^{n}$
and $\tilde\pd = (2\pd_{t_{ 1}},\frac{2}{3}\pd_{t_{ 3}},\frac{2}{5}\pd_{t_{ 5}},\cdots)$.
\end{Theorem}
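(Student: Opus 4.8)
The plan is to establish this classical statement (\cite{djkm,jm}) by the standard ``both sides satisfy the same recursion'' argument, the only structural input being the basis $\{|\mu\rangle\}_{\mu\in DP}$ of $\cF_B^0$ recalled above (and the analogous description of $\cF_B^1$). First I would extract from the anticommutation relation \eqref{eq-anticomm}, by a short direct computation, the key commutators
\[
[H_k,\phi_m]=\phi_{m-k}\quad(k\text{ odd}),\qquad\text{equivalently}\qquad [H_k,\phi(z)]=z^k\,\phi(z),
\]
together with $\langle 0|H_{-k}=0=\langle 1|H_{-k}$ for odd $k>0$, complementing the stated $H_k|0\rangle=0$. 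Summing over $k$ yields the two \emph{scalar}-valued commutators $[H_+(\bm t),\phi(z)]=\xi(\bm t,z)\,\phi(z)$ and $[H_+(\bm t),H_{-n}]=\tfrac n2 t_n$.

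The bosonic relations \eqref{eq-bfcor-boson} then follow at once. For odd $n>0$ the Heisenberg relation gives $[H_n,H_+(\bm t)]=0$, hence $\pd_{t_n}\langle i|e^{H_+(\bm t)}|U\rangle=\langle i|H_n e^{H_+(\bm t)}|U\rangle=\langle i|e^{H_+(\bm t)}H_n|U\rangle$; and $[H_+(\bm t),H_{-n}]=\tfrac n2 t_n$ gives $e^{H_+(\bm t)}H_{-n}=\big(H_{-n}+\tfrac n2 t_n\big)e^{H_+(\bm t)}$, whose first summand is annihilated by $\langle i|$, leaving $\sigma_B(H_{-n}|U\rangle)=\tfrac n2 t_n\,\sigma_B(|U\rangle)$.

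The fermionic relation \eqref{eq-bfcor-ferm} is the crux. I would introduce the two maps $\Psi(z):=\sigma_B\circ\phi(z)$ and $\Psi'(z):=\tfrac1{\sqrt2}\,\omega\,e^{\xi(\bm t,z)}e^{-\xi(\tilde{\pd},z^{-1})}\circ\sigma_B$ and show $\Psi=\Psi'$. Using the commutators above, the already-proven \eqref{eq-bfcor-boson}, and the elementary identities $e^{\xi(\bm t,z)}\pd_{t_n}=(\pd_{t_n}-z^n)e^{\xi(\bm t,z)}$ and $e^{-\xi(\tilde{\pd},z^{-1})}\,\tfrac n2 t_n=(\tfrac n2 t_n-z^{-n})e^{-\xi(\tilde{\pd},z^{-1})}$, one checks that $\Psi(z)$ and $\Psi'(z)$ satisfy the same recursions $\Psi^{\bullet}(z)H_n=(\pd_{t_n}-z^n)\Psi^{\bullet}(z)$ and $\Psi^{\bullet}(z)H_{-n}=(\tfrac n2 t_n-z^{-n})\Psi^{\bullet}(z)$ for odd $n>0$. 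Since $\cF_B$ is generated over the Heisenberg algebra from the two vacua $|0\rangle\in\cF_B^0$ and $\phi_0|0\rangle\in\cF_B^1$ by the raising operators $\{H_{-n}\}_{n>0\text{ odd}}$ --- a standard fact, provable from Euler's bijection between strict partitions of $d$ and partitions of $d$ into odd parts together with nondegeneracy of the pairing --- the second recursion propagates $\Psi=\Psi'$ from the two base cases $\Psi(z)|0\rangle=\Psi'(z)|0\rangle$ and $\Psi(z)\phi_0|0\rangle=\Psi'(z)\phi_0|0\rangle$. These are short computations using $[H_+(\bm t),\phi(z)]=\xi(\bm t,z)\phi(z)$, $e^{H_+(\bm t)}|0\rangle=|0\rangle$, $e^{H_+(\bm t)}\phi_0|0\rangle=\phi_0|0\rangle$ and $\phi_0^2=\half$; they give $\sigma_B(|0\rangle)=1$, $\sigma_B(\phi_0|0\rangle)=\tfrac1{\sqrt2}\omega$, $\Psi(z)|0\rangle=\tfrac1{\sqrt2}\omega\,e^{\xi(\bm t,z)}$ and $\Psi(z)\phi_0|0\rangle=\half e^{\xi(\bm t,z)}$, matching $\Psi'(z)|0\rangle$ and $\Psi'(z)\phi_0|0\rangle$.

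Finally, $\sigma_B$ is a linear isomorphism (onto the relevant completion of $\bC[p_1,p_3,p_5,\dots]\oplus\omega\,\bC[p_1,p_3,p_5,\dots]$). It is manifestly well defined; for injectivity, $\sigma_B(|U\rangle)=0$ forces all matrix coefficients $\langle i|H_{n_1}\cdots H_{n_k}|U\rangle$ ($i\in\{0,1\}$, $n_j>0$ odd) to vanish, so by the cyclicity above applied on the dual side, together with nondegeneracy of the pairing, $|U\rangle=0$; surjectivity then follows degree by degree from the equality of graded dimensions $\dim(\cF_B^0)_d=\#\{\mu\in DP:|\mu|=d\}=\#\{\text{partitions of }d\text{ into odd parts}\}=\dim\bC[p_1,p_3,p_5,\dots]_d$, with the $\bZ/2$-gradings matched on the two sides. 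The step I expect to be the real obstacle is the fermionic identity \eqref{eq-bfcor-ferm}: reorganizing the annihilator part of $\phi(z)$, after commuting it through $e^{H_+(\bm t)}$, into the exponential $e^{-\xi(\tilde{\pd},z^{-1})}$, and fixing the constants $\tfrac1{\sqrt2}$, the parity operator $\omega$ and the weights $\tfrac2n$ in $\tilde{\pd}=(2\pd_{t_1},\tfrac23\pd_{t_3},\dots)$ --- precisely the places where the neutral (rather than charged) nature of the BKP fermions and the relation $\phi_0^2=\half$ intervene, and where the argument genuinely differs from the KP boson-fermion correspondence.
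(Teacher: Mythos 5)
Your proposal is correct. The paper does not prove Theorem \ref{thm-bf} at all --- it is quoted from \cite{djkm} as a preliminary --- so there is no in-paper argument to compare against; what you give is the standard bosonization proof, and the details check out: the commutator $[H_k,\phi_m]=\phi_{m-k}$ for odd $k$ follows from \eqref{eq-anticomm} and \eqref{eq-def-boson} exactly as you indicate (the two delta-terms each contribute $\tfrac12\phi_{m-k}$), the two recursions for $\Psi$ and $\Psi'$ agree with the shifts $-z^{n}$ and $-z^{-n}$ precisely because of the weights $\tfrac2n$ in $\tilde\pd$, the base cases evaluate to $\tfrac1{\sqrt2}\omega\,e^{\xi(\bm t,z)}$ and $\tfrac12 e^{\xi(\bm t,z)}$ via $\phi_0^2=\tfrac12$ and $\omega^2\sim1$, and the cyclicity/isomorphism claim is exactly the Euler-bijection dimension count combined with nondegeneracy of the Heisenberg Fock pairing. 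One cosmetic remark: the target space in the theorem's statement lists all $t_n$, but as you implicitly use, only the odd times occur, so your identification of the image with (the completion of) $\bC[p_1,p_3,\dots]\oplus\omega\,\bC[p_1,p_3,\dots]$ is the right reading.
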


Furthermore, one has the following:
\begin{Theorem}
[\cite{yo}]
\label{eq-bf-schur}
Let $\lambda = (\lambda_1>\cdots>\lambda_l>0)\in DP$, and take:
\begin{equation*}
t_n = \frac{2p_n}{n} = \frac{2}{n}\sum_i x_i^n,
\qquad \text{$n$ odd}
\end{equation*}
in $H_+(\bm t)$.
Then:
\begin{equation*}
Q_\lambda(\bm x) = 2^{\half l(\lambda)}\cdot \sigma_B (\phi_{\lambda_1}\cdots \phi_{\lambda_l}|\alpha(\lambda)\rangle),
\end{equation*}
where $Q_\lambda$ is the Schur Q-function indexed by $\lambda\in DP$,
and
\begin{equation*}
|\alpha(\lambda)\rangle = \begin{cases}
|0\rangle, & \text{if $l(\lambda)$ is even;}\\
\sqrt{2}\phi_0|0\rangle, & \text{if $l(\lambda)$ is odd}.
\end{cases}
\end{equation*}
\end{Theorem}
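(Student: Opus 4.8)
The plan is to use the fermionic realization \eqref{eq-bfcor-ferm} of the boson-fermion correspondence to compute $\sigma_B$ of the generating field $\phi(z_1)\cdots\phi(z_l)|\alpha(\lambda)\rangle$ in closed form (here $l=l(\lambda)$), and then to extract the coefficient of $z_1^{\lambda_1}\cdots z_l^{\lambda_l}$ and identify it with $Q_\lambda(\bm x)$. Two normalizations are immediate: $\sigma_B(|0\rangle)=1$ (since $H_n|0\rangle=0$ for odd $n>0$ by \eqref{eq-def-boson} and $\langle 1|0\rangle=\sqrt2\,\langle 0|\phi_0|0\rangle=0$), and $\sigma_B(\sqrt2\,\phi_0|0\rangle)=\omega$ (the $z^0$-coefficient of $\sigma_B(\phi(z)|0\rangle)=\frac1{\sqrt2}\omega\,e^{\xi(\bm t,z)}$, using \eqref{eq-ferm-field} and \eqref{eq-bfcor-ferm}, together with $\sigma_B(|0\rangle)=1$). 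Writing $X(z):=\frac1{\sqrt2}\omega\,e^{\xi(\bm t,z)}e^{-\xi(\tilde\pd,z^{-1})}$, equation \eqref{eq-bfcor-ferm} reads $\sigma_B(\phi(z)|U\rangle)=X(z)\,\sigma_B(|U\rangle)$ (extended coefficientwise to generating-field-valued states), so iterating gives $\sigma_B\big(\phi(z_1)\cdots\phi(z_l)|\alpha(\lambda)\rangle\big)=X(z_1)\cdots X(z_l)\,\sigma_B(|\alpha(\lambda)\rangle)$.

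The key step is to reorder this product of vertex operators. The only contraction needed is
\[
\big[-\xi(\tilde\pd,z^{-1}),\ \xi(\bm t,w)\big]=-\sum_{n>0:\,\text{odd}}\tfrac2n z^{-n}w^n=\log\frac{z-w}{z+w}\qquad(|z|>|w|),
\]
using $\sum_{n>0\ \text{odd}}\frac1n x^n=\half\log\frac{1+x}{1-x}$. Pushing all the annihilation factors $e^{-\xi(\tilde\pd,z_k^{-1})}$ to the right, where they act as the identity on functions of $\omega$ alone, yields
\[
\sigma_B\big(\phi(z_1)\cdots\phi(z_l)|\alpha(\lambda)\rangle\big)=2^{-l/2}\prod_{1\le i<j\le l}\frac{z_i-z_j}{z_i+z_j}\cdot\prod_{k=1}^l e^{\xi(\bm t,z_k)},
\]
each factor $\frac{z_i-z_j}{z_i+z_j}$ being expanded in the region $|z_i|>|z_j|$; here I have used $\omega^{l}\sim1$ when $l$ is even (so $|\alpha(\lambda)\rangle=|0\rangle$ and $\sigma_B(|\alpha(\lambda)\rangle)=1$) and $\omega^{l+1}\sim1$ when $l$ is odd (so $|\alpha(\lambda)\rangle=\sqrt2\,\phi_0|0\rangle$, $\sigma_B(|\alpha(\lambda)\rangle)=\omega$, the extra $\phi_0$ supplying the missing power of $\omega$).

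Next I would substitute $t_n=2p_n/n$ for odd $n$, so that $e^{\xi(\bm t,z)}=\exp\!\big(\sum_{n\ \text{odd}}\tfrac{2p_n}n z^n\big)=\prod_i\frac{1+x_iz}{1-x_iz}=\sum_{m\ge0}q_m z^m$ with the $q_m$ as defined above. Since the coefficient of $z_1^{\lambda_1}\cdots z_l^{\lambda_l}$ in $\phi(z_1)\cdots\phi(z_l)|\alpha(\lambda)\rangle$ is exactly $\phi_{\lambda_1}\cdots\phi_{\lambda_l}|\alpha(\lambda)\rangle$, the theorem reduces to the classical identity
\[
Q_\lambda(\bm x)=\big[z_1^{\lambda_1}\cdots z_l^{\lambda_l}\big]\ \prod_{1\le i<j\le l}\frac{z_i-z_j}{z_i+z_j}\cdot\prod_{k=1}^l\Big(\sum_{m\ge0}q_m z_k^m\Big)
\]
for a strict partition $\lambda=(\lambda_1>\cdots>\lambda_l>0)$. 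I would prove this by matching with the recursive definition of $Q_\lambda$: for $l=1,2$ it is a direct expansion (giving $q_{\lambda_1}$, resp.\ $q_{\lambda_1}q_{\lambda_2}-2q_{\lambda_1+1}q_{\lambda_2-1}+\cdots+(-1)^{\lambda_2}2q_{\lambda_1+\lambda_2}$, using $\frac{z_1-z_2}{z_1+z_2}=1+2\sum_{r\ge1}(-1)^r(z_2/z_1)^r$); for even $l\ge4$ one applies Schur's Pfaffian identity $\prod_{i<j}\frac{z_i-z_j}{z_i+z_j}=\Pf\!\big(\frac{z_i-z_j}{z_i+z_j}\big)_{1\le i,j\le l}$ to rewrite the series as $\Pf\!\big((\sum_m q_m z_i^m)(\sum_m q_m z_j^m)\frac{z_i-z_j}{z_i+z_j}\big)_{i,j}$, and, since each $z_k$ occurs in exactly one factor of every perfect matching, extracting $\prod_k z_k^{\lambda_k}$ term by term produces $\Pf(Q_{(\lambda_i,\lambda_j)})_{i,j}=Q_\lambda$; for odd $l$ one adjoins a variable $z_{l+1}$ carrying exponent $0$, reduces to the even case for $\mu=(\lambda_1,\dots,\lambda_l,0)$, and expands that Pfaffian along the last row using $Q_{(m,0)}=q_m$ to recover the expression $q_{\lambda_1}Q_{(\lambda_2,\dots,\lambda_l)}-q_{\lambda_2}Q_{(\lambda_1,\lambda_3,\dots,\lambda_l)}+\cdots$ defining $Q_\lambda$.

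The main obstacle is this last, purely combinatorial step: reconciling the monomial extraction from $\Pf\!\big((\sum q_m z_i^m)(\sum q_m z_j^m)\frac{z_i-z_j}{z_i+z_j}\big)$ with the Pfaffian/Laplace definition of $Q_\lambda$, and in particular verifying that the expansion conventions (always $|z_i|>|z_j|$ for $i<j$, compatible with $\lambda_i>\lambda_j$) together with the antisymmetry $Q_{(n,m)}=-Q_{(m,n)}$ and $Q_{(m,m)}=0$ make the two agree on the nose. The remaining ingredients — the vertex-operator reordering, the tracking of the factors $\sqrt2$, $\omega$ and the role of $\phi_0$ for odd $l$, and the substitution $t_n=2p_n/n$ — are routine. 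One can also avoid vertex operators entirely: from $[H_n,\phi_m]=\phi_{m-n}$ for odd $n$ one obtains $e^{H_+(\bm t)}\phi(z)e^{-H_+(\bm t)}=e^{\xi(\bm t,z)}\phi(z)$, whence $\langle i|e^{H_+}\phi(z_1)\cdots\phi(z_l)|\alpha(\lambda)\rangle$ equals $\prod_k e^{\xi(\bm t,z_k)}$ times a vacuum expectation value which Wick's theorem and Schur's Pfaffian identity evaluate (up to a power of $\sqrt2$ absorbing the $\phi_0$ when $l$ is odd) to $2^{-l/2}\prod_{i<j}\frac{z_i-z_j}{z_i+z_j}$ — giving the same closed formula.
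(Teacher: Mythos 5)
The paper offers no proof of this statement to compare against: it is quoted directly from You's work \cite{yo}. Your argument is correct, and it is essentially the standard vertex-operator derivation of that result. The normalizations $\sigma_B(|0\rangle)=1$ and $\sigma_B(\sqrt2\,\phi_0|0\rangle)=\omega$ are right, the single contraction $[-\xi(\tilde\pd,z^{-1}),\xi(\bm t,w)]=\log\frac{z-w}{z+w}$ is computed correctly, and iterating \eqref{eq-bfcor-ferm} does yield
\begin{equation*}
\sigma_B\big(\phi(z_1)\cdots\phi(z_l)|\alpha(\lambda)\rangle\big)=2^{-l/2}\prod_{1\le i<j\le l}\frac{z_i-z_j}{z_i+z_j}\cdot\prod_{k=1}^l e^{\xi(\bm t,z_k)},
\end{equation*}
with the powers of $\omega$ and $\sqrt2$ and the role of $\phi_0$ handled correctly in both parities of $l$. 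The substitution $t_n=2p_n/n$ giving $e^{\xi(\bm t,z)}=\prod_i\frac{1+x_iz}{1-x_iz}=\sum_m q_mz^m$ matches the paper's definition of $q_m$, so the theorem does reduce to the classical coefficient identity you state. Your plan for that identity also goes through: Schur's Pfaffian evaluation of $\prod_{i<j}\frac{z_i-z_j}{z_i+z_j}$, absorbing one factor $Q(z_i)Q(z_j)$ into each entry (legitimate because every variable occurs in exactly one pair of each perfect matching), and matching-by-matching extraction produce $\Pf\big(Q_{(\lambda_i,\lambda_j)}\big)$. The point you flag as the main obstacle is real but benign: since the entry $(i,j)$ with $i<j$ is expanded in $|z_i|>|z_j|$ and $\lambda_i>\lambda_j$, the extracted coefficient is exactly the paper's $Q_{(\lambda_i,\lambda_j)}$, and the antisymmetric extension $Q_{(n,m)}=-Q_{(m,n)}$ is precisely what the sub-diagonal Pfaffian entries require. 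For odd $l$, the constant-term extraction in the auxiliary variable $z_{l+1}$ is valid because every factor involving $z_{l+1}$ has only nonnegative powers of it, and the Pfaffian expansion along the last column with $Q_{(m,0)}=q_m$ recovers the paper's recursive definition. The alternative route you mention at the end, conjugating $\phi(z)$ by $e^{H_+(\bm t)}$ via $[H_n,\phi_m]=\phi_{m-n}$ and evaluating the resulting vacuum expectation value by Wick's theorem together with Schur's identity, is equally valid and is in fact closer in spirit to the computations the paper carries out in \S\ref{sec-npt-disconn}.
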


\subsection{BKP tau-functions and their affine coordinates}
\label{sec-pre-affine}

The BKP hierarchy is introduced by Kyoto School in \cite{djkm}.
A tau-function $\tau = \tau(\bm t)$ of the BKP hierarchy
is the image of a vector $e^g|0\rangle \in \cF_0$ under the boson-fermion correspondence:
\begin{equation*}
\tau(\bm t) = \langle 0| e^{H_+(\bm t)} e^g |0\rangle,
\end{equation*}
where $\bm t = (t_1,t_3,t_5,\cdots)$
and $g$ is of the form
$g=\sum_{m,n\in \bZ} c_{m,n}:\phi_m\phi_n:$
such that
\be
\label{eq-constraints-goinfty}
c_{m,n}=0,\qquad \text{ for $|m-n|>>0$}.
\ee

An alternative way to express a BKP tau-function
(or more precisely,
an element in the big cell of the isotropic Grassmannian)
in the fermionic Fock space
is to use Bogoliubov transforms which involves only fermionic creators.
Now we recall this approach
(see \cite[\S 7]{hb} for details).
Consider the following vector in $\cF_B$:
\be
\label{eq-Bog-tau}
|A\rangle := e^A |0\rangle \in \cF_B^0,
\ee
where
\be
\label{eq-bog-A}
A= \sum_{n,m\geq 0} a_{n,m} \phi_m \phi_n,
\qquad a_{n,m}\in \bC,
\ee
are quadratic in the fermionic creators.
(Here we do not impose constraints such like \eqref{eq-constraints-goinfty} on the coefficients.)
Recall that for $n,m\geq 0$,
one always has
$\phi_m \phi_n = -\phi_n\phi_m$
unless $n=m=0$,
thus we will always assume that:
\be
\label{eq-bog-A-cond}
a_{n,m} = -a_{m,n},
\qquad \forall n,m\geq 0.
\ee
In particular,
$a_{n,n}=0$ for every $n\geq 0$.
For an operator $A$ of this form,
the following identity will be useful
(which can  be proved using the Baker-Campbell-Hausdorff formula,
see e.g. \cite[\S 7.3.4]{hb} for details):
\be
\label{eq-lem-conj}
e^{-A} \phi_i e^A
= \begin{cases}
\phi_i, & \text{ $i>0$;}\\
\phi_i+\sum_{m\geq 0} 2(-1)^i
(-a_{-i,m}+a_{-i,0} a_{0,m})\phi_m, & \text{ $i\leq 0$}.
\end{cases}
\ee

One can expand the exponential $\exp(A)$ and represent the vector $e^A|0\rangle\in \cF_B^0$ in terms of the basis
$\{|\mu\rangle\}_{\mu\in DP}$ for $\cF_B^0$.
Given an arbitrary strict partition $\mu\in DP$,
we can always regard it as a partition of even length.
That means,
if $\mu=(\mu_1>\cdots>\mu_k > 0)$ where $k$ is odd,
then we will assign an additional summand $\mu_{k+1}:=0$ to $\mu$.
Denote by $\tilde l(\mu) \in 2\bZ$ this modified length of $\mu$:
\begin{equation*}
\tilde l ( \mu ) = \begin{cases}
l(\mu), & \text{ if $l(\mu)$ is even;}\\
l(\mu)+1, & \text{ if $l(\mu)$ is odd,}\\
\end{cases}
\end{equation*}
Then one has:
\be
\label{eq-bog-expand}
e^A |0\rangle = \sum_{i=0}^\infty \frac{1}{i!}
\big( \sum_{n,m\geq 0} a_{n,m} \phi_m \phi_n \big)^i
|0\rangle
= \sum_{\mu\in DP:\text{ $l(\mu)$ even}} c_\mu\cdot |\mu\rangle.
\ee
The coefficients $c_\mu$ are:
\be
\label{eq-Cartan-coord}
c_\mu = (-2)^{\tilde l(\mu)/2} \cdot \Pf(a_{\mu_i,\mu_j})_{1\leq i,j\leq \tilde l(\mu)},
\ee
where $\Pf(a_{\mu_i,\mu_j})$ is the Pfaffian of this anti-symmetric matrix
of size $\tilde l(\mu) \times \tilde l(\mu)$.
This is a straightforward consequence of Wick's Theorem.

Now one consider the image of a Bogoliubov transformation of the above form
under the boson-fermion correspondence.
Denote:
\be
\label{eq-def-tauA}
\tau_A := \sigma_B (e^A|0\rangle) =\langle0| e^{H_+(\bm t)}e^A|0\rangle \in \bC[\![p_1,p_3,p_5,\cdots]\!],
\ee
then by \eqref{eq-bog-expand} and Theorem \ref{eq-bf-schur} we have the following expansion by Schur Q-functions:
\be
\begin{split}
\tau_A =& \sum_{\mu\in DP} c_\mu \cdot 2^{-\tilde l(\mu)/2} Q_\mu(\bm x)\\
=& \sum_{\mu \in DP} (-1)^{\tilde l(\mu)/2} \cdot
\Pf (a_{\mu_i,\mu_j})_{1\leq i,j\leq \tilde l(\mu)} \cdot Q_\mu(\bm x),
\end{split}
\ee
where the time-variables are taken to be
$t_n = \frac{2p_n}{n} =\frac{2}{n}\sum x_i^n$ for every odd $n>0$ in the boson-fermion correspondence.
The first a few terms of $\tau_A$ are:
\begin{equation*}
\begin{split}
\tau_A =& 1+ \sum_{n>0}a_{0,n}\cdot Q_{(n)}(\bm x)
+\sum_{m>n>0} a_{n,m}\cdot Q_{(m,n)}(\bm x)\\
& + \sum_{m>n>l>0} (a_{n,m}a_{0,l} -a_{l,m}a_{0,n}+ a_{0,m}a_{l,n})Q_{(m,n,l)} (\bm x)\\
& +\sum_{m>n>l>k>0} (a_{n,m}a_{k,l} -a_{l,m}a_{k,n}+ a_{k,m}a_{l,n})Q_{(m,n,l,k)} (\bm x) +\cdots.
\end{split}
\end{equation*}

The function $\tau_A$ is a tau-function of the BKP hierarchy.
Moreover, a formal power series tau-function $\tau=\tau(\bm t)$
with constant term $\tau(0)=1$ can be uniquely represented as $\tau=\tau_A$ for an operator $A$
of the form \eqref{eq-bog-A} satisfying the anti-symmetric condition \eqref{eq-bog-A-cond}
(see \cite[Theorem 7.3.7]{hb}).
In fact,
the coefficient $a_{n,m}$ is exactly the coefficient of $Q_{(m,n)}$
in the Schur Q-function expansion of $\tau$
for $m> n\geq 0$.
The coefficients $\{a_{n,m}\}_{n,m\geq 0}$ are called the affine coordinates of this tau-function.
In the rest of this paper,
we will always assume $\tau(0) =1$.

\subsection{Isotropic Sato Grassmannian}

The affine coordinates $\{a_{n,m}\}$ discussed above provide a natural choice of coordinates on the big-cell of
the isotropic Sato Grassmannian associated to the BKP hierarchy.
Here we end this section by giving a brief review of the isotropic Grassmannian.
See e.g. \cite[\S 7]{hb} for an introduction.

Let $\cH =  \cH_+ \oplus \cH_-$,
where $\cH_+ := \text{span}\{z^i\}_{i\geq 0}$ and $\cH_- := \text{span}\{z^i\}_{i< 0}$,
then $\{e_j:=z^{-j-1}\}_{j\in \bZ}$ form a basis for $\cH$.
Let $\{\tilde e^i\}_{i\in \bZ}$ be the dual basis for $\cH^*$,
and define $\cH_\phi \subset \cH\oplus \cH^*$ to be the linear subspace
spanned by $\{e_i^0\}_{i\in \bZ}$,
where:
\begin{equation*}
e_i^0 := \frac{1}{\sqrt{2}} (e_i + (-1)^i \tilde e^{-i}).
\end{equation*}
Let $Q_\phi :\cH_\phi\times \cH_\phi \to \bC$ be the nondegenerate symmetric bilinear form on $\cH_\phi$ satisfying
$Q_\phi(e_i^0,e_j^0) = (-1)^{i+j} \delta_{i,-j}$.
Then the subspace
$\cH_\phi^0 := \text{span} \{e_i^0\}_{i<0} \subset \cH_\phi$
is maximally totally isotropic with respect to $Q_\phi$.
The Grassmannian $Gr_{\cH_\phi^0}^0 (\cH_\phi)$ is defined to be the orbit of $\cH_\phi^0$
under the action of the orthogonal group:
\begin{equation*}
O(\cH_\phi,Q_\phi) := \{g\in GL(\cH_\phi) \big| Q_\phi(gu,gv) =Q_\phi(u,v), \forall u,v \in \cH_\phi\}.
\end{equation*}

Let $w^0 \in Gr_{\cH_\phi^0}^0 (\cH_\phi)$ be a maximally isotropic subspace of $\cH_\phi$,
and assume that $w^0 = \text{span} \{w_i\}_{i>0}$.
Then one can associate an element $Ca(w^0)$
in the projectivization $\bP (\cF_B)$ of the fermionic Fock space to $w^0$ by defining
$Ca(w^0) := [\prod_{i>0} \phi_{w_i} |0\rangle]$,
where for $w_i = \sum_{j\in \bZ} c_i^j e_j^0 \in \cH_\phi$
we denote $\phi_{w_i} = \sum_{j\in \bZ} c_i^j \phi_j$.
This defines a map
\begin{equation*}
Ca:
Gr_{\cH_\phi^0}^0 (\cH_\phi)\to \bP (\cF_B),\qquad
w^0 \mapsto Ca(w^0),
\end{equation*}
which is called the Cartan map.
It is the infinite-dimensional version of the map introduced by Cartan \cite{ca}.
The image of an element in this Grassmannian under the Cartan map is of the form
(up to projectivization):
\be
\label{eq-image-cartan}
\sum_{\mu \in DP} \kappa_\mu \cdot |\mu\rangle,
\ee
where the coefficients $\{\kappa_\mu\}$
satisfy the so-called Cartan relations
(which are the analogue of the Pl\"ucker relations on the ordinary Grassmannian).
Moreover,
the Cartan relations are equivalent to the BKP Hirota bilinear relations,
thus this fermionic vector becomes a BKP tau-function under the boson-fermion correspondence.
The coefficients $\{\kappa_\mu\}$ are called the Cartan coordinates of this tau-function.
When $w^0$ lies in a certain subspace (called the big-cell) of the isotropic Grassmannian,
the fermionic vector corresponding to this BKP tau-function can be uniquely represented as
a Bogoliubov transform of the form \eqref{eq-Bog-tau},
and its Cartan coordinates $\{\kappa_\mu\}$ are given by the Pfaffians $\Pf(a_{\mu_i,\mu_j})_{1\leq i,j\leq \tilde l(\mu)}$
of the affine coordinates $\{a_{n,m}\}$.
In particular,
$a_{n,m}$ is exactly the Cartan coordinate indexed by the strict partition $\mu=(m>n\geq 0)$.
For details,
see the book \cite[Theorem 7.1.1; \S 7.3]{hb}.

\section{Computation of Fermionic and Bosonic $N$-Point Functions}
\label{sec-npt-disconn}

In this section,
we compute the fermionic and bosonic $n$-point functions associated to a tau-function of the BKP hierarchy.
We represent the results in terms of the generating series of the affine coordinates $\{a_{n,m}\}_{n,m\geq 0}$.

\subsection{Bosonic and fermionic $n$-point functions}

Let $n\geq 1$ be a positive integer,
and let $A$ be an operator of the form \eqref{eq-bog-A} satisfying the condition \eqref{eq-bog-A-cond}.
Similar to the case of KP tau-functions (see \cite[\S 4]{zhou1}),
here we consider the bosonic $n$-point functions associated to
a BKP tau-function $\tau_A$ of the form \eqref{eq-def-tauA}:
\be
\langle H(z_1)\cdots H(z_n) \rangle_A := \langle 0| H(z_1)\cdots H(z_n) e^A |0\rangle,
\ee
where $z_1,\cdots,z_n$ are some formal variables, and
\be
H(z)= \sum_{n\in \bZ: \text{ odd}} H_n z^{-n}
\ee
is the generating series of the bosons $H_n$ defined by \eqref{eq-def-boson}.

Our goal in this section is to represent the bosonic $n$-point functions in terms of
the affine coordinates $\{a_{n,m}\}_{n,m\geq 0}$.
In order to do that,
we will need to compute the following fermionic $n$-point functions first:
\be
\langle \phi(z_1)\cdots \phi(z_n) \rangle_A := \langle 0| \phi(z_1)\cdots \phi(z_n) e^A |0\rangle,
\ee
where $\phi(z)$ is the generating series \eqref{eq-ferm-field} of neutral fermions.

\subsection{Fermionic $2$-point function in terms of affine coordinates}

In this subsection,
we derive a formula for the fermionic $2$-point function $\langle \phi(w)\phi(z) \rangle_A$
in terms of the generating series of the affine coordinates.

First denote:
\begin{equation*}
\phi(w)_+ = \sum_{i> 0} w^i \phi_i,
\qquad\qquad
\phi(w)_- = \sum_{i<0} w^i \phi_i,
\end{equation*}
then one has $\phi(w)=\phi(w)_+ + \phi_0 + \phi(w)_-$,
and:
\be
\label{eq-ferm-2pt-decomp}
\begin{split}
\phi(w)\phi(z) =&
\phi(w)_+ \phi(z)_+ +
\phi(w)_+ \phi(z)_- +
\phi(w)_- \phi(z)_+ +
\phi(w)_- \phi(z)_-\\
&+ \phi(w)_+\phi_0 + \phi(w)_-\phi_0
+\phi_0\phi(z)_+ + \phi_0\phi(z)_- +\half.
\end{split}
\ee
Now we compute the right-hand side of \eqref{eq-ferm-2pt-decomp} term by term.
Since $\phi_-(z)|0\rangle=0$,
and $\langle 0 | \phi_{i_1}\cdots \phi_{i_s} |0\rangle = 0$ unless
$s$ is even and $i_1,\cdots,i_s$ contains an equal number of positive and negative integers,
we easily see:
\begin{equation*}
\begin{split}
&\langle \phi_+(w) \phi_+(z) \rangle_A =0,\qquad
\langle \phi_+(w) \phi_-(z) \rangle_A =0,\\
&\langle \phi(w)_+ \phi_0 \rangle_A
= \langle \phi_0 \phi(z)_+ \rangle_A =0,
\end{split}
\end{equation*}
and
\begin{equation*}
\langle \phi_-(w) \phi_+(z) \rangle_A
= \langle \sum_{j<0, i> 0} w^jz^i \phi_j\phi_i \rangle_A
= -i_{w,z} \frac{z}{z+w},
\end{equation*}
where
\be
i_{w,z} \frac{z}{z+w} := \sum_{i=1}^\infty (-1)^{i+1} w^{-i} z^i.
\ee
Moreover, we have:
\begin{equation*}
\begin{split}
\langle \phi_-(w) \phi_-(z) \rangle_A
=& \langle \sum_{i,j<0} w^j z^i \phi_j\phi_i \cdot
\Big(\sum_{m,n> 0} a_{n,m} \phi_m\phi_n
\Big)\rangle \\
=& \sum_{n,m>0} (-1)^{n+m}a_{n,m} (w^{-n}z^{-m}-w^{-m}z^{-n})\\
=& \sum_{n,m>0} (-1)^{n+m} \cdot 2a_{n,m} w^{-n} z^{-m},
\end{split}
\end{equation*}
where in the last step we have used the anti-symmetry property $a_{n,m}=-a_{m,n}$.
\begin{Remark}
In the expansion of $e^A$
there are terms of the form $a_{n,0}a_{m,0}\phi_0\phi_n\phi_0\phi_m$.
However,
the total contribution of all such terms to $\langle \phi_-(w) \phi_-(z) \rangle_A$
turns out to be zero  due to the anti-commutation relation of $\{\phi_i\}_{i>0}$
and the anti-symmetry property $a_{n,m}=-a_{m,n}$.
In fact, one has:
\begin{equation*}
\begin{split}
& \sum_{n,m> 0}
a_{n,0}a_{m,0}\phi_0\phi_n\phi_0\phi_m
= \sum_{n,m> 0}
(-\half a_{n,0}a_{m,0}\cdot
 \phi_n\phi_m ),\\
&\sum_{n,m> 0}
a_{0,n}a_{m,0}\phi_n\phi_0\phi_0\phi_m
=-\sum_{n,m> 0} \half a_{n,0}a_{m,0}\phi_n\phi_m
= -\sum_{n,m> 0} \half a_{m,0}a_{n,0}\phi_m\phi_n,
\end{split}
\end{equation*}
where in the second step of the second equality we have exchanged the indices $m,n$.
Therefore the total contribution of
\begin{equation*}
\sum_{n,m> 0}
a_{n,0}a_{m,0}\phi_0\phi_n\phi_0\phi_m
+
\sum_{n,m> 0}
a_{0,n}a_{m,0}\phi_n\phi_0\phi_0\phi_m
\end{equation*}
to $\langle \phi_-(w) \phi_-(z) \rangle_A$ is:
\begin{equation*}
-\half \langle \sum_{i,j<0} w^j z^i \phi_j\phi_i \cdot
\sum_{n,m> 0} a_{n,0} a_{m,0}(\phi_n\phi_m + \phi_m\phi_n)
\rangle =0.
\end{equation*}
Similarly,
the total contribution of
\begin{equation*}
\sum_{n,m>0}a_{n,0}a_{0,m}\phi_0\phi_n\phi_m\phi_0
+\sum_{n,m>0} a_{0,n}a_{0,m}\phi_n\phi_0\phi_m\phi_0
\end{equation*}
is also zero,
thus $\langle \phi_-(w) \phi_-(z) \rangle_A$ does not contain terms of the form
$a_{n,0}a_{m,0}$.
\end{Remark}

Finally, we have:
\begin{equation*}
\begin{split}
&\langle \phi(w)_- \phi_0 \rangle_A
=\langle
\sum_{i<0} w^i\phi_i\phi_0 \cdot \sum_{n>0} \big(
a_{n,0}\phi_0\phi_n +a_{0,n} \phi_n\phi_0\big) \rangle_A
= \sum_{n>0}(-1)^n a_{n,0} w^{-n},\\
&\langle \phi_0 \phi(z)_- \rangle_A
=\langle
\sum_{i<0} z^i\phi_0\phi_i \cdot \sum_{n>0} \big(a_{n,0}\phi_0\phi_n
+a_{0,n}\phi_n\phi_0 \big) \rangle_A
= -\sum_{n>0}(-1)^n a_{n,0} z^{-n},
\end{split}
\end{equation*}
since $\phi_0^2 = \half$.
Thus by \eqref{eq-ferm-2pt-decomp} we conclude that:
\begin{Proposition}
\label{prop-ferm-2pt}
The fermionic $2$-point function is given by:
\be
\langle \phi(w)\phi(z) \rangle_A = -2A(w,z)
+ i_{w,z}\frac{w-z}{2(w+z)},
\ee
where $A(w,z)$ is the following generating series of $\{a_{n,m}\}$:
\be
\label{eq-def-seriesA}
A(w,z) = \sum_{n,m>0} (-1)^{m+n+1} \cdot a_{n,m} w^{-n} z^{-m}
-\half \sum_{n>0}(-1)^n a_{n,0} (w^{-n}-z^{-n}),
\ee
and
\be
i_{w,z}\frac{w-z}{2(w+z)} = \half -
i_{w,z} \frac{z}{z+w} = \half + \sum_{i=1}^\infty (-1)^{i} w^{-i} z^i.
\ee
\end{Proposition}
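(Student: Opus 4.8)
The plan is to start from the decomposition \eqref{eq-ferm-2pt-decomp} of $\phi(w)\phi(z)$ into its nine summands and to evaluate $\langle\,\cdot\,\rangle_A$ on each of them. The first step is a charge selection rule. Since $e^A|0\rangle$ is a (formal) linear combination of products of the creators $\phi_m$ with $m\geq 0$ acting on $|0\rangle$, and since the vacuum expectation value of an odd number of neutral fermions vanishes while Wick's Theorem reduces an even one to a signed sum of two-point blocks $\langle 0|\phi_i\phi_j|0\rangle$, which is nonzero only for $i=j=0$ or $i=-j<0$, any word whose leftmost neutral fermion carries strictly positive index pairs to zero. This immediately kills $\langle\phi(w)_+\phi(z)_+\rangle_A$, $\langle\phi(w)_+\phi(z)_-\rangle_A$ and $\langle\phi(w)_+\phi_0\rangle_A$; and it kills $\langle\phi_0\phi(z)_+\rangle_A$ as well, because the positive-index fermion from $\phi(z)_+$ can pair neither with $\phi_0$ on its left nor with a creator on its right. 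The constant term $\half$ contributes $\half\langle 0|e^A|0\rangle=\half$ since $\tau_A(0)=1$.

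Next I would evaluate the four surviving terms $\langle\phi(w)_-\phi(z)_+\rangle_A$, $\langle\phi(w)_-\phi(z)_-\rangle_A$, $\langle\phi(w)_-\phi_0\rangle_A$ and $\langle\phi_0\phi(z)_-\rangle_A$ by expanding $e^A|0\rangle=\sum_{k\geq 0}\frac{1}{k!}A^k|0\rangle$ and isolating the power of $A$ that survives the charge count. For $\langle\phi(w)_-\phi(z)_+\rangle_A$ only the $k=0$ term survives, and $\langle 0|\phi_{-i}\phi_i|0\rangle=(-1)^i$ gives $\sum_{i\geq 1}(-1)^i w^{-i}z^i=-i_{w,z}\frac{z}{z+w}$. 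For the other three only the $k=1$ term survives, i.e.\ the linear piece $A|0\rangle=\sum_{n,m\geq 0}a_{n,m}\phi_m\phi_n|0\rangle$; a short Wick computation of four-fermion blocks --- tracking the signs of the three pairings and the factor $\half$ produced by $\phi_0^2$ in the pieces touching $\phi_0$, together with $a_{n,m}=-a_{m,n}$ --- then yields $\langle\phi(w)_-\phi(z)_-\rangle_A=2\sum_{n,m>0}(-1)^{n+m}a_{n,m}w^{-n}z^{-m}$, $\langle\phi(w)_-\phi_0\rangle_A=\sum_{n>0}(-1)^n a_{n,0}w^{-n}$ and $\langle\phi_0\phi(z)_-\rangle_A=-\sum_{n>0}(-1)^n a_{n,0}z^{-n}$.

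The step that needs genuine care --- and the one I expect to be the main obstacle --- is justifying that the $k\geq 2$ terms of $e^A$ contribute nothing to $\langle\phi(w)_-\phi(z)_-\rangle_A$ and to the two $\phi_0$-pieces. The only contractions not killed immediately by the charge count are those in which $\phi(w)_-$ and $\phi(z)_-$ (or $\phi_0$) absorb two of the creators produced by $A^k$ while the remaining $2k-2$ creators contract among themselves; this forces the leftover creators to carry index $0$, and after using $[\phi_n,\phi_m]_+=0$ for $n,m>0$ they organize into pairs of the shape $a_{n,0}a_{m,0}\phi_0\phi_n\phi_0\phi_m$ and $a_{0,n}a_{m,0}\phi_n\phi_0\phi_0\phi_m$ (and the two analogues with trailing $\phi_0$'s) whose total vanishes by $a_{n,m}=-a_{m,n}$. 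This is exactly the cancellation recorded in the Remark above, and it is the only part of the argument that is not pure bookkeeping.

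Once this is settled, I would assemble the four surviving terms together with the constant $\half$ and compare with the definition \eqref{eq-def-seriesA} of $A(w,z)$: since $-2\cdot(-1)^{m+n+1}=2(-1)^{m+n}$ and $-2\cdot(-\half)=1$, the sum is $2\sum_{n,m>0}(-1)^{n+m}a_{n,m}w^{-n}z^{-m}+\sum_{n>0}(-1)^n a_{n,0}(w^{-n}-z^{-n})+\half+\sum_{i\geq 1}(-1)^i w^{-i}z^i$, which is precisely $-2A(w,z)+i_{w,z}\frac{w-z}{2(w+z)}$, the claimed formula.
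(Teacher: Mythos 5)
Your proposal is correct and follows essentially the same route as the paper: the same nine-term decomposition \eqref{eq-ferm-2pt-decomp}, the same term-by-term evaluation via Wick's theorem, and the same cancellation of the $a_{n,0}a_{m,0}$-type contributions from higher powers of $A$ (which the paper isolates in a Remark, exactly the point you flag as needing care). The final assembly and sign bookkeeping also match the paper's.
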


In what follows,
we will also use the following notation:
\be
\label{eq-def-Ahat}
\wA(w,z) := -\half \langle \phi(w)\phi(z) \rangle_A=  A(w,z)-i_{w,z}\frac{w-z}{4(w+z)}.
\ee

\begin{Remark}
The assumption $a_{n,m}=-a_{m,n}$ implies that $A(w,z)$ is anti-symmetric:
\be
A(w,z) = -A(z,w).
\ee
\end{Remark}

\subsection{Fermionic $n$-point functions for general $n$}

In this subsection we compute the fermionic $n$-point functions
$\langle \phi(z_1) \phi(z_2)\cdots \phi(z_n) \rangle_A$ for general $n$.

It is clear that $\langle \phi(z_1) \cdots \phi(z_n) \rangle_A=0$ if $n$ is odd.
And if $n=2s$ is even,
\begin{equation*}
\begin{split}
\langle \phi(z_1) \cdots \phi(z_{2s}) \rangle_A
=& \sum_{i_1,\cdots,i_n}
z_1^{i_1}\cdots z_{2s}^{i_{2s}}\langle 0| \phi_{i_1}\phi_{i_2}\cdots\phi_{i_{2s}} e^A|0\rangle\\
=& \sum_{i_1,\cdots,i_n}
z_1^{i_1}\cdots z_{2s}^{i_{2s}}\langle 0| (e^{-A}\phi_{i_1}e^A)(e^{-A}\phi_{i_2}e^A)\cdots (e^{-A}\phi_{i_{2s}} e^A)|0\rangle
\end{split}
\end{equation*}
since $\langle 0|e^{-A} = \langle 0|$.
By \eqref{eq-lem-conj} we know that $e^{-A}\phi_k e^A$ is a linear combination
of the neutral fermions $\{\phi_i\}_{i\in \bZ}$,
thus we can apply Wick's theorem and get:
\begin{equation*}
\begin{split}
&\langle 0| (e^{-A}\phi_{i_1}e^A)(e^{-A}\phi_{i_2}e^A)\cdots (e^{-A}\phi_{i_{2s}} e^A)|0\rangle\\
=&\sum_{\substack{(p_1,q_1,\cdots,p_s,q_s)\\p_k<q_k, \quad p_1<\cdots<p_s}}
\text{sgn}(p,q)\cdot \prod_{j=1}^s \langle 0| (e^{-A}\phi_{i_{p_j}}e^A)(e^{-A}\phi_{i_{q_j}}e^A) |0\rangle\\
=&\sum_{\substack{(p_1,q_1,\cdots,p_s,q_s)\\p_k<q_k, \quad p_1<\cdots<p_s}}
\text{sgn}(p,q)\cdot \prod_{j=1}^s \langle \phi_{i_{p_j}}\phi_{i_{q_j}}\rangle_A,
\end{split}
\end{equation*}
where $(p_1,q_1,\cdots,p_s,q_s)$ is a permutation of $(1,2,\cdots,2s)$ and
$\text{sgn}(p,q)$ is the sign of this permutation.
Therefore,
\be
\label{eq-Wick-ferm}
\langle \phi(z_1) \cdots \phi(z_{2s}) \rangle_A
=\sum_{\substack{(p_1,q_1,\cdots,p_s,q_s)\\p_k<q_k, \quad p_1<\cdots<p_s}}
\text{sgn}(p,q)\cdot \prod_{j=1}^s \langle \phi(z_{p_j})\phi(z_{q_j})\rangle_A.
\ee
This is equivalent to say that $\langle \phi(z_1) \cdots \phi(z_{2s}) \rangle_A$ equals to
the Pfaffian of the anti-symmetric matrix of size $2s \times 2s$
whose upper-triangular part is given by the fermionic two-point functions
$(\langle\phi(z_i)\phi(z_j)\rangle_A)$ for $1\leq i<j\leq 2s$.
Then by Proposition \ref{prop-ferm-2pt},
we conclude that:
\begin{Proposition}
\label{prop-ferm-npt}
We have:
\be
\langle \phi(z_1)\cdots \phi(z_{n}) \rangle_A = \begin{cases}
0, & \text{ if $n$ is odd;}\\
\Pf \big(\wB_{ij}\big)_{1\leq i,j\leq n}, & \text{ if $n$ is even},
\end{cases}
\ee
where
\be
\wB_{ij} =
\begin{cases}
-2\wA (z_i,z_j), & \text{ if $i<j$;}\\
0, & \text{ if $i=j$;}\\
2\wA (z_j,z_i),  & \text{ if $i>j$,}\\
\end{cases}
\ee
and $\wA(w,z)$ is the generating series \eqref{eq-def-Ahat} of the affine coordinates.
\end{Proposition}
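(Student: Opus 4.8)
The plan is to reduce the fermionic $n$-point function entirely to the two-point function computed in Proposition~\ref{prop-ferm-2pt}; as it happens the main computation is already in place in the discussion preceding the statement, so the remaining work is identification and bookkeeping. First I would dispose of the odd case: since the operator $A$ in \eqref{eq-bog-A} is quadratic in the creators $\{\phi_m\}_{m\ge 0}$, the Bogoliubov state $e^A|0\rangle$ lies in $\cF_B^0$, so $\phi(z_1)\cdots\phi(z_n)e^A|0\rangle$ carries $\bZ/2$-parity $n\bmod 2$, and its pairing against the even covector $\langle 0|$ vanishes when $n$ is odd. Equivalently, every coefficient $\langle 0|\phi_{i_1}\cdots\phi_{i_n}e^A|0\rangle$ is zero for odd $n$, since vacuum expectations of odd products of neutral fermions vanish.

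For even $n=2s$ I would start from \eqref{eq-Wick-ferm}, which already writes $\langle\phi(z_1)\cdots\phi(z_{2s})\rangle_A$ as the signed sum $\sum_{(p,q)}\text{sgn}(p,q)\prod_{j=1}^{s}\langle\phi(z_{p_j})\phi(z_{q_j})\rangle_A$ over perfect matchings of $\{1,\dots,2s\}$ normalized by $p_k<q_k$ and $p_1<\cdots<p_s$. The only step is to recognize this signed sum as the Pfaffian of the antisymmetric $2s\times 2s$ matrix whose strict upper triangle consists of the entries $\langle\phi(z_i)\phi(z_j)\rangle_A$ for $i<j$ — this is precisely the combinatorial definition of that Pfaffian, and it is the natural BKP counterpart of the determinantal Wick formula in the KP setting. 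Then I would substitute Proposition~\ref{prop-ferm-2pt} together with the definition \eqref{eq-def-Ahat}, i.e.\ $\langle\phi(z_i)\phi(z_j)\rangle_A=-2\wA(z_i,z_j)$: the upper-triangular entry becomes $-2\wA(z_i,z_j)$, and filling in the rest of the matrix antisymmetrically gives $2\wA(z_j,z_i)$ for $i>j$ and $0$ on the diagonal, which is exactly the matrix $(\wB_{ij})$ of the statement. This finishes the even case.

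The one genuinely substantive ingredient — the step I would flag as the main (if mild) obstacle — is the justification of the Wick reduction that yields \eqref{eq-Wick-ferm}. Inserting $e^{-A}e^A=1$ between consecutive factors and using $\langle 0|e^{-A}=\langle 0|$ (which holds because $\langle 0|A=0$, as $A$ is built from creators) rewrites $\langle 0|\phi_{i_1}\cdots\phi_{i_{2s}}e^A|0\rangle$ as $\langle 0|(e^{-A}\phi_{i_1}e^A)\cdots(e^{-A}\phi_{i_{2s}}e^A)|0\rangle$. By \eqref{eq-lem-conj} each dressed operator $e^{-A}\phi_i e^A$ is a (possibly infinite) linear combination of the $\phi_j$, and since conjugation by $e^A$ is an algebra automorphism it preserves the anticommutation relations \eqref{eq-anticomm}; hence the dressed fermions are again a family of free fermions to which Wick's theorem applies verbatim, their pairwise contractions being exactly $\langle\phi_{i_p}\phi_{i_q}\rangle_A$. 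Summing over $i_1,\dots,i_{2s}$ and repackaging gives \eqref{eq-Wick-ferm}.

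Two small points deserve to be recorded in the write-up. All identities above live in the ring of formal power series in $z_1,\dots,z_{2s}$, so the infinite sums in \eqref{eq-lem-conj} are harmless: each coefficient is extracted by a finite computation and Wick's theorem is applied coefficient-by-coefficient. And one must keep the bookkeeping straight because $\wA(w,z)$ is \emph{not} antisymmetric under $w\leftrightarrow z$ — unlike $A(w,z)$ — owing to the formally expanded term $i_{w,z}\tfrac{w-z}{4(w+z)}$ in \eqref{eq-def-Ahat}; so throughout one must keep the arguments of $\wA$ in the order dictated by $i<j$ rather than attempting to symmetrize.
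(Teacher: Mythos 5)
Your proposal is correct and follows essentially the same route as the paper: vanishing by parity for odd $n$, then inserting $e^{-A}e^{A}$ and using $\langle 0|e^{-A}=\langle 0|$ together with \eqref{eq-lem-conj} to apply Wick's theorem, recognizing the signed sum over perfect matchings as the Pfaffian, and substituting Proposition \ref{prop-ferm-2pt}. Your closing caveat about the non-antisymmetry of $\wA$ is exactly the point the paper records in the remark following the proposition.
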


\begin{Remark}
Notice here we cannot take $B_{ij} = -2\wA(z_i,z_j)$ directly for $i>j$,
since in general $\wA(z_i,z_j) \not = -\wA(z_j,z_i)$.
In fact,
\begin{equation*}
\wA(w,z)+\wA(z,w) = -\half \sum_{n\in \bZ} \Big(\frac{-z}{w}\Big)^n = -\half \delta(-z,w),
\end{equation*}
where $\delta$ is the formal delta-function.
\end{Remark}

\subsection{Representing bosonic fields in terms of fermionic fields}

In this subsection,
we first recall the fact that the bosonic field $H(z)$ is the normal-ordered product of
two fermionic fields.
This relation will provide us a way to compute the bosonic $n$-point functions
using the above results about the fermionic $n$-point functions.
We deal with the simplest case $n=1$ in this subsection,
and compute $\langle H(z_1)\cdots H(z_n)\rangle_A$ for general $n$ in the next subsection.

By the definition \eqref{eq-def-boson} we see:
\be
\label{eq-ham-gen-1}
\begin{split}
H(z) =& \half \sum_{n\in \bZ:\text{ odd}} z^{-n}\Big(
\sum_{i\in \bZ} (-1)^{i+1} \phi_i\phi_{-i-n}\Big)\\
=& -\half
\sum_{n\in \bZ:\text{ odd}} \sum_{i\in \bZ}
(-z)^i\phi_i \cdot z^{-i-n}\phi_{-i-n}.
\end{split}
\ee
Notice that one can also define $H_{2k}$ for a nonzero integer $k$ using \eqref{eq-def-boson},
and the anti-commutation relation \eqref{eq-anticomm} implies:
\begin{equation*}
H_{2k}=0,\qquad \forall k\not= 0,
\end{equation*}
immediately.
Thus the equality \eqref{eq-ham-gen-1} can be rewritten as:
\be
\label{eq-ham-gen-2}
\begin{split}
H(z) =& -\half
\sum_{n\not= 0} \sum_{i\in \bZ}
(-z)^i\phi_i \cdot z^{-i-n}\phi_{-i-n}\\
=& -\half \sum_{n\not= 0} \sum_{i\in \bZ}
(-z)^i\cdot z^{-i-n} :\phi_i\phi_{-i-n}:,
\end{split}
\ee
since $:\phi_i\phi_j: =\phi_i\phi_j$ if $i+j\not=0$.
Moreover,
recall $:\phi_0^2: = \phi_0^2 -\langle 0|\phi_0^2|0\rangle=0$,
and
\begin{equation*}
\sum_{i\not=0} (-z)^{-i} z^i :\phi_{-i}\phi_i:
=\sum_{i>0} (-1)^i (\phi_{i}\phi_{-i} -\phi_{i}\phi_{-i})=0,
\end{equation*}
thus one can rewrite \eqref{eq-ham-gen-2} as follows:
\be
H(z) =-\half
\sum_{n\in \bZ} \sum_{i\in \bZ}
(-z)^i \cdot z^{-i-n} :\phi_i\phi_{-i-n}:,
\ee
or equivalently,
\be
\label{eq-H-normal}
H(z) =-\half
:\phi(-z)\phi(z):.
\ee

Now we are able to compute the bosonic $n$-point functions using the relation \eqref{eq-H-normal} and
the results for the fermionic $n$-point functions.
The normal-ordered product of two fermionic fields are given by the OPE:
\be
\label{eq-OPE-ferm}
:\phi(w)\phi(z) : =\phi(w) \phi(z) -i_{w,z}\frac{w-z}{2(w+z)},
\ee
thus:
\begin{equation*}
\begin{split}
\langle :\phi(w)\phi(z): \rangle_A
=& \langle \phi(w)\phi(z) \rangle_A -
\langle 0|  i_{w,z}\frac{w-z}{2(w+z)}\cdot e^A|0\rangle\\
=&\langle \phi(w)\phi(z) \rangle_A - i_{w,z}\frac{w-z}{2(w+z)}.
\end{split}
\end{equation*}
Then by Proposition \ref{prop-ferm-2pt} we obtain:
\begin{Lemma}
We have
$\langle :\phi(w)\phi(z): \rangle_A = -2A(w,z)$.
\end{Lemma}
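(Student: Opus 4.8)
The plan is to read the statement off directly from the operator product expansion \eqref{eq-OPE-ferm} together with the already-established formula for the fermionic two-point function in Proposition \ref{prop-ferm-2pt}. So this is a one-line cancellation rather than a genuine computation.

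First I would take the vacuum expectation value of both sides of \eqref{eq-OPE-ferm} against $e^A|0\rangle$. The second summand $i_{w,z}\frac{w-z}{2(w+z)}$ is a pure scalar (a formal series in $w^{-1}$ and $z$ with no operator content), so its expectation value is just $i_{w,z}\frac{w-z}{2(w+z)}\cdot\langle 0|e^A|0\rangle$. Since $A$ is quadratic in the creators $\{\phi_i\}_{i\geq 0}$, we have $e^A|0\rangle = |0\rangle + (\text{terms with at least two creators})$, hence $\langle 0|e^A|0\rangle = 1$ (equivalently $\tau_A(0)=1$), and the scalar simply passes through the bracket. This reproduces
\be
\langle :\phi(w)\phi(z):\rangle_A = \langle \phi(w)\phi(z)\rangle_A - i_{w,z}\frac{w-z}{2(w+z)},
\ee
i.e. the identity displayed just before the statement.

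Then I would substitute $\langle \phi(w)\phi(z)\rangle_A = -2A(w,z) + i_{w,z}\frac{w-z}{2(w+z)}$ from Proposition \ref{prop-ferm-2pt}. The two copies of $i_{w,z}\frac{w-z}{2(w+z)}$ cancel termwise, leaving $\langle :\phi(w)\phi(z):\rangle_A = -2A(w,z)$.

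I do not expect a real obstacle here; the only points that need a moment of care are the normalization $\langle 0|e^A|0\rangle = 1$ used to pull the scalar term out of the expectation value, and keeping the formal-expansion prescription $i_{w,z}$ fixed (expanding on $\{|w|>|z|\}$) so that the two scalar series being subtracted are literally the same series and cancel.
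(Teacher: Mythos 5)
Your proposal is correct and follows exactly the paper's own argument: take the expectation value of the OPE \eqref{eq-OPE-ferm}, use $\langle 0|e^A|0\rangle=1$ to pull out the scalar $i_{w,z}\frac{w-z}{2(w+z)}$, and cancel it against the same term in Proposition \ref{prop-ferm-2pt}. No gaps.
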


Furthermore,
let $w\to -z$ and use \eqref{eq-H-normal}, then we finally get:
\begin{Proposition}
The bosonic $1$-point function is given by:
\be
\langle H(z) \rangle_A = A(-z,z),
\ee
where $A(w,z)$ is the series \eqref{eq-def-seriesA}.
Or more explicitly,
\be
\langle H(z) \rangle_A
= \sum_{n,m>0} (-1)^{m+1} \cdot a_{n,m} z^{-(m+n)}
+\sum_{n>0}\epsilon_n\cdot (-1)^{n} a_{n,0} z^{-n},
\ee
where
\begin{equation*}
\epsilon_n = \begin{cases}
0,&\text{ $n$ even;}\\
1,&\text{ $n$ odd.}
\end{cases}
\end{equation*}
\end{Proposition}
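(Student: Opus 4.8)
The plan is to read this off directly from the normal-ordering identity \eqref{eq-H-normal} together with the Lemma computing $\langle :\phi(w)\phi(z):\rangle_A$. First I would recall that $H(z) = -\half :\phi(-z)\phi(z):$, so that $\langle H(z)\rangle_A = -\half \langle :\phi(-z)\phi(z):\rangle_A$. By the Lemma, $\langle :\phi(w)\phi(z):\rangle_A = -2A(w,z)$. The key point is that, unlike the bare two-point function $\langle \phi(w)\phi(z)\rangle_A$ of Proposition \ref{prop-ferm-2pt}, the normal-ordered version has no singularity along $w+z=0$ — the subtraction in \eqref{eq-def-normal-order} removes exactly the term $i_{w,z}\frac{w-z}{2(w+z)}$ — so the specialization $w\mapsto -z$ is a well-defined operation on the resulting series. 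Carrying it out gives $\langle :\phi(-z)\phi(z):\rangle_A = -2A(-z,z)$, hence $\langle H(z)\rangle_A = A(-z,z)$, which is the first assertion.

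It then remains to expand $A(-z,z)$ using the explicit definition \eqref{eq-def-seriesA}. Substituting $w=-z$ and using $(-z)^{-n} = (-1)^n z^{-n}$, the double sum $\sum_{n,m>0}(-1)^{m+n+1} a_{n,m} w^{-n} z^{-m}$ becomes $\sum_{n,m>0}(-1)^{m+n+1}(-1)^n a_{n,m} z^{-(n+m)} = \sum_{n,m>0}(-1)^{m+1} a_{n,m} z^{-(n+m)}$, since $(-1)^{2n}=1$. For the linear part, $(-z)^{-n}-z^{-n} = \big((-1)^n - 1\big)z^{-n}$ vanishes when $n$ is even and equals $-2z^{-n}$ when $n$ is odd, so $-\half \sum_{n>0}(-1)^n a_{n,0}\big((-z)^{-n}-z^{-n}\big) = \sum_{n>0:\text{ odd}}(-1)^n a_{n,0} z^{-n} = \sum_{n>0}\epsilon_n\,(-1)^n a_{n,0} z^{-n}$ with $\epsilon_n$ as in the statement. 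Combining the two pieces yields the displayed explicit formula.

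I do not expect a genuine obstacle here: the argument is essentially a one-line consequence of \eqref{eq-H-normal} and the Lemma, followed by a routine re-indexing. The only step that deserves a moment's care is justifying the specialization $w\mapsto -z$, and this is precisely why the computation is routed through the normal-ordered field $:\phi(-z)\phi(z):$ rather than through $\phi(-z)\phi(z)$ itself.
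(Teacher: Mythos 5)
Your proof is correct and follows essentially the same route as the paper: route the computation through $H(z)=-\half:\phi(-z)\phi(z):$, invoke the Lemma $\langle:\phi(w)\phi(z):\rangle_A=-2A(w,z)$, and specialize $w\to-z$, which is legitimate precisely because the normal-ordering subtraction has removed the singular series $i_{w,z}\frac{w-z}{2(w+z)}$. The explicit expansion of $A(-z,z)$, including the parity analysis giving $\epsilon_n$, is also carried out correctly.
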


\subsection{Bosonic $n$-point functions for general $n$}
\label{sec-bos-npt}

Now we can compute the bosonic $n$-point functions $\langle H(z_1)\cdots H(z_n) \rangle_A$ for general $n$.
First we prove the following:
\begin{Proposition}
We have:
\be
\langle :\phi(z_1)\phi(z_2)::\phi(z_3)\phi(z_4): \cdots :\phi(z_{2s-1})\phi(z_{2s}): \rangle_A
= \Pf(B_{ij})_{1\leq i,j \leq 2s},
\ee
where the entries $B_{ij}$ are defined as follows.
For $1\leq i < j \leq 2s$,
\be
B_{ij} = \begin{cases}
-2 A(z_i,z_j), & \text{ if $i=2r-1,j=2r$ for some $1\leq r\leq s$;}\\
-2\wA(z_i,z_j), & \text{ otherwise,}
\end{cases}
\ee
and $B_{ij} = -B_{ji}$ if $i>j$,
where $A(w,z)$ and $\wA(w,z)$ are given by \eqref{eq-def-seriesA} and \eqref{eq-def-Ahat} respectively.
And $B_{ii}=0$ for every $i$.
\end{Proposition}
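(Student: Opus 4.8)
The plan is to reduce the $2s$-fold normal-ordered correlator to the $n$-point fermionic function computed in Proposition \ref{prop-ferm-npt}, keeping careful track of the difference between $A$ (no subtraction of the OPE term) and $\wA$ (with the subtraction). First I would use the OPE \eqref{eq-OPE-ferm} to write each paired factor as
\begin{equation*}
:\phi(z_{2r-1})\phi(z_{2r}): \; = \; \phi(z_{2r-1})\phi(z_{2r}) - i_{z_{2r-1},z_{2r}}\frac{z_{2r-1}-z_{2r}}{2(z_{2r-1}+z_{2r})},
\end{equation*}
and expand the product over $r=1,\dots,s$. Only the term in which every factor contributes its $\phi\phi$ piece survives with a fermionic $2s$-point function; the other terms produce $c$-number prefactors times lower fermionic correlators $\langle \phi(z_{j_1})\cdots\phi(z_{j_{2k}})\rangle_A$, which by Proposition \ref{prop-ferm-npt} are again Pfaffians of the $\wB_{ij}$.

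The key step is then a Pfaffian-expansion (row/column) identity. I would argue that the full sum over subsets reassembles into a single Pfaffian: the Pfaffian $\Pf(B_{ij})_{1\le i,j\le 2s}$, expanded as a sum over perfect matchings of $\{1,\dots,2s\}$, splits according to which of the ``special'' pairs $\{2r-1,2r\}$ are used as blocks of the matching. A matching that uses the block $\{2r-1,2r\}$ contributes the factor $B_{2r-1,2r}=-2A(z_{2r-1},z_{2r})$, while the remaining vertices are matched using only $\wB$-type entries; a matching that does not use that block contributes only $\wB$-type entries on those four indices. Comparing this with the expansion of the left-hand side, the matchings that keep $\{2r-1,2r\}$ as a block correspond exactly to the term where that factor contributed its OPE $c$-number plus the ``remaining'' fermionic correlator — here one must check that $-2A = -2\wA + i_{w,z}\frac{w-z}{2(w+z)}$, which is precisely \eqref{eq-def-Ahat}, so that the $\wA$ part of $B_{2r-1,2r}$ is absorbed into the surviving fermionic Pfaffian and the leftover $i_{w,z}\frac{w-z}{2(w+z)}$ matches the OPE subtraction term. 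Iterating over all $r$ and all sign choices gives the identity.

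Concretely I would proceed by induction on $s$. For $s=1$ the statement is the Lemma $\langle:\phi(w)\phi(z):\rangle_A=-2A(w,z)$ already proved. For the inductive step, pull out the last factor $:\phi(z_{2s-1})\phi(z_{2s}):$, apply the OPE, and use the inductive hypothesis on the remaining $s-1$ normal-ordered pairs together with the Wick/Pfaffian formula \eqref{eq-Wick-ferm} for the terms where $\phi(z_{2s-1})$ and/or $\phi(z_{2s})$ contract with earlier fields; then recognize the result as the Laplace (last-two-rows) expansion of $\Pf(B_{ij})_{1\le i,j\le 2s}$ along the indices $2s-1,2s$. The main obstacle I anticipate is purely bookkeeping: getting the signs right, both the $\mathrm{sgn}(p,q)$ signs coming from Wick's theorem / the Pfaffian expansion and the alternating signs hidden inside $i_{w,z}\frac{w-z}{2(w+z)}=\half+\sum_i(-1)^i w^{-i}z^i$, and making sure the non-antisymmetry of $\wA$ (the $\delta$-function anomaly noted in the Remark after Proposition \ref{prop-ferm-npt}) does not spoil the identification, since the special pairs are ordered $2r-1<2r$ so only the $i<j$ case of $B_{ij}$ is ever invoked for those blocks.
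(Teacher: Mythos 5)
Your proposal follows essentially the same route as the paper: expand each $:\phi(z_{2r-1})\phi(z_{2r}):$ via the OPE, apply Wick's theorem to the surviving fermionic correlators, and observe that $\Pf(B)$ expanded over perfect matchings reproduces exactly the same sum because $B_{2r-1,2r}=\langle\phi(z_{2r-1})\phi(z_{2r})\rangle_A+f_r$ while all other entries are plain two-point functions — the paper does this by a direct term-by-term comparison of the two expansions rather than by your induction/Laplace expansion, but that is only an organizational difference. One small correction: the identity you invoke should read $-2A(w,z)=-2\wA(w,z)-i_{w,z}\tfrac{w-z}{2(w+z)}$ (equivalently $\wA=A-i_{w,z}\tfrac{w-z}{4(w+z)}$, which is \eqref{eq-def-Ahat}), not with a plus sign.
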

\begin{proof}
First recall the OPE \eqref{eq-OPE-ferm}.
We have:
\be
\label{eq-pf-nptboson-1}
\begin{split}
&\langle :\phi(z_1)\phi(z_2)::\phi(z_3)\phi(z_4): \cdots :\phi(z_{2s-1})\phi(z_{2s}): \rangle_A\\
=& \langle
\Big( \phi(z_1)\phi(z_2) -i_{z_1,z_2} \frac{z_1-z_2}{2(z_1+z_2)} \Big) \cdots\\
&\quad \Big( \phi(z_{2s-1})\phi(z_{2s}) -  i_{z_{2s-1},z_{2s}} \frac{z_{2s-1}-z_{2s}}{2(z_{2s-1}+z_{2s})} \Big) \rangle_A\\
=& \sum_{K\sqcup L=\{1,2,\cdots,s\}}
\Big(\prod_{l\in L} f_l \Big)
\cdot \langle \phi_K \rangle_A,
\end{split}
\ee
where for a subset $K=\{k_1,\cdots,k_r\}\subset \{1,\cdots, s\}$ with $k_1<k_2<\cdots<k_r$,
denote:
\be
\phi_K = \phi(z_{2k_1-1})\phi(z_{2k_1})
\phi(z_{2k_2-1})\phi(z_{2k_2})\cdots
\phi(z_{2k_r-1})\phi(z_{2k_r}),
\ee
and
\be
f_l =
-i_{z_{2l-1},z_{2l}} \frac{z_{2l-1}-z_{2l}}{2(z_{2l-1}+z_{2l})}.
\ee
Then applying Wick's theorem (see \eqref{eq-Wick-ferm}) to $\langle\phi_K\rangle_A$ in \eqref{eq-pf-nptboson-1},
we get:
\be
\label{eq-pf-nptboson-2}
\begin{split}
&\langle :\phi(z_1)\phi(z_2)::\phi(z_3)\phi(z_4): \cdots :\phi(z_{2s-1})\phi(z_{2s}): \rangle_A\\
=& \sum_{K\sqcup L=\{1,2,\cdots,s\}}
\sum_{\substack{(p_1,q_1,\cdots,p_r,q_r)\\p_i<q_i,\quad p_1<\cdots<p_r}}
\text{sgn}(p,q)\cdot \Big(\prod_{l\in L} f_l \Big)\cdot \prod_{i=1}^r
\langle \phi(z_{p_i})\phi(z_{q_i}) \rangle_A,
\end{split}
\ee
where $(p_1,q_1,\cdots,p_r,q_r)$ is a permutation of $(2k_1-1,2k_1,2k_2-1,2k_2,\cdots,2k_r-1,2k_r)$
and $\text{sgn}(p,q)$ is the sign of this permutation.

On the other hand,
since for every $1\leq i<j\leq 2s$ one has:
\begin{equation*}
B_{ij} = \begin{cases}
\langle \phi(z_i)\phi(z_j) \rangle_A + f_t
,
& \text{ if $i=2t-1,j=2t$ for some $1\leq t\leq s$};\\
\langle \phi(z_i)\phi(z_j) \rangle_A, & \text{ otherwise},
\end{cases}
\end{equation*}
thus the Pfaffian of the matrix $B=(B_{ij})_{1\leq i,j \leq 2s}$ is:
\be
\label{eq-pf-nptboson-3}
\Pf (B) =
\sum_{\substack{(p_1',q_1',\cdots,p_s',q_s')\\ p_i'<q_i', \quad p_1'<\cdots<p_s'}}
\text{sgn}(p',q') \prod_{i=1}^s \Big(\langle \phi(z_{p_i'})\phi(z_{q_i'}) \rangle_A + f(p_i',q_i')\Big),
\ee
where $(p_1',q_1',\cdots,p_s',q_s')$ is a permutation of $(1,2,\cdots,2s)$,
and
\begin{equation*}
f(p_i',q_i') := \begin{cases}
f_t, & \text{ if $p_i'=2t-1,q_i'=2t$ for some $1\leq t\leq s$;}\\
0, & \text{ otherwise}.
\end{cases}
\end{equation*}
It is clear that \eqref{eq-pf-nptboson-2} and \eqref{eq-pf-nptboson-3} are equal,
thus we have proved the conclusion.
\end{proof}

Now recall that $H(w) = -\half  :\phi(-w)\phi(w):$.
Therefore the following theorem is proved by taking $z_{2i-1}\to -w_i$ and $z_{2i}\to w_i$.
\begin{Theorem}
\label{thm-bos-npt}
We have:
\be
\langle H(w_1)\cdots H(w_n) \rangle_A
= \Pf(C_{ij})_{1\leq i,j \leq 2n},
\ee
where the entries $C_{ij}$ are defined as follows.
For $1\leq i < j \leq 2n$,
\be
\label{eq-def-Cij}
C_{ij} = \begin{cases}
A(-w_r,w_r), & \text{ $i=2r-1,j=2r$ for some $1\leq r\leq n$};\\
\wA\big((-1)^i w_{\lceil \frac{i}{2} \rceil }, (-1)^j w_{\lceil \frac{j}{2} \rceil }), & \text{ otherwise},
\end{cases}
\ee
and $C_{ij} = -C_{ji}$ if $i>j$;
$C_{ii}=0$ for every $i$.
Here $\lceil \frac{i}{2} \rceil = \frac{i}{2}$ if $i$ is even, and $\lceil \frac{i}{2} \rceil = \frac{i+1}{2}$ if $i$ is odd.
\end{Theorem}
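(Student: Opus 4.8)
The plan is to obtain Theorem~\ref{thm-bos-npt} directly from the preceding Proposition by a change of variables, so that essentially no new computation is required.

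First I would use the relation $H(w) = -\tfrac12 :\phi(-w)\phi(w):$ from \eqref{eq-H-normal} to write
\begin{equation*}
\langle H(w_1)\cdots H(w_n)\rangle_A
= \Big(-\tfrac12\Big)^n \big\langle :\phi(-w_1)\phi(w_1): \cdots :\phi(-w_n)\phi(w_n): \big\rangle_A ,
\end{equation*}
which puts the left-hand side into exactly the shape handled by the preceding Proposition, under the specialization $z_{2r-1}\to -w_r$ and $z_{2r}\to w_r$ for $1\le r\le n$. Note that $z_i = (-1)^i w_{\lceil i/2\rceil}$ for every $i$ under this substitution.

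Next I would substitute these values into the Pfaffian formula $\Pf(B_{ij})_{1\le i,j\le 2n}$ supplied by that Proposition. For the paired positions $(i,j)=(2r-1,2r)$ the entry becomes $-2A(-w_r,w_r)$, and for every other pair $i<j$ it becomes $-2\wA\big((-1)^i w_{\lceil i/2\rceil},(-1)^j w_{\lceil j/2\rceil}\big)$. Thus the matrix $(B_{ij})$ is $-2$ times the matrix $(C_{ij})$ of \eqref{eq-def-Cij} (entrywise). Invoking the homogeneity of the Pfaffian, $\Pf(\lambda M)=\lambda^{n}\Pf(M)$ for a $2n\times 2n$ antisymmetric matrix $M$, gives $\Pf(B_{ij})=(-2)^n\Pf(C_{ij})$, and since $(-\tfrac12)^n(-2)^n=1$ we conclude $\langle H(w_1)\cdots H(w_n)\rangle_A=\Pf(C_{ij})_{1\le i,j\le 2n}$.

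The step that needs genuine care — and the one I expect to be the main (indeed the only) obstacle — is justifying that the specialization $z_{2r-1}\to -w_r$, $z_{2r}\to w_r$ is legitimate at the level of formal power series. The fermionic two-point functions carry the expansion prescription $i_{z_i,z_j}$, valid on $|z_i|>|z_j|$, and within a pair one has $|z_{2r-1}|=|z_{2r}|$; moreover the naive term $i_{z_{2r-1},z_{2r}}\tfrac{z_{2r-1}-z_{2r}}{2(z_{2r-1}+z_{2r})}$ would be singular at $z_{2r-1}=-z_{2r}$. I would resolve this by observing that it is precisely the \emph{normal-ordered} product that appears, so that singular term has already been cancelled in the preceding Proposition; the surviving paired entry $-2A(-w_r,w_r)$ is a bona fide element of $\bC[\![w_r^{-1}]\!]$. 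For the cross-pair entries one fixes the ordering $|w_1|>|w_2|>\cdots>|w_n|$, under which every $i_{z_i,z_j}$ expansion with $i<j$ coming from distinct pairs lies on its correct domain, and the substitution goes through termwise. Once this bookkeeping is spelled out, the theorem follows.
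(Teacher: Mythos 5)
Your proposal is correct and follows exactly the paper's route: the paper proves this theorem by invoking the preceding Proposition on $\langle :\phi(z_1)\phi(z_2):\cdots:\phi(z_{2n-1})\phi(z_{2n}):\rangle_A$ and specializing $z_{2r-1}\to -w_r$, $z_{2r}\to w_r$ via $H(w)=-\tfrac12:\phi(-w)\phi(w):$. Your additional bookkeeping of the factors $(-\tfrac12)^n$ and $(-2)^n$ through the Pfaffian homogeneity, and the remark that the normal ordering removes the would-be singular term in the paired entries, simply makes explicit what the paper leaves implicit.
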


\begin{Example}
Let us present some examples.
For $n=1$,
we see:
\begin{equation*}
\langle H(w) \rangle_A = \Pf
\left[
\begin{array}{cc}
0 & A(-w,w) \\
-A(-w,w) & 0 \\
\end{array}
\right]
=A(-w,w).
\end{equation*}
For $n=2$, we have:
\begin{equation*}
\begin{split}
&\langle H(w_1) H(w_2) \rangle_U\\
=&
\Pf \left[
\begin{array}{cccc}
0 & A(-w_1,w_1) & \widehat A(-w_1,-w_2) &\widehat A(-w_1,w_2)\\
-A(-w_1,w_1) & 0 & \widehat A(w_1,-w_2) &\widehat A(w_1,w_2)\\
-\widehat A(-w_1,-w_2) & -\widehat A(w_1,-w_2) & 0 & A(-w_2,w_2) \\
-\widehat A(-w_1,w_2) & -\widehat A(w_1,w_2) & -A(-w_2,w_2) & 0\\
\end{array}
\right]\\
=& A(-w_1,w_1) A(-w_2,w_2)- \widehat A(-w_1,-w_2) \widehat A(w_1,w_2)
+ \widehat A(-w_1,w_2) \widehat A(w_1,-w_2).
\end{split}
\end{equation*}
And for $n=3$,
one can check that:
\begin{equation*}
\begin{split}
&\langle H(w_1)H(w_2)H(w_3) \rangle_A =
\wA(-w_1,w_3) \wA(w_1,-w_3) A(-w_2,w_2) \\
&\quad - \wA(-w_1,-w_3) \wA(w_1,w_3) A(-w_2,w_2)
- \wA(-w_1,w_3) \wA(w_1,w_2) \wA(-w_2,-w_3) \\
&\quad + \wA(-w_1,w_2) \wA(w_1,w_3) \wA(-w_2,-w_3)
+ \wA(-w_1,-w_3) \wA(w_1,w_2) \wA(-w_2,w_3) \\
&\quad - \wA(-w_1,w_2) \wA(w_1,-w_3) \wA(-w_2,w_3)
+ \wA(-w_1,w_3) \wA(w_1,-w_2) \wA(w_2,-w_3) \\
&\quad - \wA(-w_1,-w_2) \wA(w_1,w_3) \wA(w_2,-w_3)
+ A(-w_1,w_1) \wA(-w_2,w_3) \wA(w_2,-w_3) \\
&\quad - \wA(-w_1,-w_3) \wA(w_1,-w_2) \wA(w_2,w_3)
+ \wA(-w_1,-w_2) \wA(w_1,-w_3) \wA(w_2,w_3) \\
&\quad - A(-w_1,w_1) \wA(-w_2,-w_3) \wA(w_2,w_3)
+ \wA(-w_1,w_2) \wA(w_1,-w_2) A(-w_3,w_3) \\
&\quad - \wA(-w_1,-w_2) \wA(w_1,w_2) A(-w_3,w_3)
+ A(-w_1,w_1) A(-w_2,w_2) A(-w_3,w_3).
\end{split}
\end{equation*}

\end{Example}

\subsection{$A(w,z)$ as a specialization of tau-function}

Similar to the case of the KP hierarchy
(see \cite[\S 5.6]{zhou1}),
the generating series $A(w,z)$ defined by \eqref{eq-def-seriesA}
can be represented as a special evaluation of the tau-function $\tau_A(\bm t)$.
We show this in the present subsection.
This relation will be useful in \S \ref{sec-kdv}.

First recall the relation \eqref{eq-bfcor-ferm}.
One has:
\begin{equation*}
\begin{split}
\langle 0| e^{H_+(\bm t)} \phi(w)\phi(z) e^A|0\rangle
=& \half
e^{\xi(\bm t,w)} e^{-\xi (\tilde\pd ,w^{-1})}
e^{\xi(\bm t,z)} e^{-\xi (\tilde\pd ,z^{-1})} \tau_A(\bm t)\\
=& \half
e^{-[\xi (\tilde\pd ,w^{-1}),\xi(\bm t,z)]}
e^{\xi(\bm t,z)+ \xi(\bm t,w)}
e^{-\xi (\tilde\pd ,w^{-1}) -\xi (\tilde\pd ,z^{-1})} \tau_A(\bm t)\\
=& i_{w,z}\frac{w-z}{2(w+z)}\cdot
\exp\Big(\sum_{n>0:\text{ odd}}t_n(w^n+z^n) \Big)\\
&\cdot\exp\Big(-\sum_{n>0:\text{ odd}}\frac{2}{n}(w^{-n}+z^{-n})\frac{\pd}{\pd t_n}\Big).
\tau_A(\bm t),
\end{split}
\end{equation*}
where the operator $\exp\big(-\sum_{n>0:\text{ odd}}\frac{2}{n}(w^{-n}+z^{-n})\frac{\pd}{\pd t_n}\big)$
acts by shifting each time variable $t_n$ by $-(\frac{2}{n}(w^{-n}+z^{-n}))$.
Thus:
\begin{equation*}
\begin{split}
&\langle 0| e^{H_+(\bm t)} \phi(w)\phi(z) e^A|0\rangle\\
=&
i_{w,z}\frac{w-z}{2(w+z)}\cdot
\exp\Big(\sum_{n>0:\text{ odd}}t_n(w^n+z^n)\Big) \cdot
\tau_A \Big(t_n-\frac{2}{n}(w^{-n} +z^{-n})\Big).
\end{split}
\end{equation*}
Restricting to $\bm t=0$,
one obtains:
\begin{equation*}
\langle \phi(w)\phi(z) \rangle_A
= i_{w,z}\frac{w-z}{2(w+z)}\cdot
\tau_A (\bm t) \big|_{t_n = -\frac{2}{n}(w^{-n} +z^{-n})}.
\end{equation*}
Then by \eqref{eq-def-Ahat} we have:
\begin{Proposition}
\label{prop-gen-tau}
The generating series $\wA(w,z)$ of affine coordinates is given by the following
evaluation of the tau-function $\tau_A(t_1,t_3,t_5,\cdots)$:
\be
\wA(w,z)
= - i_{w,z}\frac{w-z}{4(w+z)}\cdot
\tau_A (\bm t) \big|_{t_n = -\frac{2}{n}(w^{-n} +z^{-n})}.
\ee
Or equivalently,
\be
A(w,z)
=  -i_{w,z}\frac{w-z}{4(w+z)}\cdot \Big(
\tau_A (\bm t) \big|_{t_n = -\frac{2}{n}(w^{-n} +z^{-n})} -1 \Big).
\ee
\end{Proposition}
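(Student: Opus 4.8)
The plan is to combine the boson-fermion correspondence in the form \eqref{eq-bfcor-ferm} with the fermionic two-point function already computed in Proposition \ref{prop-ferm-2pt}. First I would apply $\sigma_B$ twice to the vector $\phi(w)\phi(z)e^A|0\rangle$: writing $\sigma_B(\phi(z)e^A|0\rangle) = \tfrac{1}{\sqrt 2}\omega\, e^{\xi(\bm t,z)}e^{-\xi(\tilde\pd,z^{-1})}\tau_A(\bm t)$ and then hitting it once more with $\sigma_B(\phi(w)\,\cdot\,)$, the two $\omega$'s multiply to $\omega^2\sim 1$ and the prefactor becomes $\tfrac12$. This yields
\begin{equation*}
\langle 0|e^{H_+(\bm t)}\phi(w)\phi(z)e^A|0\rangle
= \tfrac12\, e^{\xi(\bm t,w)}e^{-\xi(\tilde\pd,w^{-1})}\, e^{\xi(\bm t,z)}e^{-\xi(\tilde\pd,z^{-1})}\,\tau_A(\bm t).
\end{equation*}

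Next I would push the annihilation-type operator $e^{-\xi(\tilde\pd,w^{-1})}$ past the multiplication operator $e^{\xi(\bm t,z)}$ using the elementary commutator identity $e^Xe^Y=e^{[X,Y]}e^Ye^X$ valid when $[X,Y]$ is central. Here $X=-\xi(\tilde\pd,w^{-1})=-\sum_{n>0\text{ odd}}\tfrac{2}{n}w^{-n}\pd_{t_n}$ and $Y=\xi(\bm t,z)=\sum_{m>0\text{ odd}}t_m z^m$, so $[X,Y]=-\sum_{n>0\text{ odd}}\tfrac{2}{n}w^{-n}z^n$, a scalar. A short computation identifies $\exp\!\big(-\sum_{n>0\text{ odd}}\tfrac{2}{n}w^{-n}z^n\big)$ with $i_{w,z}\tfrac{w-z}{2(w+z)}$ — indeed $-\tfrac12\log\!\big(\tfrac{w-z}{w+z}\big)^2 = \operatorname{arctanh}$-type expansion giving exactly $\sum_{n\text{ odd}}\tfrac1n(z/w)^n$ up to the overall factor, matching the expansion of $i_{w,z}\tfrac{w-z}{2(w+z)}=\tfrac12+\sum_{i\ge1}(-1)^i w^{-i}z^i$ after noting the even terms cancel. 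Collecting terms, the two exponential-in-$\bm t$ factors merge to $\exp(\sum_{n>0\text{ odd}}t_n(w^n+z^n))$ and the two shift operators merge to the operator sending $t_n\mapsto t_n-\tfrac2n(w^{-n}+z^{-n})$, so
\begin{equation*}
\langle 0|e^{H_+(\bm t)}\phi(w)\phi(z)e^A|0\rangle
= i_{w,z}\tfrac{w-z}{2(w+z)}\cdot \exp\!\Big(\textstyle\sum_{n>0\text{ odd}}t_n(w^n+z^n)\Big)\cdot \tau_A\!\big(t_n-\tfrac2n(w^{-n}+z^{-n})\big).
\end{equation*}

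Then I would restrict to $\bm t=0$: the exponential prefactor becomes $1$, the shifted argument becomes $t_n=-\tfrac2n(w^{-n}+z^{-n})$, and the left-hand side is by definition $\langle\phi(w)\phi(z)\rangle_A$. Finally, invoking \eqref{eq-def-Ahat}, namely $\wA(w,z)=-\tfrac12\langle\phi(w)\phi(z)\rangle_A$, gives the first displayed formula; the equivalent form for $A(w,z)$ follows by subtracting $1$ inside (using $\tau_A(0)=1$) and recalling $A(w,z)=\wA(w,z)+i_{w,z}\tfrac{w-z}{4(w+z)}$ from \eqref{eq-def-Ahat}, so that the $+i_{w,z}\tfrac{w-z}{4(w+z)}$ term cancels against the constant piece $-i_{w,z}\tfrac{w-z}{4(w+z)}\cdot 1$. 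The main obstacle — and the only nonroutine step — is getting the scalar $[X,Y]$-exponential and its identification with $i_{w,z}\tfrac{w-z}{2(w+z)}$ exactly right, including the precise meaning of the formal expansion $i_{w,z}$ on $\{|w|>|z|\}$ so that it is consistent with the normal-ordering conventions fixed in \eqref{eq-ferm-field}; once the bookkeeping of signs, the factor $\tfrac12$ from $\omega^2$, and the $\tfrac2n$ in $\tilde\pd$ is done carefully, everything else is formal manipulation.
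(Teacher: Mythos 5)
Your proposal is correct and follows essentially the same route as the paper: apply the fermionic boson--fermion correspondence \eqref{eq-bfcor-ferm} twice to $\phi(w)\phi(z)e^A|0\rangle$, commute the shift operator past the multiplication operator to produce the central scalar, set $\bm t=0$, and invoke \eqref{eq-def-Ahat}. One small bookkeeping remark: the scalar $e^{[X,Y]}=\exp\bigl(-\sum_{n>0\text{ odd}}\tfrac{2}{n}w^{-n}z^n\bigr)$ is the expansion of $\tfrac{w-z}{w+z}$ (not of $i_{w,z}\tfrac{w-z}{2(w+z)}$ as stated in your prose); the missing $\tfrac12$ is exactly the prefactor coming from $\tfrac{1}{\sqrt2}\cdot\tfrac{1}{\sqrt2}\,\omega^2$, and your final displayed formula already combines these correctly.
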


\section{A Formula for Connected Bosonic $N$-Point Functions}
\label{sec-conn}

In the previous section we have computed the
bosonic $n$-point functions associated to a BKP tau-function $\tau_A(\bm t)$.
Now in this section,
we derive a formula for the connected bosonic $n$-point functions
$\langle H(z_1)\cdots H(z_n)\rangle_A^c$
in terms of the generating series of affine coordinates.
This formula is the BKP-analogue of the formula \eqref{eq-intro-Zhou} derived by Zhou in \cite{zhou1}.
We will see that the connected bosonic $n$-point functions $\langle H(z_1)\cdots H(z_n)\rangle_A^c$
is the generating series of the $n$-point correlators of
the free energy $F_A=\log \tau_A$
(with an additional modification at $n=2$).

\subsection{Connected bosonic $n$-point functions}

First we recall the notion of the connected bosonic $n$-point functions
associated to a tau-function $\tau_A(\bm t)$.

Following \cite[\S 5.1]{zhou1},
define the connected $n$-point functions
$\langle H(z_1)\cdots H(z_n)\rangle_A^c$
associated to the tau-function $\tau_A(\bm t)$ by
the M\"obius inversion formulas:
\be
\begin{split}
&\langle H(z_1)\cdots H(z_n) \rangle_A =
\sum_{I_1\sqcup \cdots \sqcup I_k=[n]}
\frac{1}{k!} \langle H(z_{I_1}) \rangle_A^c
\cdots \langle H(z_{I_k}) \rangle_A^c,\\
&\langle H(z_1)\cdots H(z_n) \rangle_A^c :=
\sum_{I_1\sqcup \cdots \sqcup I_k=[n]}
\frac{(-1)^{k-1}}{k} \langle H(z_{I_1}) \rangle_A
\cdots \langle H(z_{I_k}) \rangle_A,
\end{split}
\ee
where $[n]:=\{1,2,\cdots,n\}$;
and for a subset $I=\{i_1,i_2,\cdots,i_m\} \subset [n]$
with $i_1<i_2<\cdots <i_m$,
denote:
\begin{equation*}
H(z_I)= H(z_{i_1}) H(z_{i_2}) \cdots H(z_{i_m}).
\end{equation*}
For example:
\begin{equation*}
\begin{split}
&\langle H(z_1)\rangle_A^c = \langle H(z_1)\rangle_A,\\
&\langle H(z_1)H(z_2)\rangle_A^c = \langle H(z_1)H(z_2)\rangle_A
- \langle H(z_1)\rangle_A \cdot \langle H(z_2)\rangle_A,\\
&\langle H(z_1)H(z_2)H(z_3)\rangle_A^c = \langle H(z_1)H(z_2)H(z_3)\rangle_A
- \langle H(z_1) H(z_2)\rangle_A \cdot \langle H(z_3)\rangle_A\\
&\qquad\qquad\qquad\qquad
- \langle H(z_1) H(z_3)\rangle_A \cdot  \langle H(z_2)\rangle_A
- \langle H(z_2) H(z_3)\rangle_A \cdot  \langle H(z_1)\rangle_A\\
&\qquad\qquad\qquad\qquad
+2 \langle H(z_1)\rangle_A \cdot  \langle H(z_2)\rangle_A \cdot  \langle H(z_3)\rangle_A.
\end{split}
\end{equation*}

\subsection{Relation to the free energy}

Given a BKP tau-function $\tau_A=\tau_A(\bm t)$,
one can consider the formal quantum field theory associated to $\tau_A$
(see \cite{zhou5} for an introduction of the notion of formal quantum field theory).
Roughly speaking,
we regard the tau-function $\tau_A = \tau_A (\bm t)$ as a partition function
and regard the BKP-time variables $\bm t=(t_1,t_3,t_5,\cdots)$ as the coupling constants.
Then the logarithm
\be
F_A(\bm t) := \log \tau_A (\bm t)
\ee is called the free energy,
and the coefficient of the term $t_1^{m_1}t_3^{m_3}\cdots t_{2k+1}^{m_{2k+1}}$
(where $m_i\geq 0$ and $\sum_i m_i$ =n) in $F_A(\bm t)$
is called a connected $n$-point correlator.

Now we are interested in the computation of the free energy $F_A$,
or equivalently, the computation of the connected correlators.
This question is actually equivalent to the computation of the connected bosonic $n$-point functions
$\langle H(z_1)\cdots H(z_n)\rangle_A^c$,
since we have the following:
\begin{Lemma}
\label{lem-correlator}
For every $n\geq 1$,
\be
\label{eq-lem-correlator}
\begin{split}
&\sum_{i_1,\cdots,i_n> 0: \text{ odd}}
\frac{\pd^n F_A(\bm t)}{\pd t_{i_1}\cdots \pd t_{i_n}} \bigg|_{\bm t=0}
\cdot z_1^{-i_1}\cdots z_n^{-i_n}\\
= &
-\delta_{n,2} \cdot
i_{z_1,z_2}
\frac{z_1z_2(z_2^2+z_1^2)}{2(z_1^2-z_2^2)^2}
+\langle H(z_1)\cdots H(z_n)\rangle_A^c,
\end{split}
\ee
where:
\be
i_{z_1,z_2}
\frac{z_1z_2(z_2^2+z_1^2)}{2(z_1^2-z_2^2)^2}:
=\sum_{n>0:\text{ odd}}
\frac{n}{2}z_1^{-n}z_2^{n}.
\ee
\end{Lemma}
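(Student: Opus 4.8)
\emph{Proof idea.} The plan is to turn the bosonic $n$-point function into a product of two commuting families of operators acting on $\tau_A(\bm t)$ via the boson--fermion correspondence, reorder it by a Wick-type argument, and pass to the connected functions using the exponential (moment--cumulant) formalism. First I would introduce the operator-valued series
\[
\hat D(z):=\sum_{n>0:\text{ odd}} z^{-n}\frac{\pd}{\pd t_n},
\qquad
\hat M(z):=\sum_{n>0:\text{ odd}} \frac n2\, z^{n} t_n ,
\]
acting on $\bC[\![\bm t]\!]$. Since $H(z)=\sum_{n>0:\text{ odd}}\big(H_n z^{-n}+H_{-n}z^{n}\big)$ (there is no $H_0$, and $H_{2k}=0$), applying \eqref{eq-bfcor-boson} repeatedly to the vector $H(z_1)\cdots H(z_n)e^A|0\rangle\in\cF_B^0$ and then setting $\bm t=0$ gives
\[
\langle H(z_1)\cdots H(z_n)\rangle_A
=\Big[\big(\hat D(z_1)+\hat M(z_1)\big)\cdots\big(\hat D(z_n)+\hat M(z_n)\big)\,\tau_A(\bm t)\Big]\Big|_{\bm t=0}.
\]
From $[\pd/\pd t_a,t_b]=\delta_{a,b}$ (equivalently, from $[H_a,H_b]=\tfrac a2\delta_{a+b,0}$) one gets $[\hat D(w),\hat M(z)]=\sum_{n>0:\text{ odd}}\tfrac n2 w^{-n}z^{n}=:g(w,z)$, a scalar, while $[\hat D,\hat D]=[\hat M,\hat M]=0$; a short power-series check gives $g(w,z)=i_{w,z}\,\frac{wz(w^2+z^2)}{2(w^2-z^2)^2}$, which is precisely the function in the $\delta_{n,2}$-term of \eqref{eq-lem-correlator}.

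Next I would reorder the product. Expanding $\prod_j(\hat D(z_j)+\hat M(z_j))$ and using Wick's theorem for operators with central commutators, each term becomes a sum over partial matchings of the factors: a contraction contributes $g(z_l,z_j)$ exactly when it joins a $\hat D(z_l)$ on the left to an $\hat M(z_j)$ on the right (and $0$ otherwise), while the uncontracted remainder, normal-ordered as (all $\hat M$'s)$\cdot$(all $\hat D$'s), is annihilated by $\bm t=0$ unless it consists of $\hat D$'s only. This yields
\[
\langle H(z_1)\cdots H(z_n)\rangle_A
=\sum_{M}\Big(\prod_{\{l,j\}\in M,\; l<j} g(z_l,z_j)\Big)\,P_{\,n-2|M|}\big(z_j:\ j\ \text{unmatched by}\ M\big),
\]
the sum over all partial matchings $M$ of $\{1,\dots,n\}$, where $P_m(z_{j_1},\dots,z_{j_m}):=\big[\hat D(z_{j_1})\cdots\hat D(z_{j_m})\tau_A\big]\big|_{\bm t=0}=\sum_{i_1,\dots,i_m>0:\text{ odd}} z_{j_1}^{-i_1}\cdots z_{j_m}^{-i_m}\,\pd_{t_{i_1}}\!\cdots\pd_{t_{i_m}}\tau_A\big|_{\bm t=0}$ is symmetric with $P_0=\tau_A(0)=1$. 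The structural point is that each factor $g(z_l,z_j)$ is a genuine scalar, attached to no other variable.

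Finally I would pass to the connected functions. Write $\Phi_m(z_{j_1},\dots,z_{j_m}):=\sum_{i_1,\dots,i_m>0:\text{ odd}} z_{j_1}^{-i_1}\cdots z_{j_m}^{-i_m}\,\pd_{t_{i_1}}\!\cdots\pd_{t_{i_m}} F_A\big|_{\bm t=0}$ for the series on the left-hand side of \eqref{eq-lem-correlator}. Since $\tau_A=e^{F_A}$, the $P_m$'s are the ``moments'' of the ``cumulants'' $\Phi_m$, i.e. $P_m(z_C)=\sum_{\pi\vdash C}\prod_{B\in\pi}\Phi_{|B|}(z_B)$. Substituting into the previous display, each resulting term is indexed by a set partition $\rho$ of $\{1,\dots,n\}$ with every block labelled either a ``$g$-pair'' (possible only for a $2$-block) or a ``$\Phi$-block'', and collecting gives $\langle H(z_1)\cdots H(z_n)\rangle_A=\sum_{\rho\vdash\{1,\dots,n\}}\prod_{B\in\rho}\Psi_{|B|}(z_B)$ with $\Psi_m=\Phi_m$ for $m\neq2$ and $\Psi_2(z_a,z_b)=\Phi_2(z_a,z_b)+g(z_a,z_b)$. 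Comparing with the moment--cumulant expansion $\langle H(z_1)\cdots H(z_n)\rangle_A=\sum_{\rho}\prod_{B\in\rho}\langle H(z_B)\rangle_A^c$ that defines the connected functions, uniqueness of the cumulants forces $\langle H(z_1)\cdots H(z_n)\rangle_A^c=\Psi_n$, i.e. $\Phi_n=\langle H(z_1)\cdots H(z_n)\rangle_A^c-\delta_{n,2}\,g(z_1,z_2)$; this is exactly \eqref{eq-lem-correlator}, with $n=1,2$ being the instances $\Psi_1=\Phi_1$ and $\Psi_2=\Phi_2+g$.

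The step I expect to be the main obstacle is the bookkeeping in the reordering: one must check carefully that the Wick expansion of $\prod_j(\hat D(z_j)+\hat M(z_j))$ is exactly a sum over partial matchings with the stated left-to-right orientation of each $g$-factor (the asymmetry $g(w,z)\neq -g(z,w)$ has to be tracked), and that nothing is lost upon setting $\bm t=0$. Once this is in place, the only conceptual input is the remark that a $g$-pair is a self-contained connected component, hence can occur in a connected $n$-point function only for $n=2$; everything else is the standard exponential-formula combinatorics together with the routine verification of the closed form of $g$.
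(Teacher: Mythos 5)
Your proposal is correct and follows essentially the same route as the paper: both rest on the substitution rules $\sigma_B(H_n|U\rangle)=\pd_{t_n}\sigma_B(|U\rangle)$, $\sigma_B(H_{-n}|U\rangle)=\tfrac{n}{2}t_n\sigma_B(|U\rangle)$ from the boson--fermion correspondence, the scalar commutator $[\hat D(w),\hat M(z)]=\sum_{n>0\text{ odd}}\tfrac{n}{2}w^{-n}z^{n}$, and M\"obius inversion on the partition lattice. The only difference is one of packaging: the paper verifies $n=1,2$ by direct computation and defers $n\geq 3$ to Zhou's Proposition 5.1, whereas you give a uniform all-$n$ argument by Wick-reordering the product $\prod_j(\hat D(z_j)+\hat M(z_j))$ and invoking the moment--cumulant correspondence at $\bm t=0$ (correctly tracking that a $g$-contraction is always oriented left-to-right, so it can only contribute to a connected component of size two).
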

This lemma can be proved by the same method used by Zhou in \cite[\S 5]{zhou1},
and one only needs to replace the KP-time variables $(T_1,T_2,T_3,\cdots)$ in that work
by the BKP-time variables $(t_1,t_3,t_5,\cdots)$.
Here we briefly review the verification of the cases $n=1,2$
since our additional term
\begin{equation*}
-\delta_{n,2} \cdot
i_{z_1,z_2}
\frac{z_1z_2(z_2^2+z_1^2)}{2(z_1^2-z_2^2)^2}
\end{equation*}
appearing at $n=2$
is different from the additional term in \cite{zhou1}.
First, we denote:
\begin{equation*}
\begin{split}
f(z_1,\cdots,z_n):=& \sigma_B \big(H(z_1)\cdots H(z_n) e^A|0\rangle \big) /\tau_A(\bm t)\\
=& \langle 0 | e^{H_+(\bm t)} H(z_1)\cdots H(z_n) e^A|0\rangle /\tau_A(\bm t) ,
\end{split}
\end{equation*}
and define $f^c (z_1,\cdots,z_n)$ by:
\begin{equation*}
f^c (z_1,\cdots,z_n) :=
\sum_{I_1\sqcup \cdots \sqcup I_k=[n]}
\frac{(-1)^{k-1}}{k} f(z_{I_1})
\cdots f(z_{I_k}),
\end{equation*}
then we easily see that:
\begin{equation*}
\begin{split}
&\langle H(z_1)\cdots H(z_n)\rangle_A = f (z_1,\cdots,z_n)|_{\bm t=0},\\
&\langle H(z_1)\cdots H(z_n)\rangle_A^c = f^c (z_1,\cdots,z_n)|_{\bm t=0}.
\end{split}
\end{equation*}

Now we can compute $f(z_1,\cdots,z_n)$ using \eqref{eq-bfcor-boson}.
For $n=1$ we have:
\begin{equation*}
\begin{split}
f(z) =& \frac{1}{\tau_A(\bm t)}\cdot
\Big(
\sum_{n>0:\text{ odd}} \frac{\pd}{\pd t_n} \cdot z^{-n}
+\sum_{n>0:\text{ odd}} \frac{n}{2}t_n \cdot z^{n} \Big) \tau_A(\bm t)\\
=& \sum_{n>0:\text{ odd}} \frac{\pd F_A(\bm t)}{\pd t_n} \cdot z^{-n}
+\sum_{n>0:\text{ odd}} \frac{n}{2}t_n \cdot z^{n},
\end{split}
\end{equation*}
and thus by restricting to $\bm t=0$ we obtain a proof of the case $n=1$ of \eqref{eq-lem-correlator},
since $f^c(z) = f(z)$.
Similarly, for $n=2$ one has:
\begin{equation*}
f(z_1,z_2)
= \frac{1}{\tau_A}
\sum_{n_1,n_2>0:\text{ odd}} \Big(\frac{\pd}{\pd t_{n_1}}z_1^{-n_1} + \frac{n_1}{2}t_{n_1}z_1^{n_1}\Big)
\Big(\frac{\pd}{\pd t_{n_2}}z_2^{-n_2} + \frac{n_2}{2}t_{n_2}z_2^{n_2}\Big)
\tau_A,
\end{equation*}
and thus:
\begin{equation*}
\begin{split}
f(z_1,z_2)
=&\sum_{n_1,n_2>0:\text{ odd}} \Big(\frac{\pd^2 F_A}{\pd t_1 \pd t_2} + \frac{\pd F_A}{\pd t_1} \frac{\pd F_A}{\pd t_2}
\Big)z_1^{-n_1}z_2^{-n_2}
+ \sum_{n>0:\text{ odd}}
\frac{n}{2}z_1^{-n}z_2^{n}\\
&+ \sum_{n_1,n_2>0:\text{ odd}} \Big(
\frac{n_1}{2} t_{n_1} \frac{\pd F_A}{\pd t_{n_2}}
\cdot z_1^{n_1}z_2^{-n_2}
+ \frac{n_2}{2} t_{n_2} \frac{\pd F_A}{\pd t_{n_1}}
\cdot z_1^{-n_1}z_2^{n_2} \Big)\\
&+ \sum_{n_1,n_2>0:\text{ odd}} \Big(
\frac{n_1 n_2}{4} t_{n_1}t_{n_2}
\Big)z_1^{n_1}z_2^{n_2}\\
=& \sum_{n_1,n_2>0:\text{ odd}} \frac{\pd^2 F_A}{\pd t_1 \pd t_2} \cdot z_1^{-n_1}z_2^{-n_2}
+ f(z_1)f(z_2) + i_{z_1,z_2} \frac{z_1z_2(z_2^2+z_1^2)}{2(z_1^2-z_2^2)^2},
\end{split}
\end{equation*}
and then by $f^c(z_1,z_2) = f(z_1,z_2)-f(z_1)f(z_2)$ we get:
\begin{equation*}
f^c(z_1,z_2) = i_{z_1,z_2} \frac{z_1z_2(z_2^2+z_1^2)}{2(z_1^2-z_2^2)^2}+
\sum_{n_1,n_2>0:\text{ odd}} \frac{\pd^2 F_A}{\pd t_1 \pd t_2} \cdot z_1^{-n_1}z_2^{-n_2}.
\end{equation*}
Taking $\bm t=0$, and we have proved the case $n=2$.

In general cases $n\geq 3$,
the relation \eqref{eq-lem-correlator}
follows from:
\begin{equation*}
f^c (z_1,\cdots,z_n) =
\sum_{i_1,\cdots,i_n> 0: \text{ odd}}
\frac{\pd^n F_A(\bm t)}{\pd t_{i_1}\cdots \pd t_{i_n}}
\cdot z_1^{-i_1}\cdots z_n^{-i_n},
\qquad \forall n\geq 3.
\end{equation*}
See \cite[Prop. 5.1]{zhou1} for a detailed proof for $n\geq 3$,
and here we will not repeat this.

\subsection{Computation of the connected bosonic $n$-point functions}
\label{sec-conn-npt}

In this subsection
we derive a formula for the connected bosonic $n$-point functions
of a tau-function $\tau_A$
using the results in \S \ref{sec-bos-npt}.
First we prove a combinatorial result about Pfaffians.
The following is a Pfaffian-analogue of \cite[Prop. 5.2]{zhou1}:
\begin{Proposition}
Assume $\xi(x,y)$ is a function with $\xi(x,y)=-\xi(y,x)$,
and for each $n\geq 1$ we define an anti-symmetric matrix $M(n)$ of size $2n\times 2n$ by:
\be
M(n)_{ij} =
\xi\big((-1)^i z_{\lceil \frac{i}{2} \rceil }, (-1)^j z_{\lceil \frac{j}{2} \rceil })
\ee
for $1\leq i<j\leq 2n$.
Define:
\begin{equation*}
\varphi(z_1,\cdots,z_n):= \Pf(M(n)_{ij})_{1\leq i,j \leq 2n}
\end{equation*}
for every $n$,
then the connected version
\begin{equation*}
\varphi^c (z_1,\cdots,z_n) :=
\sum_{I_1\sqcup \cdots \sqcup I_k=[n]}
\frac{(-1)^{k-1}}{k} \varphi(z_{I_1})
\cdots \varphi(z_{I_k}),
\end{equation*}
is given by:
\be
\label{eq-prop-cycle}
\varphi^c (z_1,\cdots,z_n) = \sum_{\substack{\text{$n$-cycles $\sigma$} \\ \epsilon_2,\cdots,\epsilon_n \in\{\pm 1\}}}
(-\epsilon_2\cdots\epsilon_n) \cdot
\prod_{i=1}^n \xi(\epsilon_{\sigma(i)} z_{\sigma(i)}, -\epsilon_{\sigma(i+1)} z_{\sigma(i+1)}),
\ee
where we use the conventions
$\epsilon_{1} :=1$ and
$\sigma(n+1):=\sigma(1)$.
\end{Proposition}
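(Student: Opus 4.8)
The plan is to expand the Pfaffian over perfect matchings, regroup the terms according to a combinatorial cycle structure, and identify the connected part with the single-cycle contributions; this is the Pfaffian counterpart of the fact --- used by Zhou in the determinant setting of \cite[Prop.~5.2]{zhou1} --- that the connected part of a determinant selects the full $n$-cycles.

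\textbf{Step 1 (Pfaffian expansion and cycle structure).} I would write $\Pf(M(n)) = \sum_{\mathfrak m}\text{sgn}(\mathfrak m)\prod_{\{p,q\}\in\mathfrak m,\,p<q}M(n)_{pq}$, the sum running over perfect matchings $\mathfrak m$ of $\{1,\dots,2n\}$. Grouping $B_r:=\{2r-1,2r\}$ into a ``block'', each $\mathfrak m$ determines a $2$-regular multigraph $G(\mathfrak m)$ on $[n]$ --- an edge $\{p,q\}\in\mathfrak m$ with $p\in B_a$, $q\in B_b$ produces an edge between $a$ and $b$, a loop if $a=b$ --- hence a partition $[n]=S_1\sqcup\dots\sqcup S_k$ into the vertex sets of the cycle-components of $G(\mathfrak m)$, together with a matching $\mathfrak m_\ell$ on $V_\ell:=\bigcup_{r\in S_\ell}B_r$ whose induced graph is a single cycle on $S_\ell$.

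\textbf{Step 2 (sign factorization --- the main point).} The crux is the identity $\text{sgn}(\mathfrak m)=\prod_\ell \text{sgn}(\mathfrak m_\ell)$, where $\text{sgn}(\mathfrak m_\ell)$ is computed after relabeling $V_\ell$ by $\{1,\dots,2|S_\ell|\}$ order-preservingly. I would prove it through the crossing-number formula $\text{sgn}(\mathfrak m)=(-1)^{\text{c}(\mathfrak m)}$, with $\text{c}$ counting interleaving pairs of arcs drawn above the line. Order-preserving relabeling preserves interleavings, so $\text{c}(\mathfrak m)=\sum_\ell\text{c}(\mathfrak m_\ell)+\sum_{\ell<\ell'}\text{c}(\mathfrak m_\ell,\mathfrak m_{\ell'})$, and it suffices that each cross-term be even. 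Reducing modulo $2$ twice, using the elementary fact that a set of $2t$ points on a line, disjoint from a given point $v$ and perfectly matched, has an even number of matched pairs separated by $v$, one finds $\text{c}(\mathfrak m_\ell,\mathfrak m_{\ell'})\equiv \#\{(u,v):u\in V_{\ell'},\,v\in V_\ell,\,u<v\}\pmod 2$; and for fixed $v\in B_r\subseteq V_\ell$ this count is $\sum_{r'\in S_{\ell'}}|B_{r'}\cap(-\infty,v)|$, with every summand $0$ or $2$ since $r'\neq r$ forces $B_{r'}$ entirely to one side of $v$. This step has no analogue in Zhou's determinant computation, where the corresponding factorization (sign of a permutation as a product over its cycles) is automatic; here it is the genuine extra input.

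\textbf{Step 3 (Möbius inversion).} Since $M(n)_{pq}$ depends only on the $z$-labels and block-parities of $p,q$, which are preserved by the relabeling, Steps 1--2 give $\Pf(M(n))=\sum_{[n]=S_1\sqcup\dots\sqcup S_k}\prod_\ell \psi(z_{S_\ell})$, where $\psi(z_S)$ sums $\text{sgn}(\mathfrak m')\prod M$ over matchings $\mathfrak m'$ whose graph is a single cycle on $S$. As $\Pf(M(n))$ is symmetric in the $z_i$ (a block transposition conjugates $M(n)$ by a permutation matrix of determinant $+1$, under which $\Pf$ is invariant), the defining Möbius-inversion formula for the connected part then yields $\varphi^c(z_1,\dots,z_n)=\psi(z_1,\dots,z_n)$ directly.

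\textbf{Step 4 (identification with the cycle formula).} It remains to match $\psi$ with the right side of \eqref{eq-prop-cycle}. A matching $\mathfrak m'$ with $G(\mathfrak m')$ a single $n$-cycle is the same as a directed cyclic order $\sigma$ of $[n]$ plus, for each block $B_r$, a choice of which vertex ($2r$, carrying $+z_r$, or $2r-1$, carrying $-z_r$) points ``forward'' along $\sigma$; recording the choices by $\epsilon_r\in\{\pm1\}$ and normalizing $\epsilon_1=1$ puts these $\mathfrak m'$ in bijection with the pairs $(\sigma,\epsilon_2,\dots,\epsilon_n)$ --- both index sets have $(n-1)!\,2^{n-1}$ elements. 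The $\mathfrak m'$-edge joining blocks $\sigma(i)$ and $\sigma(i+1)$ carries $\epsilon_{\sigma(i)}z_{\sigma(i)}$ at its $\sigma(i)$-end and $-\epsilon_{\sigma(i+1)}z_{\sigma(i+1)}$ at its $\sigma(i+1)$-end, so, after reorienting each arc to the $p<q$ convention (at the cost of a sign, by antisymmetry of $\xi$), the $M$-product becomes $\prod_{i=1}^n\xi(\epsilon_{\sigma(i)}z_{\sigma(i)},-\epsilon_{\sigma(i+1)}z_{\sigma(i+1)})$. Evaluating $\text{sgn}(\mathfrak m')$ via its crossing number and combining with these reorientation signs, all dependence on $\sigma$ cancels and leaves exactly the prefactor $-\epsilon_2\cdots\epsilon_n$; the cases $n=1,2$ checked by hand fix the normalization. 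I expect this last step --- pinning down the crossing number of a cyclic matching and reconciling it with the arc-orientation signs --- to be the most delicate part of the argument.
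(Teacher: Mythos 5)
Your proposal is correct in outline and follows the same global architecture as the paper's proof --- expand the Pfaffian over perfect matchings, organize the matchings by the cycle structure they induce on the blocks $B_r=\{2r-1,2r\}$, and extract the connected part --- but it differs in the two places where the real work happens, and in both cases your route is arguably cleaner. First, the paper proves the identity $\varphi=\sum_{\sigma\in S_n}X(\bm z;\sigma)$ and then runs an induction on $n$ through the M\"obius relations; you instead establish the factorization $\Pf(M(n))=\sum_{[n]=S_1\sqcup\cdots\sqcup S_k}\prod_\ell\psi(z_{S_\ell})$ once and invoke M\"obius inversion on the partition lattice directly, which is equivalent but avoids the induction bookkeeping. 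Second, and more substantively, the paper handles the sign by checking a single ``standard form'' permutation with $\gamma_i\equiv 1$ and then arguing that conjugation by transpositions and sign flips preserve everything; your Step 2 replaces this with the crossing-number formula $\mathrm{sgn}(\mathfrak m)=(-1)^{c(\mathfrak m)}$ and a parity argument showing the cross-terms between components vanish modulo $2$ because each block lies entirely on one side of any vertex outside it. This is a genuinely different key lemma and is more robust than the paper's conjugation argument. Two caveats: the ``elementary fact'' as you state it (that the number of matched pairs separated by $v$ is always even) is false in general --- the correct statement is that this count is congruent modulo $2$ to the number of points to the left of $v$, which is what your subsequent chain of congruences actually uses, and which is even here precisely because the points come in blocks; you should restate it as a parity claim. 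Also, your Step 4 asserts rather than proves that the crossing-number sign combined with the reorientation signs collapses to $-\epsilon_2\cdots\epsilon_n$ independently of $\sigma$; this is exactly the computation the paper carries out via its standard-form-plus-conjugation argument, so it must still be done (your bijection count $(n-1)!\,2^{n-1}$ and the $n=1,2$ checks make it plausible, but a complete proof needs either your crossing-number computation for a general cyclic matching or an invariance argument like the paper's).
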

\begin{proof}
We prove by induction on $n$.
For $n=1$,
there is only one $1$-cycle $\sigma=(1)$,
and the right-hand side of \eqref{eq-prop-cycle} is $-\xi(z_1,-z_1) = \xi(-z_1,z_1)$.
This matches with:
\begin{equation*}
\varphi^c(z_1) := \varphi(z_1) =
\Pf \left[
\begin{array}{cc}
0 & \xi(-z_1,z_1)\\
-\xi(-z_1,z_1) & 0\\
\end{array}
\right]
= \xi(-z_1,z_1).
\end{equation*}

Now assume \eqref{eq-prop-cycle} holds for $1,2,\cdots,n-1$,
and consider the case of $n$.
We introduce some notations for convenience.
Let $C_k\subset S_n$ be the subset of permutations that can be decomposed as
a product of $k$ cycles.
Given $\sigma\in C_k$,
one decomposes it as a product
$\sigma = \sigma_1\cdots\sigma_k$,
where $\sigma_j$ is a cycle
$\sigma_j = (i_j^1 i_j^2 \cdots i_j^{r_j})$.
Denote:
\begin{equation*}
X(\bm z;\sigma)= \prod_{j=1}^k
\sum_{\substack{ \epsilon(i_j^l)\in \{\pm 1\} \\ l=2,\cdots,r_j}}
\Big(-\prod_{l=2}^{r_j} \epsilon(i_j^l) \Big)
\prod_{s=1}^{r_j} \xi(\epsilon(\sigma_j(i_j^s))z_{\sigma_j(i_j^s)},
-\epsilon(\sigma_j(i_j^{s+1}))z_{\sigma_j(i_j^{s+1})}),
\end{equation*}
where we use the convention $\epsilon_j^1 :=1$ and $\sigma_j(i_j^{r_j+1}) :=\sigma_j(i_j^1)$.

Now recall that by the M\"obius inversion formula we have:
\be
\label{eq-pfmain-1}
\begin{split}
\varphi(z_1,\cdots,z_n) =&
\sum_{I_1\sqcup \cdots \sqcup I_k=[n]}
\frac{1}{k!} \varphi^c(z_{I_1})\cdots \varphi^c(z_{I_k})\\
=& \varphi^c(z_1,\cdots,z_n)
+\sum_{k\geq 2}
\sum_{I_1\sqcup \cdots \sqcup I_k=[n]}
\frac{1}{k!}
\varphi^c(z_{I_1})\cdots \varphi^c(z_{I_k}).
\end{split}
\ee
Let $[n]=I_1\sqcup\cdots \sqcup I_k$ be a decomposition of $[n]=\{1,2,\cdots,n\}$ (for $k\geq 2$),
and denote $I_j = \{i_j^1,\cdots,i_j^{|I_j|}\}$ where $i_j^1<\cdots<i_j^{|I_j|}$.
Then by induction hypothesis:
\begin{equation*}
\varphi^c (z_{I_j})=
\sum_{\substack{\sigma_j :\text{ $|I_j|$-cycle}\\ \epsilon_j^2,\cdots,\epsilon_j^{|I_j|}\in \{\pm 1\}}}
(- \epsilon_j^2\cdots\epsilon_j^{|I_j|})
\prod_{s=1}^{|I_j|} \xi(\epsilon_j^{\sigma_j(s)}z_{i_j^{\sigma_j(s)}},
-\epsilon_j^{\sigma_j(s+1)}z_{i_j^{\sigma_j(s+1)}}),
\end{equation*}
and thus:
\begin{equation*}
\varphi^c (z_{I_1})\cdots \varphi^c (z_{I_k})= (-1)^k
\sum_{\substack{\sigma_1,\cdots,\sigma_k \\ \epsilon_j^l\in \{\pm 1\}}}
\prod_{j=1}^k \prod_{l=1}^{|I_j|} \epsilon_j^l
\prod_{s=1}^{|I_j|} \xi(\epsilon_j^{\sigma_j(s)}z_{i_j^{\sigma_j(s)}},
-\epsilon_j^{\sigma_j(s+1)}z_{i_j^{\sigma_j(s+1)}}).
\end{equation*}
Now fix a decomposition $[n]=I_1\sqcup\cdots \sqcup I_k$.
Then given a family of cycles $\sigma_1,\cdots,\sigma_k$,
one can associate a permutation $\sigma\in S_n$ by:
\begin{equation*}
\sigma = (i_1^{\sigma_1(1)}\cdots i_1^{\sigma_1(|I_1|)})
\cdots (i_k^{\sigma_k(1)}\cdots i_k^{\sigma_k(|I_k|)}).
\end{equation*}
Conversely,
each permutation $\sigma\in S_n$ can be uniquely decomposed into a product of cycles.
Therefore,
from the above discussions we see:
\be
\label{eq-pfmain-2}
\sum_{\sigma\in C_k} X(z;\sigma) =
\sum_{I_1\sqcup \cdots \sqcup I_k=[n]}
\frac{1}{k!}
\varphi^c(z_{I_1})\cdots \varphi^c(z_{I_k}),
\qquad \forall k\geq 2,
\ee
where the additional factor $\frac{1}{k!}$ indicates that
there are $k!$ ways to permute the indices of the subsets $I_1,\cdots,I_k$.

Recall that the conclusion we want to prove is actually:
\begin{equation*}
\varphi^c(z_1,\cdots,z_n) =
\sum_{\sigma\in C_1} X(\bm z;\sigma),
\end{equation*}
thus by \eqref{eq-pfmain-1} and \eqref{eq-pfmain-2}, it suffices to prove:
\be
\label{eq-pfmain-3}
\varphi(z_1,\cdots,z_n) = \sum_{\sigma\in S_n} X(\bm z;\sigma).
\ee
Now we prove this equality.
First recall that:
\be
\label{eq-mainpf-4}
\begin{split}
\varphi(z_1,\cdots,z_n) = &
\sum_{\substack{(p_1,q_1,\cdots,p_n,q_n)\\p_i<q_i,\quad p_1<\cdots<p_n}}
\text{sgn(p,q)}\cdot \prod_{i=1}^n (M(n))_{p_iq_i}\\
=& \sum_{\substack{(p_1,q_1,\cdots,p_n,q_n)\\p_i<q_i,\quad p_1<\cdots<p_n}}
\text{sgn(p,q)}\cdot \prod_{i=1}^n
\xi\big((-1)^{p_i} z_{\lceil \frac{p_i}{2} \rceil }, (-1)^{q_i} z_{\lceil \frac{q_i}{2} \rceil }
\big),
\end{split}
\ee
where $(p_1,q_1,\cdots,p_n,q_n)$ is a permutation of $(1,2,\cdots,2n)$.
Notice that the arguments
$\{(-1)^{p_i} z_{\lceil \frac{p_i}{2} \rceil }, (-1)^{q_j} z_{\lceil \frac{q_j}{2} \rceil } \}$
run over $\{\pm z_1,\cdots,\pm z_n\}$,
thus one can always decompose the product
\be
\label{eq-mainpf-product}
\prod_{i=1}^n
\xi\big((-1)^{p_i} z_{\lceil \frac{p_i}{2} \rceil }, (-1)^{q_i} z_{\lceil \frac{q_i}{2} \rceil }
\big)
\ee
into some `loops' of the form:
\begin{equation*}
\pm \xi(z_{i_1},-\gamma_2 z_{i_2})\xi(\gamma_2 z_{i_2},-\gamma_3 z_{i_3})
\cdots \xi(\gamma_r z_{i_r},- z_{i_1}),
\end{equation*}
where $\gamma_2,\cdots,\gamma_r = \pm 1$,
and we have used $\xi(x,y) = -\xi(y,x)$ to rearrange the arguments suitably
and produced some factors $\pm1$.
A loop of this form determines a cycle of length $r$ in $S_n$,
and a decomposition into such loops corresponds to an element $\sigma$ in $S_n$
which is the product of these cycles $\sigma=(i_1\cdots i_r)(j_1 \cdots j_s)\cdots$.
One easily sees that for each cycle $(i_1\cdots i_r)$,
the $(r-2)$ signs $\gamma_2,\cdots,\gamma_r = \pm1$ can be chosen arbitrarily.
Moreover,
a choice of the permutation $(p_i,q_i)_{i=1}^n$ of $[2n]$
(with $p_i<q_i$ and $p_1<\cdots p_n$) is equivalent to
a choice of the permutation $\sigma \in S_n$ together with this signs $\pm 1$ for each cycle.
Thus we obtain:
\begin{equation*}
\varphi(z_1,\cdots,z_n) = \sum_{\sigma\in S_n} Y(\bm z;\sigma),
\end{equation*}
where $Y(\bm z;\sigma)$ is of the form:
\begin{equation*}
Y(\bm z;\sigma)= \pm \prod_{j=1}^k
\sum_{\substack{ \gamma(i_j^l)\in \{\pm 1\} \\ l=2,\cdots,r_j}}
\prod_{l=2}^{r_j} \gamma(i_j^l)
\prod_{s=1}^{r_j} \xi(\gamma(\sigma(i_j^s))z_{\sigma(i_j^s)},
-\gamma(\sigma(i_j^{s+1}))z_{\sigma(i_j^{s+1})}),
\end{equation*}
and the choice of the sign $\pm$ is determined by the
rearrangement of the product \eqref{eq-mainpf-product} using
$\xi(x,y) = -\xi(y,x)$.
Notice that fixing a permutation $\sigma\in S_n$ is actually equivalent to
fixing a family of cycles $\sigma_1,\cdots,\sigma_k$
(and there are $k!$ ways to permuting the subscripts),
thus by comparing $X(\bm z;\sigma)$ with $Y(\bm z;\sigma)$
we see that in order to prove \eqref{eq-pfmain-3},
we only need to show the sign $\pm$ in $Y(\bm z;\sigma)$ is $(-1)^k$ for $\sigma\in C_k$.

First we consider a permutations $\sigma^0\in S_n$ of the following standard form:
\be
\label{eq-pfmain-standard}
\sigma^0 = (1,\cdots, r_1)(r_1+1,\cdots, r_2)\cdots (n-r_k+1,\cdots, n),
\ee
together with the simplest choice of signs $\gamma_i \equiv 1$.
In this case the sign $\pm$ is simply $\text{sgn}(p,q)$.
Notice that the first cycle $(1,2,\cdots,r_1)$ corresponds to the product:
\begin{equation*}
\xi(z_1,-z_2)\xi(z_2,-z_3)\cdots\xi(z_{r_1},-z_1),
\end{equation*}
and the corresponding permutation $(p_i,q_i)$ of $[2n]$ contains the cycle
$(2,3,\cdots,2r_1,1)$ which is an odd permutation in $S_{2n}$.
Similarly,
every cycle in $\sigma^0$ corresponds to an odd cycle in $(p_i,q_i)$,
therefore $\text{sgn}(p,q) = (-1)^k$ where $k$ is the number of cycles,
which proves the conclusion in this special case.

Now we consider a general permutation $\sigma\in S_n$ together with the simplest choice
of signs $\gamma_i \equiv 1$.
Assume that $\sigma$ contains $k$ cycles,
then it is conjugate to the standard form \eqref{eq-pfmain-standard} for some $r_i$,
i.e.,
one can find a sequence of transpositions $\tau_1,\cdots,\tau_t \in S_n$ such that
$\sigma = \tau_t\cdots\tau_1 \sigma^0 \tau_1^{-1}\cdots\tau_t^{-1}$.
It is easy to check that conjugation by a transposition $\tau_i$
does not change $\text{sgn}(p,q)$ and the sign produced in rearranging
the produce \eqref{eq-mainpf-product}.
Furthermore,
replacing a sign $\gamma_i=1$ by $\gamma_i=-1$ also preserves the sign $\pm $ in $Y(\bm z;\sigma)$.
Thus the sign $\pm $ in $Y(\bm z;\sigma)$ is always $(-1)^k$ where $k$ is
the number of cycles in $\sigma\in S_n$,
thus we have proved $Y(\bm z;\sigma) = X(\bm z;\sigma)$.
\end{proof}

Now we can state our main result of this work.
Take the matrix $M(n)$ to be $(C_{ij})_{1\leq i,j\leq 2n}$
(see Theorem \ref{thm-bos-npt}),
then we find a way to compute the connected $n$-point functions $\langle H(z_1)\cdots H(z_n)\rangle_A^c$.
For $n=1$,
this simply tell us:
\begin{equation*}
\langle H(z_1) \rangle_A^c =\xi(-z_1,z_1) =A(-z_1,z_1).
\end{equation*}
Notice that the special case $\sigma(i)=\sigma(i+1)$ only appearing in the case $n=1$ since $\sigma$ is an $n$-cycle.

And for $n\geq 2$,
the above proposition gives the following formula for
the connected bosonic $n$-point functions:
\begin{equation*}
\langle H(z_1)\cdots H(z_n) \rangle_A^c =
\sum_{\substack{\text{$n$-cycles $\sigma$} \\ \epsilon_2,\cdots,\epsilon_n \in\{\pm 1\}}}
(-\epsilon_2\cdots\epsilon_n) \cdot
\prod_{i=1}^n \xi(\epsilon_{\sigma(i)} z_{\sigma(i)}, -\epsilon_{\sigma(i+1)} z_{\sigma(i+1)}),
\end{equation*}
where for $\sigma(i)<\sigma(i+1)$,
\be
\label{eq-def-xiA-1}
\xi(\epsilon_{\sigma(i)} z_{\sigma(i)}, -\epsilon_{\sigma(i+1)} z_{\sigma(i+1)}) :=
\wA (\epsilon_{\sigma(i)} z_{\sigma(i)}, -\epsilon_{\sigma(i+1)} z_{\sigma(i+1)});
\ee
and for $\sigma(i)>\sigma(i+1)$,
\be
\label{eq-def-xiA-2}
\begin{split}
\xi(\epsilon_{\sigma(i)} z_{\sigma(i)}, -\epsilon_{\sigma(i+1)} z_{\sigma(i+1)}) :=&
-\xi (-\epsilon_{\sigma(i+1)} z_{\sigma(i+1)} ,\epsilon_{\sigma(i)} z_{\sigma(i)})\\
=& -\wA( -\epsilon_{\sigma(i+1)} z_{\sigma(i+1)} ,\epsilon_{\sigma(i)} z_{\sigma(i)}).
\end{split}
\ee
Thus by Lemma \ref{lem-correlator},
we obtain the following:
\begin{Theorem}
\label{thm-main-conn}
Let $\tau_A = \tau(\bm t)$ be a tau-function of the BKP hierarchy with $\tau(0)=1$,
and let $A(w,z),\wA(w,z)$ be the generating series of the affine coordinates
defined by \eqref{eq-def-seriesA} and \eqref{eq-def-Ahat} respectively.
Denote $F_A=\log \tau_A$,
then we have:
\be
\sum_{i> 0: \text{ odd}}
\frac{\pd F_A(\bm t)}{\pd t_{i}} \bigg|_{\bm t=0}
\cdot z^{-i}
=A(-z,z),
\ee
and for $n\geq 2$,
\be
\begin{split}
&\sum_{i_1,\cdots,i_n> 0: \text{ odd}}
\frac{\pd^n F_A(\bm t)}{\pd t_{i_1}\cdots \pd t_{i_n}} \bigg|_{\bm t=0}
\cdot z_1^{-i_1}\cdots z_n^{-i_n}
=
-\delta_{n,2} \cdot
i_{z_1,z_2}
\frac{z_1z_2(z_2^2+z_1^2)}{2(z_1^2-z_2^2)^2}\\
&\qquad\qquad\qquad
+ \sum_{\substack{ \sigma: \text{ $n$-cycle} \\ \epsilon_2,\cdots,\epsilon_n \in\{\pm 1\}}}
(-\epsilon_2\cdots\epsilon_n) \cdot
\prod_{i=1}^n \xi(\epsilon_{\sigma(i)} z_{\sigma(i)}, -\epsilon_{\sigma(i+1)} z_{\sigma(i+1)}),
\end{split}
\ee
where $\xi$ is given by \eqref{eq-def-xiA-1}-\eqref{eq-def-xiA-2},
and we use the conventions
$\epsilon_{1} :=1$,
$\sigma(n+1):=\sigma(1)$.
\end{Theorem}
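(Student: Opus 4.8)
The plan is to assemble the theorem from three ingredients already in place: the Pfaffian expression for the (disconnected) bosonic $n$-point functions in Theorem~\ref{thm-bos-npt}, the combinatorial identity for the connected version of a Pfaffian family proved in the Proposition immediately preceding this statement, and the dictionary between bosonic $n$-point functions and free-energy correlators in Lemma~\ref{lem-correlator}. The case $n=1$ should be disposed of by hand: the $n=1$ instance of Theorem~\ref{thm-bos-npt} gives $\langle H(z)\rangle_A=A(-z,z)$, so $\langle H(z)\rangle_A^c=\langle H(z)\rangle_A=A(-z,z)$, and Lemma~\ref{lem-correlator} carries no correction term at $n=1$; this is the first displayed identity.

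For $n\ge 2$ the key step is to realize the matrix $C=(C_{ij})_{1\le i,j\le 2n}$ of \eqref{eq-def-Cij} as a matrix of the type $M(n)$ appearing in the preceding Proposition, i.e.\ to produce a single antisymmetric $\xi$ with $\xi\big((-1)^i z_{\lceil i/2\rceil},(-1)^j z_{\lceil j/2\rceil}\big)=C_{ij}$. Here I would use that $z_1,\dots,z_n$ are independent variables, so the $2n$ arguments $(-1)^i z_{\lceil i/2\rceil}$, read off in increasing $i$, are the pairwise distinct symbols $-z_1,z_1,-z_2,z_2,\dots,-z_n,z_n$; consequently the upper triangle of $C$ defines a function $f(a,b)$ of ordered pairs of distinct symbols with ``$a$ before $b$'' in this list — explicitly $f(a,b)=A(-z_r,z_r)$ if $\{a,b\}=\{-z_r,z_r\}$ and $f(a,b)=\wA(a,b)$ otherwise — and I set $\xi(a,b):=f(a,b)$ when $a$ precedes $b$ and $\xi(a,b):=-f(b,a)$ otherwise. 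This $\xi$ is antisymmetric by construction, and one checks immediately that $M(n)=C$, hence $\varphi(z_1,\dots,z_n)=\Pf(C)=\langle H(z_1)\cdots H(z_n)\rangle_A$ and the Proposition computes $\langle H(z_1)\cdots H(z_n)\rangle_A^c$ as the stated sum over $n$-cycles with this $\xi$.

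The part that requires care is the coexistence of the two kinds of entries of $C$ — the block-diagonal $A(-z_r,z_r)$ and the off-block $\wA$ — together with the fact that $\wA$ is itself not antisymmetric (by the Remark following Proposition~\ref{prop-ferm-npt}, $\wA(w,z)+\wA(z,w)=-\frac{1}{2}\,\delta(-z,w)$), which is exactly why $\xi$ on the ``below-diagonal'' side must be defined by the flip $-f(b,a)$. I would then show that for $n\ge 2$ the exceptional value $\xi(-z_r,z_r)=A(-z_r,z_r)$ never enters the cycle sum: in a term $\prod_{i=1}^n\xi\big(\epsilon_{\sigma(i)}z_{\sigma(i)},-\epsilon_{\sigma(i+1)}z_{\sigma(i+1)}\big)$ indexed by an $n$-cycle $\sigma$ one always has $\sigma(i)\ne\sigma(i+1)$, so the unordered pair of arguments is never $\{-z_r,z_r\}$ and every factor is a genuine $\wA$; reading off whether $\epsilon_{\sigma(i)}z_{\sigma(i)}$ precedes or follows $-\epsilon_{\sigma(i+1)}z_{\sigma(i+1)}$ in the list $-z_1,z_1,\dots$ — which is just the comparison of $\sigma(i)$ with $\sigma(i+1)$ — reproduces exactly \eqref{eq-def-xiA-1} when $\sigma(i)<\sigma(i+1)$ and \eqref{eq-def-xiA-2} when $\sigma(i)>\sigma(i+1)$. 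Finally I would substitute the resulting expression for $\langle H(z_1)\cdots H(z_n)\rangle_A^c$ into Lemma~\ref{lem-correlator}, which contributes the $-\delta_{n,2}\cdot i_{z_1,z_2}\frac{z_1z_2(z_2^2+z_1^2)}{2(z_1^2-z_2^2)^2}$ term and completes the proof. I expect no genuine obstacle beyond this bookkeeping: once the order-dependent $\xi$ is set up and matched with \eqref{eq-def-xiA-1}--\eqref{eq-def-xiA-2}, the theorem follows from the three cited results with no further computation.
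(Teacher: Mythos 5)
Your proposal is correct and follows essentially the same route as the paper: it combines the Pfaffian formula of Theorem \ref{thm-bos-npt}, the combinatorial proposition on connected Pfaffians, and Lemma \ref{lem-correlator}, with the same observation that the diagonal-block entries $A(-z_r,z_r)$ can only occur when $\sigma(i)=\sigma(i+1)$, i.e.\ only for $n=1$. Your explicit construction of the antisymmetric $\xi$ from the upper triangle of $C$ (via the ordering of the symbols $-z_1,z_1,\dots,-z_n,z_n$) is a slightly more careful rendering of a step the paper treats informally, but it is the same argument.
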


\begin{Example}
Here we write down the explicit formulas for small $n$.
For $n=2$,
there is only one $2$-cycle $\sigma=(12)$,
thus:
\begin{equation*}
\begin{split}
&\sum_{n_1,n_2>0:\text{ odd}}
\frac{\pd^2 F_A (\bm t)}{\pd t_{n_1} \pd t_{n_2}}\bigg|_{\bm t=0} z_1^{-n_1}z_2^{-n_2}\\
=& -i_{z_1,z_2}\frac{z_1z_2(z_2^2+z_1^2)}{2(z_1^2-z_2^2)^2}
-\widehat A(z_1,z_2) \widehat A(-z_1,-z_2)
+ \widehat A(z_1,-z_2) \widehat A(-z_1,z_2).
\end{split}
\end{equation*}
And for $n=3$,
there are two $3$-cycles $\sigma=(123),(132)$,
thus the result is:
\begin{equation*}
\begin{split}
&\sum_{n_1,n_2,n_3>0:\text{ odd}}
\frac{\pd^3 F_A(\bm t)}{\pd t_{n_1} \pd t_{n_2} \pd t_{n_3}}\bigg|_{\bm t=0}
z_1^{-n_1}z_2^{-n_2} z_3^{-n_3}
\\
=&
- \wA(z_1,z_2)\wA(-z_2,-z_3)\wA(-z_1,z_3) + \wA(z_1,z_2)\wA(-z_2,z_3)\wA(-z_1,-z_3)\\
& - \wA(z_1,-z_2)\wA(z_2,z_3)\wA(-z_1,-z_3) + \wA(z_1,-z_2)\wA(z_2,-z_3)\wA(-z_1,z_3) \\
& + \wA(z_1,z_3)\wA(-z_2,-z_3)\wA(-z_1,z_2) - \wA(z_1,z_3)\wA(z_2,-z_3)\wA(-z_1,-z_2)\\
& + \wA(z_1,-z_3)\wA(z_2,z_3)\wA(-z_1,-z_2) - \wA(z_1,-z_3)\wA(-z_2,z_3)\wA(-z_1,z_2) .
\end{split}
\end{equation*}
For $n=4$,
there are six $4$-cycles $\sigma=(1234),(1243),(1324),(1342),(1423),(1432)$,
and there are $6\times 2^3 =48$ terms in the right-hand side of the formula.
For simplicity here we only write down the first $8$ terms
(corresponding to $\sigma=(1234)$):
\begin{equation*}
\begin{split}
&\sum_{n_1,n_2,n_3,n_4>0:\text{ odd}} \frac{\pd^4 F_A(\bm t)}
{\pd t_{n_1} \pd t_{n_2} \pd t_{n_3} \pd t_{n_4}}\bigg|_{\bm t=0}
z_1^{-n_1}z_2^{-n_2}z_3^{-n_3}z_4^{-n_4}
\\
=& \wA(z_1,-z_2) \wA(z_2,-z_3)\wA(z_3,-z_4) \wA(-z_1,z_4) \\
&- \wA(z_1,-z_2) \wA(z_2,-z_3)\wA(z_3,z_4) \wA(-z_1,-z_4)\\
&- \wA(z_1,-z_2) \wA(z_2,z_3)\wA(-z_3,-z_4) \wA(-z_1,z_4)\\
&+ \wA(z_1,-z_2) \wA(z_2,z_3)\wA(-z_3,z_4) \wA(-z_1,-z_4)\\
&- \wA(z_1,z_2) \wA(-z_2,-z_3)\wA(z_3,-z_4) \wA(-z_1,z_4)\\
&+ \wA(z_1,z_2) \wA(-z_2,-z_3)\wA(z_3,z_4) \wA(-z_1,-z_4)\\
&+ \wA(z_1,z_2) \wA(-z_2,z_3)\wA(-z_3,-z_4) \wA(-z_1,z_4)\\
&- \wA(z_1,z_2) \wA(-z_2,z_3)\wA(-z_3,z_4) \wA(-z_1,-z_4)
+\text{other $40$ terms}.
\end{split}
\end{equation*}
The other $40$ terms are obtained by permuting the indices $\{2,3,4\}$,
suitably exchange the arguments in $\wA$ such that $i<j$ in $\wA(\pm z_i, \pm z_j)$,
and then suitably multiplying by some $\pm 1$ on each term.
\end{Example}

\section{Tau-Functions of KdV Hierarchy: KP vs. BKP}
\label{sec-kdv}

In \cite{al1},
Alexandrov showed that if $\tau = \tau(\bm t)$ is a tau-function
of the KdV hierarchy,
where $\bm t = (t_1,t_3,t_5,\cdots)$ are the KdV-time variables,
then
\begin{equation*}
\tilde\tau (\bm t):= \tau(\bm t/2)
\end{equation*}
is a tau-function of the BKP hierarchy with time variables $\bm t$.
Moreover,
it is well-known that the KdV hierarchy is a reduction of the KP hierarchy,
thus $\tau$ is automatically a tau-function of the KP hierarchy.
Now given a tau-function $\tau$ of the KdV hierarchy,
one has two parallel approaches to study its affine coordinates
and compute the connected $n$-point functions:
\begin{itemize}
\item
[1)]The KP-affine coordinates $\{a_{n,m}^{\text{KP}}\}_{n,m\geq 0}$ of $\tau$;
\item
[2)]The BKP-affine coordinates $\{a_{n,m}^{\text{BKP}}\}_{n,m\geq 0}$ of $\tilde\tau$.
\end{itemize}
In this section,
we show that the two generating series of these two family of affine coordinates
are related by a simple relation (see Theorem \ref{thm-KPBKP}).

\subsection{Relation between KP- and BKP-affine coordinates}

Let $\tau(\bm t)$ be a tau-function of the KdV hierarchy with $\tau(0)=1$,
and let $\tilde\tau (\bm t):= \tau(\bm t/2)$.
To avoid confusions,
we denote by $A^{\text{BKP}}(w,z)$ the generating seires \eqref{eq-def-seriesA} of affine coordinates
$\{a_{n,m}^{\text{BKP}}\}_{n,m\geq 0}$ of the BKP tau-function $\tilde\tau(\bm t)$,
and denote by
\be
A^{\text{KP}} (x,y) = \sum_{n,m\geq 0} a_{n,m}^{\text{KP}} x^{-n-1}y^{-m-1}
\ee
the generating series of the affine coordinates of the KP tau-function $\tau(\bm T)$
(see \cite{zhou1} for details).
Notice here $\bm T = (T_1,T_2,T_3,\cdots)$ are the KP-times variables,
and $t_n = T_n$ for every $n>0$ odd.
The KP tau-function $\tau(\bm t)=\tau(\bm T)$ is independent of $(T_2,T_4,T_6,\cdots)$
by the definition of the KdV hierarchy.

In \cite[\S 5.6]{zhou1},
Zhou has proved the following relation
for KP tau-functions:
\be
A^{\text{KP}} (x,y) = i_{x,y} \frac{1}{x-y} \Big(\tau(\bm T)\big|_{T_n= \frac{1}{n}(y^{-n}-x^{-n})}-1\Big),
\ee
thus in our case we have:
\be
\label{eq-KPBKP-1}
A^{\text{KP}} (x,-y) = i_{x,y} \frac{1}{x+y} \Big(\tau(\bm t)\big|_{t_n= -\frac{1}{n}(x^{-n}+y^{-n})}-1\Big),
\ee
since $\tau$ is independent of $(T_2,T_4,\cdots)$.

On the other hand,
by Proposition \ref{prop-gen-tau} we have:
\be
\label{eq-KPBKP-2}
\begin{split}
A^{\text{BKP}} (w,z)
=&  -i_{w,z}\frac{w-z}{4(w+z)}\cdot \Big(
\tilde\tau (\bm t) \big|_{t_n = -\frac{2}{n}(w^{-n} +z^{-n})} -1 \Big)\\
=& -i_{w,z}\frac{w-z}{4(w+z)}\cdot \Big(
\tau (\bm t) \big|_{t_n = -\frac{1}{n}(w^{-n} +z^{-n})} -1 \Big).
\end{split}
\ee
Now comparing \eqref{eq-KPBKP-1} with \eqref{eq-KPBKP-2},
we obtain the following:
\begin{Theorem}
\label{thm-KPBKP}
Let $\tau(\bm t)$ be a tau-function of the KdV hierarchy.
Then:
\be
A^{\text{BKP}} (w,z) = -\frac{w-z}{4} \cdot A^{\text{KP}}(w,-z).
\ee
\end{Theorem}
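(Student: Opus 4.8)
The plan is to recognize both generating series as the \emph{same} specialization of the underlying tau-function, so that the asserted identity reduces to comparing two explicit rational prefactors.

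First I would recall Zhou's description of the KP-affine coordinates as an evaluation of the tau-function (see \cite[\S 5.6]{zhou1}): for a KP tau-function $\tau(\bm T)$ with $\tau(0)=1$,
\begin{equation*}
A^{\text{KP}}(x,y) = i_{x,y}\frac{1}{x-y}\Big(\tau(\bm T)\big|_{T_n=\frac1n(y^{-n}-x^{-n})}-1\Big).
\end{equation*}
Setting $x=w$, $y=-z$ and invoking the KdV reduction — $\tau$ does not depend on the even KP-times $T_2,T_4,\dots$, while for odd $n$ one has $T_n=t_n$ and the specialization becomes $t_n=-\frac1n(w^{-n}+z^{-n})$ — yields \eqref{eq-KPBKP-1}, namely $A^{\text{KP}}(w,-z)=i_{w,z}\frac1{w+z}\big(\tau(\bm t)\big|_{t_n=-\frac1n(w^{-n}+z^{-n})}-1\big)$.

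Next I would apply Proposition \ref{prop-gen-tau} to the BKP tau-function $\tilde\tau(\bm t)=\tau(\bm t/2)$, which is legitimate since $\tilde\tau$ is a BKP tau-function by \cite{al1}. That proposition expresses $A^{\text{BKP}}(w,z)$ as $-i_{w,z}\frac{w-z}{4(w+z)}\big(\tilde\tau(\bm t)\big|_{t_n=-\frac2n(w^{-n}+z^{-n})}-1\big)$; and because $\tilde\tau(\bm t)=\tau(\bm t/2)$, substituting $t_n\mapsto -\frac2n(w^{-n}+z^{-n})$ in $\tilde\tau$ is the same as substituting $t_n\mapsto -\frac1n(w^{-n}+z^{-n})$ in $\tau$, which gives \eqref{eq-KPBKP-2}. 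Comparing \eqref{eq-KPBKP-1} with \eqref{eq-KPBKP-2}, the common factor $\tau(\bm t)\big|_{t_n=-\frac1n(w^{-n}+z^{-n})}-1$ cancels, leaving $A^{\text{BKP}}(w,z)=-\frac{w-z}{4}A^{\text{KP}}(w,-z)$.

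The one point deserving care — and where I would be most careful — is the bookkeeping of the formal-expansion prescriptions: one must confirm that the prescription $i_{x,y}$ in Zhou's formula, after $y\mapsto -z$, matches the prescription $i_{w,z}$ (expansion in the region $|w|>|z|$) used in Proposition \ref{prop-gen-tau}, and that the vanishing of $\partial\tau/\partial T_{2k}$ is correctly invoked so that Zhou's formula, a priori depending on all KP-times, genuinely collapses onto the BKP-times $t_1,t_3,t_5,\dots$. Once this is checked the argument is immediate; there is essentially no nontrivial computation, only the observation that the two affine-coordinate series are one and the same tau-function evaluation up to the explicit factors $\tfrac1{w+z}$ and $\tfrac{w-z}{4(w+z)}$.
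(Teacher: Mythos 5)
Your proposal is correct and follows exactly the route the paper takes: specialize Zhou's evaluation formula for $A^{\text{KP}}$ at $(w,-z)$ using the KdV reduction to obtain \eqref{eq-KPBKP-1}, apply Proposition \ref{prop-gen-tau} to $\tilde\tau(\bm t)=\tau(\bm t/2)$ to obtain \eqref{eq-KPBKP-2}, and compare the two prefactors. The cautionary remarks about the expansion prescriptions and the even-time independence are sensible but the argument is the same as the paper's.
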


\subsection{Affine coordinates for a special class of KdV tau-functions}

\label{sec-phi12}

In \cite[\S 6.9]{zhou1} Zhou derived a simple formula
for the generating series of the KP-affine coordinates
of the Witten-Kontsevich tau-function in terms of the Faber-Zagier series,
using a result of Balogh-Yang in \cite{by}.
One easily sees that this method applies to a family of tau-functions
of the KdV hierarchy satisfies a special condition
(the condition $\det G(z)=1$ below).
This will provide us a way to find simple formulas for
the generating series of KP- and BKP-affine coordinates
of such KdV tau-functions.
In this subsection,
we first give a brief review of this method for such tau-functions,
and then combine it with Theorem \ref{thm-KPBKP}.

In Sato's theory \cite{sa},
a tau-function of the KdV hierarchy corresponds to
an element $W$ in the big cell of the Sato-Grassmannian satisfying the condition $z^2 W \subset W$.
Here $W$ is a subspace
$W\subset \bC[z]\oplus z^{-1}\bC[[z^{-1}]]$ of the form:
\begin{equation*}
W= \text{span} \{\Phi_1(z),\Phi_2(z),\Phi_3(z),\cdots \},
\end{equation*}
where
\begin{equation*}
\Phi_i(z) = z^{i-1} + \text{lower order terms}
\end{equation*}
is Laurent series in $z$ of degree $i-1$.
The subspace $W$ is uniquely determined by the first two basis vectors
$\Phi_1(z)$ and $\Phi_2(z)$,
since one has:
\begin{equation*}
W= \text{span} \{ z^{2n}\Phi_1(z), z^{2n}\Phi_2(z) \}_{n\geq 0}.
\end{equation*}
Denote:
\be
\label{eq-phi-12}
\Phi_1(z) = 1+\sum_{n\geq 1} a_n z^{-n},
\qquad
z^{-1}\Phi_2(z) = 1+\sum_{n\geq 1} b_n z^{-n},
\ee
and denote by $\{a_{n,m}^{\text{KP}}\}_{n,m\geq 0}$ the KP-affine coordinates
of this tau-function.
The following formula is due to Balogh-Yang \cite[Lemma 2.4]{by}:
\be
\label{eq-by-general}
\sum_{m,n\geq 0} \left[
\begin{array}{cc}
a_{2n,2m+1}^{\text{KP}} & a_{2n+1,2m+1}^{\text{KP}}\\
a_{2n,2m}^{\text{KP}} & a_{2n+1,2m}^{\text{KP}}
\end{array}
\right] x^{-m-1}y^{-n-1}=
\frac{1}{x-y} \big(I - G(x)G(y)^{-1}\big),
\ee
where $G$ is the matrix:
\begin{equation*}
G(z)= \left[
\begin{array}{cc}
1+\sum_{n\geq 1} a_{2n}z^{-n} & \sum_{n\geq 0} b_{2n+1}z^{-n}\\
\sum_{n\geq 1}a_{2n-1} z^{-n} & 1+\sum_{n\geq 1} b_{2n} z^{-n}
\end{array}
\right].
\end{equation*}
If the condition $\det (G(z)) =1$ holds,
then:
\begin{equation*}
G(y)^{-1}= \left[
\begin{array}{cc}
 1+\sum_{n\geq 1} b_{2n} z^{-n} & -\sum_{n\geq 0} b_{2n+1}z^{-n}\\
-\sum_{n\geq 1}a_{2n-1} z^{-n} & 1+\sum_{n\geq 1} a_{2n}z^{-n}
\end{array}
\right].
\end{equation*}
The relation \eqref{eq-by-general} for $2\times2$ matrices gives us four identities,
and the first one is:
\begin{equation*}
\begin{split}
&\sum_{m,n\geq 0}a_{2n, 2m+1}^{\text{KP}} x^{-m-1}y^{-n-1}\\
=&\frac{1}{x-y}\Big(1-\sum_{k,l\geq 0} a_{2k}x^{-k} b_{2l}y^{-l}
+ \sum_{k,l\geq 0} b_{2k+1}x^{-k} a_{2l-1}y^{-l}
\Big),
\end{split}
\end{equation*}
and taking $x=z^2,y=w^2$ gives:
\begin{equation*}
\begin{split}
&\sum_{m,n\geq 0}a_{2n, 2m+1}^{\text{KP}} z^{-2m-2}w^{-2n-1}\\
=&\frac{w}{z^2-w^2}\Big(1-\sum_{k,l\geq 0} a_{2k}z^{-2k} b_{2l}w^{-2l}
+ \sum_{k,l\geq 0} b_{2k+1}z^{-2k} a_{2l-1}w^{-2l}
\Big).
\end{split}
\end{equation*}
One can similarly write down the other three identities
and then sum the four identities together,
and the final result is
(see \cite[(282)]{zhou1} for the case of the Witten-Kontsevich tau-function):
\be
\sum_{m,n\geq 0}a_{n,m}^{\text{KP}}w^{-n-1} z^{-m-1}
=i_{w,z} \frac{1}{z-w} + \frac{\Phi_1(z)\Phi_2(-w)-\Phi_2(z)\Phi_1(-w)}{z^2-w^2}.
\ee
Thus by Theorem \ref{thm-KPBKP},
we obtain the following formulas
for the generating series of the BKP-affine coordinates $A^{\text{BKP}}$, $\wA^{\text{BKP}}$
defined by \eqref{eq-def-seriesA} and \eqref{eq-def-Ahat} respectively:
\begin{Proposition}
\label{prop-phi12}
Let $\tau(\bm t)$ be a tau-function of the KdV hierarchy satisfying the condition $\det G(z)=1$.
Then the generating series of the BKP-affine coordinates
$\{a_{m,n}^{\text{BKP}}\}$ of
$\tilde\tau(\bm t) = \tau(\bm t/2)$ is given by:
\be
A^{\text{BKP}}(w,z)
= \frac{w-z+\Phi_1(-z)\Phi_2(-w)-\Phi_1(-w)\Phi_2(-z)}{4(w+z)},
\ee
and
\be
\wA^{\text{BKP}}(w,z)
= \frac{\Phi_1(-z)\Phi_2(-w)-\Phi_1(-w)\Phi_2(-z)}{4(w+z)},
\ee
where $\Phi_1(z),\Phi_2(z)$ are the first two basis vectors \eqref{eq-phi-12}
of the corresponding point in the Sato-Grassmannian.
\end{Proposition}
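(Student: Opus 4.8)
The plan is to combine Theorem~\ref{thm-KPBKP} with the closed formula for the generating series of the KP-affine coordinates obtained just above from the Balogh--Yang identity \eqref{eq-by-general} under the hypothesis $\det G(z)=1$, namely
\begin{equation*}
A^{\text{KP}}(w,z)=\sum_{m,n\geq 0}a_{n,m}^{\text{KP}}\,w^{-n-1}z^{-m-1}
= i_{w,z}\frac{1}{z-w}+\frac{\Phi_1(z)\Phi_2(-w)-\Phi_2(z)\Phi_1(-w)}{z^2-w^2}.
\end{equation*}
First I would substitute $z\mapsto -z$ in this identity. Since the regularization $i_{w,z}$ (expansion on $|w|>|z|$, i.e.\ in powers of $z/w$) is insensitive to the sign of $z$, and $(-z)^2=z^2$, this produces
\begin{equation*}
A^{\text{KP}}(w,-z)=-i_{w,z}\frac{1}{w+z}+\frac{\Phi_1(-z)\Phi_2(-w)-\Phi_2(-z)\Phi_1(-w)}{z^2-w^2}.
\end{equation*}

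Next I would feed this into the relation $A^{\text{BKP}}(w,z)=-\frac{w-z}{4}\,A^{\text{KP}}(w,-z)$ supplied by Theorem~\ref{thm-KPBKP} and simplify using the factorization $z^2-w^2=-(w-z)(w+z)$. The prefactor acting on the first summand produces exactly $i_{w,z}\frac{w-z}{4(w+z)}$; in the second summand the factor $(w-z)$ cancels against $z^2-w^2$, leaving $\frac{\Phi_1(-z)\Phi_2(-w)-\Phi_1(-w)\Phi_2(-z)}{4(w+z)}$ after using $\Phi_2(-z)\Phi_1(-w)=\Phi_1(-w)\Phi_2(-z)$. Adding the two pieces yields the asserted expression for $A^{\text{BKP}}(w,z)$, where the term $\frac{w-z}{4(w+z)}$ is to be read as its $i_{w,z}$-expansion, in accordance with the convention behind \eqref{eq-def-seriesA}. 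The companion formula for $\wA^{\text{BKP}}(w,z)$ then follows immediately from the definition \eqref{eq-def-Ahat}, which subtracts precisely $i_{w,z}\frac{w-z}{4(w+z)}$, so that only the $\Phi$-term survives.

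I do not anticipate a genuine obstacle here, since the whole argument is an elementary substitution into already-established identities; the points requiring care are purely about formal bookkeeping. One should confirm that $z\mapsto -z$ is a legitimate operation on the Laurent series in play and that the cancellation $\frac{w-z}{z^2-w^2}=\frac{-1}{w+z}$ introduces no spurious pole along $w=\pm z$ — this is the case because the numerator $\Phi_1(z)\Phi_2(-w)-\Phi_2(z)\Phi_1(-w)$ is divisible by $z^2-w^2$ in the relevant ring (it vanishes along $z^2=w^2$, being a Wronskian-type combination of the first two basis vectors of the Sato point), so the quotient is a bona fide formal series; one should likewise check that the two presentations of $\frac{w-z}{4(w+z)}$ coincide term by term under the $i_{w,z}$-regularization, which is the only place where the expansion direction is used.
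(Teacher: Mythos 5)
Your argument is correct and is essentially the paper's own proof: substitute $z\mapsto -z$ into the Balogh--Yang-derived formula for $A^{\text{KP}}$ and multiply by $-\tfrac{w-z}{4}$ as dictated by Theorem~\ref{thm-KPBKP}, then subtract $i_{w,z}\tfrac{w-z}{4(w+z)}$ to pass to $\wA^{\text{BKP}}$. One small imprecision in your cautionary aside: the numerator $\Phi_1(z)\Phi_2(-w)-\Phi_2(z)\Phi_1(-w)$ vanishes at $z=-w$ but not in general at $z=w$, so it is divisible by $w+z$ rather than by all of $z^2-w^2$; since after the cancellation only the factor $w+z$ remains in the denominator, this is exactly what is needed and the argument stands.
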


Now one can plug $A^{\text{BKP}}(w,z)$ and $\wA^{\text{BKP}}(w,z)$ into
Theorem \ref{thm-main-conn} to obtain formulas for the connected $n$-point functions
and compute the free energy.
Here for the special case $n=1$, one needs to use L'H\^{o}pital's rule:
\begin{equation*}
\begin{split}
\sum_{n>0:\text{ odd}}
\frac{\pd \log\tilde\tau(\bm t)}{\pd t_n}\bigg|_{\bm t=0}\cdot z^{-n}
=&\lim_{w\to -z} A^{\text{BKP}}(w,z)\\
=& \frac{1-\Phi_1(-z)\Phi_2'(z)+\Phi_1'(z)\Phi_2(-z)}{4}.
\end{split}
\end{equation*}

In the next section,
we will present two examples
of tau-functions of this type.

\section{Examples: Witten-Kontsevich and BGW Tau-Functions}
\label{sec-WK-BGW}

Hypergeometric tau-functions (see Orlov \cite{or})
provide a large family of BKP tau-functions,
and they are known to be related to the spin Hurwitz numbers \cite{mmn}.
A hypergeometric tau-function is of the form:
\be
\tau_{\theta, \bm t^*} (\bm t/2)
=\sum_{\lambda\in DP}
2^{-l(\lambda)} \theta_\lambda
Q_\lambda (\bm t^*/2) Q_\lambda(\bm t/2),
\ee
where $\theta = \prod_{j=1}^l \prod_{k=1}^{\lambda_j} \theta(k)$ and $\theta(k)$
is a function on the set of positive integers,
and $\bm t^* = (t_1^*,t_3^*,t_5^*,\cdots)$.
The BKP-affine coordinates of $\tau_{\theta, \bm t^*} (\bm t/2)$
are:
\be
a_{0,n} = \frac{\theta_{(n)}}{2} Q_{(n)} (\bm t^* /2),\qquad
a_{n,m} = \frac{\theta_{(m,n)}}{4} Q_{(m,n)} (\bm t^* /2),
\ee
where $m,n>0$.
One can use Theorem \ref{thm-main-conn} to compute $\log\tau_{\theta, \bm t^*} (\bm t/2)$.

In what follows,
we will discuss two examples of hypergeometric tau-functions --
the Witten-Kontsevich tau-function and the BGW tau-function.
Using the recent results \cite{ly1,ly2} of X. Liu and the second author,
we obtain automatically the explicit formulas for their affine coordinates.
Moreover,
we can write down simple expressions for the generating series
using Proposition \ref{prop-phi12}.

\subsection{Affine coordinates of Witten-Kontsevich tau-function}

The famous Witten Conjecture/Kontsevich Theorem \cite{wit, kon} claims that
the generating series of intersection numbers of
$\psi$-classes on the Deligne-Mumford moduli space $\Mbar_{g,n}$ of stable curves \cite{dm, kn}
is a tau-function $\tau_{\text{WK}}$ of the KdV hierarchy.
Then
\be
\tilde\tau_{\text{WK}} (\bm t) := \tau_{\text{WK}} (\bm t/2)
\ee
is automatically a tau-function of the BKP hierarchy
due to \cite{al1}.
The following Schur Q-expansion formula was proposed by Mironov-Morozov \cite{mm}
(see also \cite{al2},
and see \cite{ly1} for a proof by Virasoro constraints):
\be
\tau_{\text{WK}} (\bm t)=
\sum_{\lambda\in DP} \Big(\frac{\hbar}{16} \Big)^{|\lambda|/3} \cdot
2^{-l(\lambda)} \frac{Q_\lambda(\delta_{k,1})Q_{2\lambda}(\delta_{k,3}/3)}
{Q_{2\lambda}(\delta_{k,1})}Q_\lambda(\bm t),
\ee
where $Q_\lambda(\delta_{k,1}$) means evaluating at the time $t_k = \delta_{k,1}$ for every $k$
in the Schur Q-function $Q_\lambda$.
This formula is inspired by the work \cite{diz},
see also \cite{jo}.

Then one is able to read off the BKP-affine coordinates for $\tilde\tau_{\text{WK}}$
by observing the coefficients of $Q_\lambda$ with $l(\lambda)\leq 2$.
For simplicity we take $\hbar=1$.
The evaluation $Q_\lambda(\delta_{k,1})$ is given by the hook-length type formula
(see e.g. \cite[(56)]{mm}):
\be
\label{eq-hooklength}
Q_\lambda(\delta_{k,1}) = \frac{2^{|\lambda|}}{\prod_{i=1}^{l(\lambda)} \lambda_i !}
\cdot \prod_{i<j} \frac{\lambda_i-\lambda_j}{\lambda_i+\lambda_j},
\ee
and $Q_\lambda(\delta_{k,3}/3)$ for $|\lambda|\leq 2$ is given by (see \cite[Theorem 3.1]{ly1}):
\begin{equation*}
\begin{split}
&Q_{(3m,3n)}(\delta_{k,3}/3) = \Big(\frac{2}{3}\Big)^{m+n} \cdot \frac{(m-n)}{(m+n)\cdot m!n!},
\qquad m>n\geq 0;\\
&Q_{(3m+1,3n+2)}(\delta_{k,3}/3) =
\Big(\frac{2}{3}\Big)^{m+n+1} \cdot \frac{2}{(m+n+1)\cdot m!n!},
\qquad m>n\geq 0;\\
&Q_{(3m+2,3n+1)}(\delta_{k,3}/3) =
-\Big(\frac{2}{3}\Big)^{m+n+1} \cdot \frac{2}{(m+n+1)\cdot m!n!},
\qquad m\geq n\geq 0;\\
\end{split}
\end{equation*}
and $Q_\lambda(\delta_{k,3}/3)=0$ if $|\lambda|\not\equiv 0 (\text{mod }3)$.
Thus the affine coordinates of the BKP tau-function $\tilde\tau_{\text{WK}}(\bm t)$ are given by:
\be
a_{0,3m}^{\text{WK}} = -a_{3m,0}^{\text{WK}}
=\frac{2\cdot(6m-1)!!}{4^{m+1}9^{m}\cdot (2m)!},
\qquad m>0,
\ee
and
\be
\begin{split}
&a_{3n,3m}^{\text{WK}} = \frac{(m-n)(6m-1)!!(6n-1)!!}{4^{m+n+1}9^{m+n}(m+n)(2m)!(2n)!},
\qquad n,m> 0;\\
&a_{3n+2,3m+1}^{\text{WK}} = -\frac{(6m+1)!!(6n+3)!!}{4^{m+n+2} 9^{m+n+1}(m+n+1)(2m)!(2n+1)!},
\qquad n, m\geq 0;\\
&a_{3n+1,3m+2}^{\text{WK}} = \frac{(6n+1)!!(6m+3)!!}{4^{m+n+2} 9^{m+n+1}(m+n+1)(2n)!(2m+1)!},
\qquad n, m\geq 0,
\end{split}
\ee
and
\be
a_{m,n}^{\text{WK}} =0,\qquad \text{if $m+n\not\equiv 0 (\text{mod }3)$}.
\ee
Here we use the conventions $(-1)!!=1$ and $0!=1$.
Let $A^{\text{WK}}(w,z)$ be the generating series of $\{a_{n,m}^{\text{WK}}\}$:
\begin{equation*}
A^{\text{WK}}(w,z)= \sum_{n,m>0} (-1)^{m+n+1} a_{n,m}^{\text{WK}} w^{-n} z^{-m}
-\half \sum_{n>0}(-1)^n a_{n,0}^{\text{WK}} (w^{-n}-z^{-n}).
\end{equation*}
The following are first a few terms of $A^{\text{WK}}(w,z)$:
\begin{equation*}
\begin{split}
&A^{\text{WK}}(w,z)=
\frac{5}{96} z^{-3} + \frac{1}{48} z^{-2} w^{-1}
- \frac{1}{48} z^{-1} w^{-2} - \frac{5}{96} w^{-3}
-\frac{385}{4608} z^{-6}\\
&\qquad -\frac{35}{2304} z^{-5} w^{-1}
+\frac{35}{2304} z^{-4} w^{-2}
-\frac{35}{2304} z^{-2} w^{-4}
+\frac{35}{2304} z^{-1} w^{-5}\\
&\qquad  +\frac{385}{4608} w^{-6}
+\frac{85085}{331776 }z^{-9}
+\frac{5005}{165888 }z^{-8}w^{-1}
-\frac{5005}{165888 }z^{-7}w^{-2}\\
&\qquad +\frac{1925}{331776 }z^{-6}w^{-3}
+\frac{1225}{82944 }z^{-5}w^{-4}
-\frac{1225}{82944 }z^{-4}w^{-5}
-\frac{1925}{331776 }z^{-3}w^{-6}\\
&\qquad +\frac{5005}{165888 }z^{-2}w^{-7}
-\frac{5005}{165888 }z^{-1}w^{-8}
-\frac{85085}{331776}w^{-9}
+\cdots.
\end{split}
\end{equation*}

\subsection{A formula for the generating series $A^{\text{WK}}(w,z)$}

It is not easy to find a simple formula for the generating series $A^{\text{WK}}(w,z)$
directly using the above expressions of $\{a_{n,m}^{\text{WK}}\}$.
However,
one can do this using Theorem \ref{thm-KPBKP}.
The explicit formulas of
the KP-affine coordinates $\{a_{n,m}^{\text{Zhou}}\}_{n,m\geq 0}$ of $\tau_{\text{WK}}$
was given in \cite{zhou3},
and their generating series are (\cite[(282)]{zhou1}):
\be
\label{eq-KP-WK-gen}
\sum_{n,m\geq 0}a_{n,m}^{\text{Zhou}} x^{-n-1}y^{-m-1} =
\frac{1}{y-x} + \frac{a(y)b(-x)-a(-x)b(y)}{y^2-x^2},
\ee
where $a(z),b(z)$ are the Faber-Zagier series \cite{ppz}:
\be
a(z)= \sum_{m=0}^\infty \frac{(6m-1)!!}{36^m\cdot (2m)!}  z^{-3m},\quad
b(z)= -\sum_{m=0}^\infty \frac{(6m-1)!!}{36^m \cdot (2m)!}\cdot \frac{6m+1}{6m-1}  z^{-3m+1}.
\ee
They are the first two basis vectors
of the point in Sato-Grassmannian corresponding to $\tau_{\text{WK}}$.
Thus by Theorem \ref{thm-KPBKP} one has:
\begin{Proposition}
The generating series of the BKP-affine coordinates $a_{m,n}^{\text{WK}}$ of $\tilde\tau_{\text{WK}} (\bm t)$ are given by:
\be
\label{eq-amngen-WK}
\begin{split}
&A^{\text{WK}}(w,z)
= \frac{w-z+a(-z)b(-w)-a(-w)b(-z)}{4(w+z)},\\
&\wA^{\text{WK}}(w,z)
= \frac{a(-z)b(-w)-a(-w)b(-z)}{4(w+z)}.
\end{split}
\ee
\end{Proposition}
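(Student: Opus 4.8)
The plan is to obtain the formula by feeding the known generating series of the KP-affine coordinates of $\tau_{\text{WK}}$ into the KP/BKP comparison of Theorem~\ref{thm-KPBKP}; no genuinely new computation is required.

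First I would note that $\tau_{\text{WK}}$ is a tau-function of the KdV hierarchy with $\tau_{\text{WK}}(0)=1$, so Theorem~\ref{thm-KPBKP} applies to $\tilde\tau_{\text{WK}}(\bm t)=\tau_{\text{WK}}(\bm t/2)$ and gives
\[
A^{\text{WK}}(w,z)=A^{\text{BKP}}(w,z)=-\frac{w-z}{4}\cdot A^{\text{KP}}(w,-z),
\]
where $A^{\text{KP}}(x,y)=\sum_{n,m\geq 0}a_{n,m}^{\text{KP}}x^{-n-1}y^{-m-1}$ is the generating series of the KP-affine coordinates of $\tau_{\text{WK}}$. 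These are exactly the coefficients $\{a_{n,m}^{\text{Zhou}}\}$ of \cite{zhou3}, whose generating series is \eqref{eq-KP-WK-gen}; thus $A^{\text{KP}}(x,y)=\frac{1}{y-x}+\frac{a(y)b(-x)-a(-x)b(y)}{y^2-x^2}$, read as a formal Laurent series in $x^{-1},y^{-1}$ (all rational expressions expanded in the region appropriate to the $i_{w,z}$-convention of \eqref{eq-def-Ahat}).

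Next I would substitute $(x,y)=(w,-z)$, which yields $A^{\text{KP}}(w,-z)=-\frac{1}{z+w}+\frac{a(-z)b(-w)-a(-w)b(-z)}{z^2-w^2}$, and then multiply by $-\frac{w-z}{4}$. Using $z^2-w^2=-(w-z)(w+z)$, the second term collapses to $\frac{a(-z)b(-w)-a(-w)b(-z)}{4(w+z)}$ and the first term becomes $\frac{w-z}{4(w+z)}$; summing them gives
\[
A^{\text{WK}}(w,z)=\frac{w-z+a(-z)b(-w)-a(-w)b(-z)}{4(w+z)},
\]
which is the first asserted identity. For the second, I would invoke the definition \eqref{eq-def-Ahat}, namely $\wA^{\text{WK}}(w,z)=A^{\text{WK}}(w,z)-i_{w,z}\frac{w-z}{4(w+z)}$; in the chosen expansion the two copies of $\frac{w-z}{4(w+z)}$ cancel, leaving $\wA^{\text{WK}}(w,z)=\frac{a(-z)b(-w)-a(-w)b(-z)}{4(w+z)}$.

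The algebra is essentially a single line, so the only delicate point --- and the \emph{main (mild) obstacle} --- is to make sure the three inputs \eqref{eq-KP-WK-gen}, Theorem~\ref{thm-KPBKP}, and \eqref{eq-def-Ahat} are all interpreted with a consistent formal-expansion convention, so that both the substitution $y\mapsto -z$ and the final cancellation of $\frac{w-z}{4(w+z)}$ are legitimate identities of formal power series rather than merely of rational functions. As a sanity check, one observes that $a(z)=\Phi_1^{\text{WK}}(z)$ and $b(z)=\Phi_2^{\text{WK}}(z)$ are the first two basis vectors of the point of the Sato-Grassmannian attached to $\tau_{\text{WK}}$, so the outcome is also consistent with Proposition~\ref{prop-phi12} (which would furnish an alternative proof once one checks the condition $\det G(z)=1$ for $\tau_{\text{WK}}$).
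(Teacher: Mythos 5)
Your proposal is correct and is essentially the paper's own argument: the paper likewise quotes Zhou's generating series \eqref{eq-KP-WK-gen} for the KP-affine coordinates of $\tau_{\text{WK}}$ and deduces \eqref{eq-amngen-WK} directly from Theorem \ref{thm-KPBKP}, with the same one-line algebra $z^2-w^2=-(w-z)(w+z)$ and the cancellation of $i_{w,z}\frac{w-z}{4(w+z)}$ in passing to $\wA^{\text{WK}}$. Your remarks on the expansion conventions and the consistency check against Proposition \ref{prop-phi12} are sound but not needed beyond what the paper already does.
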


Now one can plug $A^{\text{WK}}(w,z)$ and $\wA^{\text{WK}}(w,z)$ into
Theorem \ref{thm-main-conn} to obtain numerical data and
formulas for the connected $n$-point functions.
The following are first a few terms of the free energy:
\begin{equation*}
\begin{split}
&\log\tilde\tau_{\text{WK}} (\bm t) =
\big(\frac{t_3}{16} +\frac{105}{256}t_9 + \frac{25025}{2048} t_{15}
+\frac{56581525}{65536}t_{21} + \frac{58561878375}{524288}t_{27} +\cdots\big)\\
&\quad
+\big(\frac{5}{32}t_1t_5 +\frac{3}{64}t_3^2 +\frac{1155}{512}t_1t_{11}
+\frac{945}{512}t_3t_9 +\frac{1015}{512}t_5t_7 + \frac{425425}{4096}t_1t_{17}
+\cdots\big)\\
&\quad
+\big(\frac{t_1^3}{48} +\frac{35}{128}t_1^2t_7 +\frac{15}{32}t_1t_3t_5 +\frac{3}{64} t_3^3
+\frac{15015}{2048}t_1^2t_{13} +\frac{3465}{256}t_1t_3t_{11}+\cdots \big) +\cdots.
\end{split}
\end{equation*}
The original free energy of the Witten-Kontsevich tau-function
is recovered from $\log\tilde\tau_{\text{WK}}(\bm t)$
by a rescaling $t_i \mapsto 2t_i$ for every $i$.

\subsection{Affine coordinates of Br\'ezin-Gross-Witten tau-function}

The Br\'ezin-Gross-Witten (BGW) tau-function $\tau_{\text{BGW}}(\bm t)$ was introduced in
the study of lattice gauge theory \cite{bg, gw},
and it conjecturally describes the intersection numbers of certain classes on $\Mbar_{g,n}$,
see Norbury \cite{no}.
It is known that $\tau_{\text{BGW}}(\bm t)$ is a tau-function of the KdV hierarchy \cite{mms},
thus
\be
\tilde\tau_{\text{BGW}}(\bm t):= \tau_{\text{BGW}}(\bm t/2)
\ee
is a tau-function of the BKP hierarchy.
The following Schur Q-function expansion was conjectured in \cite{al2} and proved in \cite{ly2, al3}
by two different methods:
\be
\tau_{\text{BGW}} (\bm t)=
\sum_{\lambda\in DP} \Big(\frac{\hbar}{16} \Big)^{|\lambda|} \cdot
2^{-l(\lambda)} \frac{Q_\lambda(\delta_{k,1})^3}
{Q_{2\lambda}(\delta_{k,1})^2}Q_\lambda(\bm t).
\ee
For simplicity we take $\hbar=1$.
Using \eqref{eq-hooklength},
one may find that:
\be
\begin{split}
& a_{0,n}^{\text{BGW}} = - a_{n,0}^{\text{BGW}}
= \frac{\big((2n-1)!! \big)^2}{2^{3n+1}\cdot n!},
\qquad n>0;\\
& a_{n,m}^{\text{BGW}} = -a_{m,n}^{\text{BGW}}
= \frac{(m-n)\cdot \big((2m-1)!!(2n-1)!!\big)^2}{2^{3(m+n)+2}\cdot (m+n)\cdot m!n!},
\qquad m,n>0.
\end{split}
\ee
Now let
\begin{equation*}
\begin{split}
&A^{\text{BGW}}(w,z)= \sum_{n,m>0} (-1)^{m+n+1} a_{n,m}^{\text{BGW}} w^{-n} z^{-m}
-\half \sum_{n>0}(-1)^n a_{n,0}^{\text{BGW}} (w^{-n}-z^{-n}),\\
&\wA^{\text{BGW}}(w,z) = A^{\text{BGW}}(w,z)- \frac{w-z}{4(w+z)},
\end{split}
\end{equation*}
The following are first a few terms of $A^{\text{BGW}}(w,z)$:
\begin{equation*}
\begin{split}
&A^{\text{BGW}}(w,z)=
-\frac{1}{32} w^{-1} +\frac{1}{32} z^{-1}
+ \frac{9}{512} w^{-2} -\frac{9}{512} z^{-2}
-\frac{75}{4096} w^{-3}\\
&\quad - \frac{3}{4096} w^{-2}z^{-1}
+ \frac{3}{4096} w^{-1}z^{-2} +\frac{75}{4096} z^{-3}
+ \frac{3675}{131072} w^{-4} + \frac{75}{65536} w^{-3}z^{-1}\\
&\quad
- \frac{75}{65536} w^{-1}z^{-3} - \frac{3675}{131072} z^{-4}
- \frac{ 59535}{1048576} w^{-5} - \frac{2205}{1048576} w^{-4}z^{-1}\\
&\quad
-\frac{135}{524288} w^{-3} z^{-2} + \frac{135}{524288} w^{-2} z^{-3}
+\frac{2205}{1048576} w^{-1}z^{-4} + \frac{ 59535}{1048576} z^{-5}
+\cdots.
\end{split}
\end{equation*}

\subsection{A formula for the generating series $A^{\text{BGW}}(w,z)$}

In the case of the BGW tau-function,
we can also find a simple formula for the generating series
following the discussions in \S \ref{sec-phi12}.
The first two basis vectors of the point associated to the tau-function
$\tau_{\text{BGW}}$ in the Sato-Grassmannian are
(see Alexandrov \cite{al4}):
\be
\label{eq-al-bgwbasis}
\begin{split}
&\Phi_1^{\text{BGW}} (z) =1+\sum_{k=1}^\infty \frac{\big((2k-1)!!\big)^2}{8^k\cdot k!} z^{-k},\\
& \Phi_2^{\text{BGW}} (z) = z- \sum_{k= 0}^\infty \frac{(2k-1)!!(2k+3)!!}{8^{k+1}\cdot (k+1)!}z^{-k}.
\end{split}
\ee
(Notice here $\Phi_1^{\text{BGW}}(z),\Phi_2^{\text{BGW}}(z)$ differ from those in \cite{al4}
by a rescaling $z\mapsto 2z$,
since we're picking the notations in \cite{zhou1} which is slightly different from that in \cite{al4}.)
These two vectors are related by a Kac-Schwarz operator \cite{al4}:
\be
\label{eq-KS-bgw}
\Phi_2(z) = (z\frac{\pd}{\pd z} +z -\half) \Phi_1(z).
\ee
Denote:
\begin{equation*}
a_k = \frac{\big((2k-1)!!\big)^2}{8^k\cdot k!},\qquad
b_k = -\frac{(2k-3)!!(2k+1)!!}{8^{k}\cdot k!},\qquad
k\geq 1,
\end{equation*}
and denote:
\begin{equation*}
G^{\text{BGW}}(z)= \left[
\begin{array}{cc}
1+\sum_{n\geq 1} a_{2n}z^{-n} & \sum_{n\geq 0} b_{2n+1}z^{-n}\\
\sum_{n\geq 1}a_{2n-1} z^{-n} & 1+\sum_{n\geq 1} b_{2n} z^{-n}
\end{array}
\right].
\end{equation*}

\begin{Lemma}
We have $\det G^{\text{BGW}} (z) =1$.
\end{Lemma}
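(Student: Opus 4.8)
The plan is to reduce $\det G^{\text{BGW}}(z)=1$ to a Wronskian-type identity for the two basis vectors $\Phi_1:=\Phi_1^{\text{BGW}}$ and $\Phi_2:=\Phi_2^{\text{BGW}}$, and then to prove that identity. First I would observe that the four entries of $G^{\text{BGW}}(z)$ are exactly the even and odd parts (in powers of $z$) of $\Phi_1(z)$ and of $z^{-1}\Phi_2(z)$, so that
\be
\big(\Phi_1(z),\ \Phi_2(z)\big)=\big(1,\ z\big)\cdot G^{\text{BGW}}(z^2).
\ee
Substituting $z\mapsto -z$ and stacking the resulting rows gives
\be
\begin{pmatrix}\Phi_1(z)&\Phi_2(z)\\ \Phi_1(-z)&\Phi_2(-z)\end{pmatrix}
=\begin{pmatrix}1&z\\ 1&-z\end{pmatrix}G^{\text{BGW}}(z^2),
\ee
and taking determinants yields $\Phi_1(z)\Phi_2(-z)-\Phi_1(-z)\Phi_2(z)=-2z\cdot\det G^{\text{BGW}}(z^2)$. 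Hence it suffices to prove
\be
\label{eq-plan-wron}
\Phi_1(z)\Phi_2(-z)-\Phi_1(-z)\Phi_2(z)=-2z .
\ee

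Next I would prove \eqref{eq-plan-wron} using the Kac-Schwarz relation \eqref{eq-KS-bgw}. Writing $\Phi_2=(z\pd_z+z-\half)\Phi_1$ and substituting at $z$ and at $-z$ (keeping in mind that $\Phi_1$ is differentiated in its own argument, so $\tfrac{d}{dz}\Phi_1'(-z)=-\Phi_1''(-z)$), a short cancellation collapses the left-hand side of \eqref{eq-plan-wron} to $-z\big(S(z)+2P(z)\big)$, where $P(z):=\Phi_1(z)\Phi_1(-z)$ and $S(z):=\Phi_1(z)\Phi_1'(-z)+\Phi_1'(z)\Phi_1(-z)$. Thus \eqref{eq-plan-wron} becomes $S(z)+2P(z)\equiv 2$. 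Differentiating gives
\be
\frac{d}{dz}\big(S(z)+2P(z)\big)=\big(\Phi_1''+2\Phi_1'\big)(z)\cdot\Phi_1(-z)-\Phi_1(z)\cdot\big(\Phi_1''+2\Phi_1'\big)(-z).
\ee
Now I would use that $\Phi_1$ is annihilated by the second-order operator $\pd_z^2+2\pd_z+\tfrac{1}{4z^2}$, i.e. $\Phi_1''+2\Phi_1'=-\tfrac{1}{4z^2}\Phi_1$; this ODE is equivalent to the three-term recursion $8(k+1)a_{k+1}=(2k+1)^2 a_k$ obeyed by the coefficients $a_k=\frac{((2k-1)!!)^2}{8^k k!}$ of $\Phi_1$, and it is also forced by the Kac-Schwarz description of the BGW point. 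Since $(-z)^2=z^2$, the same ODE holds at $-z$ with the same coefficient $\tfrac{1}{4z^2}$, so the displayed derivative equals $-\tfrac{1}{4z^2}\Phi_1(z)\Phi_1(-z)+\tfrac{1}{4z^2}\Phi_1(z)\Phi_1(-z)=0$. Hence $S(z)+2P(z)$ is a constant element of $\bC[\![z^{-1}]\!]$; evaluating as $z\to\infty$, where $\Phi_1\to 1$ and $\Phi_1'\to 0$, pins the constant to $2$. This proves \eqref{eq-plan-wron}, hence $\det G^{\text{BGW}}(z^2)=1$ and therefore $\det G^{\text{BGW}}(z)=1$.

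The parity bookkeeping in the first step and the manipulations with $P$ and $S$ are routine; the point that needs the most care is establishing the second-order ODE for $\Phi_1^{\text{BGW}}$ and handling the signs of derivatives evaluated at $-z$ correctly. An alternative and completely elementary route would be to expand $\det G^{\text{BGW}}(z)=G_{11}G_{22}-G_{12}G_{21}$ directly from the closed forms of $a_k$ and $b_k$ and verify that every coefficient of a negative power of $z$ vanishes, but the Wronskian argument above is both shorter and parallel to the structure underlying Zhou's treatment of the Witten-Kontsevich case in \cite{zhou1}.
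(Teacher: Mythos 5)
Your proof is correct and follows essentially the same route as the paper: reduce $\det G^{\text{BGW}}(z)=1$ to a Wronskian-type identity for $\Phi_1^{\text{BGW}},\Phi_2^{\text{BGW}}$ evaluated at $\pm x$ with $x^2=z$, use the Kac--Schwarz relation $\Phi_2=(x\pd_x+x-\tfrac12)\Phi_1$ together with the second-order equation $(\Phi_1^{\text{BGW}})''+2(\Phi_1^{\text{BGW}})'+\tfrac{1}{4x^2}\Phi_1^{\text{BGW}}=0$ to show the relevant combination has vanishing derivative, and then pin down the constant at $x=\infty$. In fact your antisymmetric identity $\Phi_1(x)\Phi_2(-x)-\Phi_1(-x)\Phi_2(x)=-2x$ is the correct target (and your bookkeeping of the $1/x$ factors in the even/odd decomposition of $G^{\text{BGW}}$ is right), whereas the symmetric identity $\Phi_1(x)\Phi_2(-x)+\Phi_1(-x)\Phi_2(x)=2x$ displayed in the paper's proof contains a sign/transposition typo, since its left-hand side is an even function of $x$ and therefore cannot equal $2x$.
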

\begin{proof}
It is clear that:
\begin{equation*}
G^{\text{BGW}}(z)= \left[
\begin{array}{cc}
(\Phi_1^{\text{BGW}}(x)+\Phi_1^{\text{BGW}}(-x)\big)/2 &
(\Phi_2^{\text{BGW}}(x)-\Phi_2^{\text{BGW}}(-x)\big)/(2x) \\
(\Phi_1^{\text{BGW}}(x)-\Phi_1^{\text{BGW}}(-x)\big)/2 &
(\Phi_2^{\text{BGW}}(x)+\Phi_2^{\text{BGW}}(-x)\big)/(2x)
\end{array}
\right]
\end{equation*}
where $x:=z^\half$, thus we only need to check the following identity:
\be
\Phi_1^{\text{BGW}}(x) \Phi_2^{\text{BGW}}(-x)+\Phi_1^{\text{BGW}}(-x) \Phi_2^{\text{BGW}}(x) =2x.
\ee
Denote
\begin{equation*}
\Psi(x):= \frac{1}{2x} \big(
\Phi_1^{\text{BGW}}(x) \Phi_2^{\text{BGW}}(-x)+\Phi_1^{\text{BGW}}(-x) \Phi_2^{\text{BGW}}(x)
\big),
\end{equation*}
then by \eqref{eq-KS-bgw} one can compute:
\begin{equation*}
\begin{split}
\frac{d}{dx} \Psi(x) =&
\half \Phi_1(-x)^{\text{BGW}}
\Big( ( \Phi_1^{\text{BGW}})'' (x)
+2 (\Phi_1^{\text{BGW}}) '(x)\Big)\\
& -  \half \Phi_1(x)^{\text{BGW}}
\Big( ( \Phi_1^{\text{BGW}})'' (-x)
+2 (\Phi_1^{\text{BGW}}) '(-x)\Big).
\end{split}
\end{equation*}
Using the explicit expressions \eqref{eq-al-bgwbasis} one can may directly check that
\begin{equation*}
( \Phi_1^{\text{BGW}})'' (x) +\frac{1}{4x^2} \Phi_1^{\text{BGW}}(x)
+2 ( \Phi_1^{\text{BGW}})' (x) =0,
\end{equation*}
then $\frac{d}{dx}\Psi(x) =0$.
Thus $\Psi(x)$ is a constant,
and one easily finds that it is $2$.
\end{proof}

Thus by Proposition \ref{prop-phi12} we know that:
\begin{Proposition}
The generating series of the BKP-affine coordinates $a_{m,n}^{\text{BGW}}$ of $\tilde\tau_{\text{BGW}} (\bm t)$ are given by:
\be
\begin{split}
&A^{\text{BGW}}(w,z)
= \frac{w-z+\Phi_1^{\text{BGW}}(-z)\Phi_2^{\text{BGW}}(-w)-\Phi_1^{\text{BGW}}(-w)\Phi_2^{\text{BGW}}(-z)}{4(w+z)},\\
&\wA^{\text{BGW}}(w,z)
= \frac{\Phi_1^{\text{BGW}}(-z)\Phi_2^{\text{BGW}}(-w)-\Phi_1^{\text{BGW}}(-w)\Phi_2^{\text{BGW}}(-z)}{4(w+z)},
\end{split}
\ee
where $\Phi_1^{\text{BGW}}(z),\Phi_2^{\text{BGW}}(z)$ are given by \eqref{eq-al-bgwbasis}.
\end{Proposition}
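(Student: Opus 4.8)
The plan is to obtain this Proposition as a direct application of Proposition \ref{prop-phi12} to the KdV tau-function $\tau=\tau_{\text{BGW}}$. First I would recall that $\tau_{\text{BGW}}(\bm t)$ is a tau-function of the KdV hierarchy by \cite{mms}, so it corresponds to a point $W$ in the big cell of the Sato--Grassmannian with $z^2W\subset W$; by Alexandrov \cite{al4} (after the rescaling $z\mapsto 2z$ that adjusts his normalization to the one used throughout \S\ref{sec-phi12}) the first two basis vectors of $W$ are precisely $\Phi_1^{\text{BGW}}(z)$ and $\Phi_2^{\text{BGW}}(z)$ of \eqref{eq-al-bgwbasis}. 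Reading off \eqref{eq-al-bgwbasis} one checks that $\Phi_1^{\text{BGW}}(z)=1+O(z^{-1})$ and $z^{-1}\Phi_2^{\text{BGW}}(z)=1+O(z^{-1})$, so these vectors have the normalization \eqref{eq-phi-12} required by Proposition \ref{prop-phi12}, and the matrix $G^{\text{BGW}}(z)$ defined above is exactly the matrix $G(z)$ of \eqref{eq-by-general} attached to $W$ (its entries being the even and odd parts in $x=z^{1/2}$ of $\Phi_1^{\text{BGW}}(x)$ and $x^{-1}\Phi_2^{\text{BGW}}(x)$, as displayed in the proof of the preceding Lemma).

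Next I would invoke the Lemma just proved, namely $\det G^{\text{BGW}}(z)=1$, which is exactly the hypothesis under which Proposition \ref{prop-phi12} applies. Substituting $\Phi_1=\Phi_1^{\text{BGW}}$ and $\Phi_2=\Phi_2^{\text{BGW}}$ into the two displayed identities of Proposition \ref{prop-phi12} then yields at once the claimed formulas for $A^{\text{BGW}}(w,z)$ and $\wA^{\text{BGW}}(w,z)$; the relation $\wA^{\text{BGW}}(w,z)=A^{\text{BGW}}(w,z)-\tfrac{w-z}{4(w+z)}$ is then automatic and agrees with \eqref{eq-def-Ahat}. As a sanity check I would expand the right-hand side of the formula for $A^{\text{BGW}}$ to low order in $w^{-1},z^{-1}$ and match it against the first few terms of $A^{\text{BGW}}(w,z)$ listed above (the coefficients $-\tfrac1{32}w^{-1}$, $\tfrac1{32}z^{-1}$, $\tfrac{9}{512}w^{-2}$, $-\tfrac{75}{4096}w^{-3}$, and so on), which simultaneously confirms the rescaling factor.

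Since the substantive inputs are already in hand --- Theorem \ref{thm-KPBKP}, the Balogh--Yang identity \eqref{eq-by-general}, and the ODE argument behind $\det G^{\text{BGW}}(z)=1$ --- there is essentially no obstacle remaining: the proof reduces to a single substitution. The only delicate point is the bookkeeping of conventions, i.e.\ verifying that the normalization, the $z\mapsto 2z$ rescaling, and the sign of $\Phi_2^{\text{BGW}}$ taken from \cite{al4} are exactly those under which \eqref{eq-by-general}, and hence Proposition \ref{prop-phi12}, were formulated, so that the identification of the Sato--Grassmannian point (and therefore of $\Phi_1,\Phi_2$) is unambiguous. Once that is settled the argument is immediate.
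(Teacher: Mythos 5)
Your proposal is correct and matches the paper's own argument: the paper likewise deduces the Proposition by invoking Proposition \ref{prop-phi12}, whose hypothesis $\det G(z)=1$ is supplied by the preceding Lemma, with $\Phi_1^{\text{BGW}},\Phi_2^{\text{BGW}}$ from \eqref{eq-al-bgwbasis} as the first two basis vectors of the corresponding Sato--Grassmannian point. The low-order expansion check and the attention to the $z\mapsto 2z$ rescaling are sensible additions but not part of the paper's (one-line) proof.
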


Then the connected $n$-point functions can be computed by Theorem \ref{thm-main-conn}.
The first a few terms of the free energy are:
\begin{equation*}
\begin{split}
&\log\tilde\tau_{\text{BGW}} (\bm t) =
\big(
\frac{t_1}{16} + \frac{9}{256}t_3 + \frac{225}{2048} t_5 + \frac{55125}{65536} t_7 + \frac{6251175}{524288} t_9
+\cdots\big)\\
&\qquad\quad
+\big( \frac{t_1^2}{64}+\frac{27}{512} t_1t_3 + \frac{1125}{4096}t_1t_5 + \frac{567}{4096} t_3^2
+ \frac{385875}{131072} t_1t_7
+\cdots\big)\\
&\qquad\quad
+\big( \frac{t_1^3}{192} +\frac{27}{512} t_1^2t_3 + \frac{3375}{8192} t_1^2t_5 + \frac{1701}{4096} t_1t_3^2
+\cdots \big) +\cdots.
\end{split}
\end{equation*}
And $\log\tau_{\text{BGW}}(\bm t)$ is obtained by a rescaling $t_i\mapsto 2t_i$.

\vspace{.2in}

{\em Acknowledgements}.
We thank the anonymous referees for suggestions.
Z.W. thanks Professor Jian Zhou for helpful discussions,
and Professor Huijun Fan for encouragement.
C.Y. thanks Professor Xiaobo Liu for patient guidance.

%\section{Conflict of Interest Statement and Data Availability Statement}
%On behalf of all authors, the corresponding author states that there is no conflict of interest.

%All data generated or analysed during this study are included in this published article.

\end{document}